\PassOptionsToPackage{noend}{algorithmic}

\documentclass[nohyperref]{article}

\usepackage{microtype}
\usepackage{graphicx}
\usepackage{booktabs} %

\usepackage{hyperref}
\hypersetup{
    colorlinks,
    linkcolor={red!50!black},
    citecolor={blue!50!black},
    urlcolor={blue!80!black}
}

\hypersetup{
    pdftitle = {Welfare-Optimal Classification with Accuracy Auctions}
}

\usepackage[accepted]{icml2026}

\usepackage{amsmath}
\usepackage{amssymb}
\usepackage{mathtools}
\usepackage{amsthm}

\usepackage[capitalize,noabbrev]{cleveref}

\usepackage{bbm, dsfont, bm}
\usepackage{subcaption}
\usepackage{tikz}
\usepackage{booktabs,rotating,multirow}
\usepackage{adjustbox}
\usepackage{enumitem}
\usepackage[hang,flushmargin]{footmisc}
\usepackage[T1]{fontenc}
\usepackage[scaled=0.85]{sourcecodepro}
\usepackage[leftcaption]{sidecap}
\sidecaptionvpos{figure}{t}
\usepackage{relsize}
\usepackage{bigstrut}

\theoremstyle{plain}
\newtheorem{theorem}{Theorem} %
\newtheorem{proposition}{Proposition} %
\newtheorem{lemma}{Lemma}
\newtheorem{corollary}{Corollary} %
\theoremstyle{definition}
\newtheorem{definition}{Definition}

\theoremstyle{remark}
\newtheorem*{remark}{Remark}

\renewcommand{\paragraph}[1]{\textbf{#1}\,\,}
\newcommand{\squeeze}{\looseness=-1}

\newcommand{\algocmnt}[1]{\hfill \darkgray{$\triangleright$ \emph{#1}}}

\newcommand{\blue}[1]{#1} 
\newcommand{\darkgray}[1]{{\leavevmode\color[RGB]{120,120,120}{#1}}}

\newcommand{\naive}{na\"{\i}ve}

\newcommand\expect[2]{\mathbbm{E}_{#1}{\left[ {#2} \right]}}
\newcommand\prob[2]{\mathbbm{P}_{#1}{\left[ {#2} \right]}}

\newcommand{\one}[1]{\mathds{1}{\{{#1}\}}}

\DeclareMathOperator*{\sign}{sign}
\DeclareMathOperator*{\argmax}{argmax}
\DeclareMathOperator*{\argmin}{argmin}

\renewcommand*{\d}{\mathop{}\!\mathrm{d}}

\newcommand{\defeq}{\vcentcolon=}

\newcommand{\X}{{\cal{X}}}
\newcommand{\x}{{\bm{x}}}

\newcommand{\R}{\mathbb{R}}

\newcommand{\N}{{\cal{N}}}
\renewcommand{\O}{{\cal{O}}}
\newcommand{\A}{{\cal{A}}}

\renewcommand{\a}{{\bm{a}}}

\renewcommand{\b}{{\bm{b}}}
\newcommand{\p}{{\bm{p}}}

\renewcommand{\v}{\bm{v}}
\newcommand{\w}{\bm{w}}
\newcommand{\yhat}{{\hat{y}}}

\newcommand{\hhat}{{\hat{h}}}
\newcommand{\fhat}{{\hat{f}}}

\newcommand{\mtilde}{{\widetilde{m}}}

\newcommand{\bmu}{\bm{\mu}}

\newcommand{\dist}{{D}}
\newcommand{\smplst}{{S}}
\newcommand{\loss}{{\ell}}
\newcommand{\Loss}{{L}}
\newcommand{\reg}{{R}}
\newcommand{\vmax}{V}
\newcommand{\util}{{u}} %
\newcommand{\welf}{W}
\newcommand{\alloc}{a}
\newcommand{\allocs}{\a}
\newcommand{\allocset}{\A}
\newcommand{\payment}{p}
\newcommand{\payments}{\p}
\newcommand{\bids}{\b}
\newcommand{\bcrit}{b^c}
\newcommand{\wmax}{w_{\mathrm{max}}}

\newcommand{\mpay}{m_{\mathrm{pay}}}
\newcommand{\mcnddt}{\mtilde} %
\newcommand{\tol}{\tau}
\newcommand{\svmobj}{\O_{\mathrm{SVM}}}
\newcommand{\zlo}{\underline{z}}
\newcommand{\zhi}{\bar{z}}
\newcommand{\rev}{\mathrm{pay}} %

\newcommand{\dataset}[1]{{\texttt{#1}}}

\newcommand{\folktables}{\dataset{folktables}}

\newcommand{\Features}{\mathcal{X}}
\newcommand{\Labels}{\mathcal{Y}}
\newcommand{\Valuations}{\mathcal{V}}
\newcommand{\Hypotheses}{\mathcal{H}}
\newcommand{\JointSpace}{\mathcal{Z}}
\newcommand{\Learner}{\mathcal{A}}
\newcommand{\Reals}{\mathbb{R}}
\newcommand{\Scores}{\mathcal{F}}
\newcommand{\Regularizer}{ R}
\newcommand{\DistOver}[1]{\Delta\left({#1}\right)}
\newcommand{\Set}[1]{\left\{{#1}\right\}}
\newcommand{\Indicator}[1]{\mathbb{I}\left[{#1}\right]}
\newcommand{\Abs}[1]{\left|{#1}\right|}

\newcommand{\Norm}[1]{\left\lVert{#1}\right\rVert}
\newcommand{\InnerProd}[2]{\left\langle {#1} , {#2} \right\rangle}
\newcommand{\FullDerivative}[2]{\frac{\mathrm d {#1}}{\mathrm d {#2}}}
\newcommand{\minusi}{{\setminus i}}

\newcommand{\knn}{$k$-NN}
\newcommand{\datasize}{{m}}

\begin{document}

\twocolumn[
\icmltitle{Welfare-Optimal Classification with Accuracy Auctions}

\icmlsetsymbol{equal}{*}

\begin{icmlauthorlist}
\icmlauthor{Bana Sadi}{equal,technion}
\icmlauthor{Eden Saig}{equal,technion,caltech}
\icmlauthor{Nir Rosenfeld}{technion}
\end{icmlauthorlist}

\icmlaffiliation{technion}{Department of Computer Science, Technion, Haifa, Israel}
\icmlaffiliation{caltech}{Department of Computing and Mathematical Sciences, Caltech, Pasadena, USA}

\icmlcorrespondingauthor{Bana Sadi}{sadi.bana@campus.technion.ac.il}
\icmlcorrespondingauthor{Eden Saig}{edens@caltech.edu}
\icmlcorrespondingauthor{Nir Rosenfeld}{nirr@technion.ac.il}

\icmlkeywords{Machine Learning, ICML}

\vskip 0.3in
]

\printAffiliationsAndNotice{\icmlEqualContribution}

\begin{abstract}

Prediction algorithms are increasingly used to inform decisions about humans,
but maximizing accuracy---the standard learning objective---does not necessarily maximize user benefits.
Instead, we propose optimizing social welfare,
defined as the average gain users receive from correct predictions.
Welfare enables to express, and therefore account for,
heterogeneity in how much users benefit 
from accuracy.
But since these valuations are private
and users can gain from overreporting them,
learning must simultaneously elicit truthful values and
optimize welfare with respect to them.
To this end,
we propose a novel learning algorithm that incorporates
a truthful auction.
We show how to compute allocations and prices efficiently,
and bound the number of paying users---%
which surprisingly is independent of the sample size.
We conclude with experiments on real and synthetic data 
that demonstrate our algorithm
and explore the connections between welfare and accuracy.

\end{abstract}

\section{Introduction} \label{sec:intro}

The social domain has much to gain
from the broad capacity of machine learning to generate
accurate predictions effectively and at scale.
Often, the direct beneficiaries of accuracy
are the individuals who are the object of prediction;
consider diverse examples such as
online content recommendation,
career planning advice,
asset price assessment,
educational %
program assignment,
and medical condition diagnosis.
This makes the goal of maximizing expected accuracy---%
the standard objective of supervised learning---%
a sensible target.
But while maximizing expected accuracy serves to increase
the number of correct predictions,
it remains silent about \emph{which} individuals will receive them---and which will not.
In this sense, the learning objective is underspecified with regards to the distribution of errors across users in the population.

How then should we determine where to focus accuracy efforts?
One answer, promoted by the literature on fairness in learning,
is to strive for equity, %
for example across social groups \citep{dwork2012fairness}
or between individuals with similar traits
\citep{zemel2013learning}.
While suitable for some settings,
one drawback is that imposing equity (e.g., via fairness constraints)
often reduces overall predictive performance \citep{kleinberg2016inherent,menon2018cost},
and consequently, the social benefit of learning.
Another is that by focusing on equity, fairness
fails to capture the idea that individuals can differ in how much they stand to gain from correct predictions, or lose from incorrect ones.

As an alternative,
this work advocates designing learning algorithms that 
maximize the expected benefit of users from accurate predictions.
Adopting an economic perspective,
and echoing similar ideas expressed recently
\citep{heidari2018fairness,hu2020fair,rosenfeld2025machine,jordan2025collectivist},
we treat accuracy as an explicit \emph{limited resource}
that necessitates deliberate allocation.
Our key modeling assumption is that users can differ
in how much value they derive from correct predictions.
Given these individual values,
the learning objective is to find a classifier that maximizes \emph{expected social welfare},
defined as the population-average benefit from accurate predictions.
This formalizes accuracy allocation as a realizable, well-defined objective for maximizing social benefit through learning.
We refer to this new task as \emph{welfare-optimal classification}.

The main conceptual challenge in maximizing welfare through classification is that values are inherently private to users,
and so cannot be observed by the system.
Since accuracy is a contested resource,
users are prone to overreporting their values,
in hopes that this steers learning outcomes in their favor.
Thus, a system interested in maximizing welfare cannot simply treat values as additional observed variables (i.e., as ``just another feature'').
Rather, it must contend with this informational gap by
simultaneously eliciting users' true values, and optimizing for expected social welfare under them.
Neglecting to do so, e.g., by using conventional learning approaches,
amounts to assuming implicitly that values are uniform,
which leads to suboptimal allocations.

Our main contribution is a learning approach which tackles the above dual challenges.
The core idea is to design a learning objective that also serves as
a mechanism for incentivizing truthful reporting.
Our approach works in two stages:
first running an auction for eliciting values, and then training a classifier to maximize the corresponding social welfare objective.
Since the items allocated are correct predictions,
we refer to the former as an `accuracy auction'.
The two stages are linked in that the auction's objective derives from the learning task,
and the set of feasible allocations are restricted to those expressible by classifiers in the chosen model class.
This presents challenges---%
but also provides structure, which we exploit for devising efficient schemes for computing allocations and payments.

The information gap due to private user information
means that optimizing welfare entails a cost
necessary for aligning incentives.
In auctions, this takes the form of individual payments collected from (some) users.
Our approach restricts the extent and potential burden of payments in two ways.
First,
payments are only required at train time;
once a suitable classifier has been found,
generalization ensures that the attained level of social welfare will hold indefinitely and for free for all 
test-time users.
Thus, there is no need for (costly) elicitation after deployment,
and maximizing welfare admits a fixed and bounded total cost.

Second, our main theoretical result states that
for many common linear classification-stable learning algorithms,
the number of paying users is bounded by a constant,
and so does not grow with the number of training samples 
This constitutes a large class of learning algorithms,
including SVM and logistic regression.
The proof builds on a novel connection between auctions, learning, 
and algorithmic stability,
shedding light on the mechanism underlying this phenomenon and the forces at play.
We then complement this result by establishing a lower bound
showing that for the \knn\ algorithm, the number of paying users
is at least linear in the number of samples when the data distribution contains regions with label noise.
Together, these suggest that minimizing payments is possible,
but requires careful consideration of the choice of learning algorithm.

We conclude with experiments that shed light on how accuracy auctions
affects prices, welfare, and accuracy, and how these notions relate.
Our results support our theory, showing that the sum of payments and the number of paying users is very small in practice---often just a handful,
even for large high-dimensional data.
This bears positive social implications,
but also poses questions as to which users will be subject to this burden.
We also study how accuracy and welfare trade off,
using real data and
leveraging labels and features to simulate personalized values.
We demonstrate how learning has the capacity to improve welfare,
often with little reduction in accuracy,
by investing in eliciting true user values.
\squeeze

\section{Related Work} \label{sec:related}

Our work relates to the growing field of strategic learning,
wherein learning must contend with strategic user behavior.
A canonical task in this field is \emph{strategic classification}
\citep{hardt2016strategic,levanon2021strategic}, in which learning aims to maximize accuracy,
but users can manipulate their features, at a cost, to obtain favorable predictions.
Here it is assumed that feature information is private to users---but only 
at test time,
and that a cost function, rather than a mechanism,
limits the degree of misreporting.
Interestingly, untruthful behavior can in some cases
improve accuracy, %
due to a causal relations
\citep{miller2020strategic,bechavod2022information,horowitz2022causal},
social network structure
\citep{eilat2023strategic},
externalities and 
market forces \citep{sommer2025learning,hossain2025strategic},
or when user and system incentives align
\citep{levanon2022generalized}.

Earlier formulations of strategic learning consider labels as private
\citep{meir2012algorithms},
whereas others allow agents to withhold certain features \citep{krishnaswamy2021classification},
attributes \citep{nair2022strategic},
or their participation altogether \citep{horowitz2024classification}.
Lacking an exogenous force limiting misreporting,
these settings are similar to ours in that they
require a mechanism that incentives truthful behavior.
However, none of these consider user values as private,
nor allow for variability in valuations.
In \citet{sundaram2021pac}, users can have individual values,
but these are observed at train time together with features and labels.
Furthermore, the objective---as in most works---%
is accuracy, not welfare.

While algorithmic fairness has been popular as a means to reason
about social outcomes in learning,
recent voices have proposed social welfare as a viable alternative framework
\citep{heidari2018fairness,hu2020fair,rambachan2020economic,kasy2021fairness,cousins2023revisiting,jordan2025collectivist}.
Though fairness sometimes views accuracy as a limited resource
\citep{golz2019paradoxes},
it controls allocation indirectly through hard constraints,
typically based on group membership or similarity measures.
In social welfare, resources are made explicit,
and predictions inform allocations directly and
following a social planner's policy
\citep{rolf2020balancing,shirali2024allocation,fischer2025value}.
Additionally, the economic foundations of welfare lend
naturally to modeling and accounting for agency and strategic behavior
within the learning objective \citep{rosenfeld2025machine}.
In our setting, once incentives have been aligned and valuations elicited,
the task reduces to
example-dependent cost-sensitive learning
\citep{elkan2001foundations,zadrozny2003cost}.

From a game-theoretic perspective, our work extends the theory of single-parameter auction design \citep{myerson1981optimal} to settings in which allocations are induced by learning algorithms.
Our main contribution in this context is to show that a broad family of learning algorithms induces truthful auctions via monotonicity, and analyze their welfare properties.\looseness=-1

\begin{figure*}
  \begin{center}
  \centerline{\includegraphics[width=\textwidth]{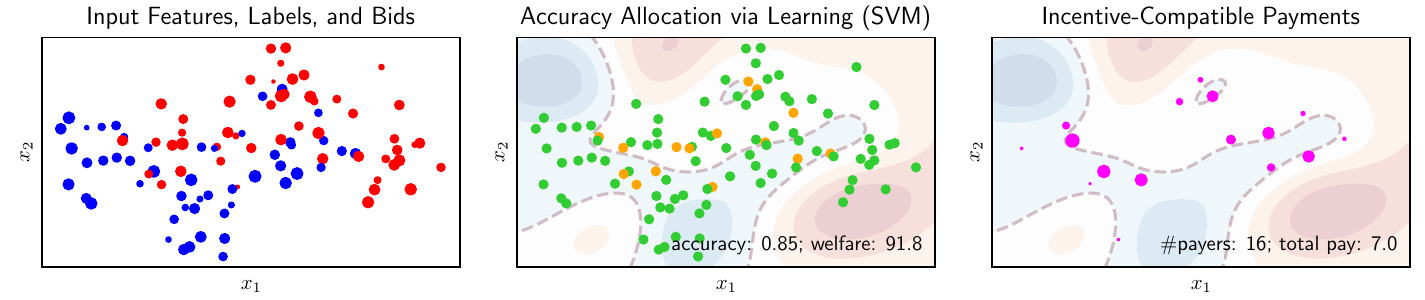}}
    \caption{
        Accuracy auction induced by a classification algorithm.
        \textbf{(Left)} Each training point represents a participant in the auction: color represents label, and size represents value of accurate prediction. 
        \textbf{(Center)} Welfare-maximizing allocation of pointwise accuracy via weighted loss minimization,
        showing 
        learned score function (background color),
        decision boundary (dashed line),
        and prediction correctness (point color).
        \textbf{(Right)} 
        User payments (size depicts magnitude)
        guarantee
        truthful elicitation,
        despite being highly sparse.
        \vspace{0em}
    }
    \label{fig:schematic}
  \end{center}
\end{figure*}

\section{Setup} \label{sec:setup}

Users in our setting are represented by features $x \in \X \subseteq \R^d$,
labels $y \in \{\pm1\}$, and values $v \in \R_+$,
and the user population is given by an unknown joint distribution
$\dist$ over triples $(x,y,v)$.
We assume users know their features $x$ and values $v$,
but not their label $y$, and look to the system
to provide them with accurate predictions $\yhat \approx y$.
The gain of users from accurate predictions is given by their
\emph{utility function}:
\squeeze
\begin{equation}
\label{eq:user_utility}
\util(y,\yhat;v) = v \cdot \one{y = \yhat} 
\end{equation}
Thus, all users gain from correct predictions,
but can differ in how much this benefits them.
For clarity, Eq.~\eqref{eq:user_utility} assumes
erroneous predictions always entail zero value;
in Appendix~\ref{appendix: asymmetric_valuations}
we show this is made w.l.o.g., and discuss how our results
naturally extend to support individualized values for
both correct and incorrect predictions.
We also assume that $v \in [0,\vmax]$ for
some maximal value $\vmax \in \R_{\ge 0}$.

The goal of learning in our setting is to
maximize the benefit of prediction to the population of users.
This is formulated via the objective of maximizing \emph{expected social welfare}:
\begin{equation}
\label{eq:welfare_objective}
\welf(h) = \expect{D}{\util(y,h(x);v)}
\end{equation}
i.e., the utility users gain from predictions $\yhat=h(x)$,
by selecting a classifier $h$ from a chosen model class $H$.
We focus on classes of score-based classifiers in the form of
$h_f(x)=\sign(f(x))$ where $f:\X \to \R$ is a parametric score function.
To optimize Eq.~\eqref{eq:welfare_objective},
and as is common in learning,
we assume the system 
can collect a
labeled sample set
$\smplst=\{(x_i,y_i)\}_{i=1}^m \sim \dist_{XY}^m$
to use for training $h$.

Importantly, the 
system does \emph{not} have sample access to values $v_i$,
as these are private to users---%
who can chose how to report (or misreport) them at their own discretion.
The challenge in learning is therefore to maximize welfare
despite this information gap, in addition to the standard computational and statistical challenges that Eq.~\eqref{eq:welfare_objective} poses.

\paragraph{Eliciting user values.}
A natural approach to optimizing Eq.~\eqref{eq:welfare_objective} is to first
obtain
the true values $\v=(v_1,\dots,v_m)$ of users in the training set $\smplst$.
This can be achieved by employing an appropriate mechanism---%
in our case, an auction---%
that incentivizes truthful reporting.
Auctions help allocate limited resources 
by first collecting bids $b_1,\dots,b_n$ from all participants,
and then based on these bids, determining who gets which resources
and at what price.
These notions are defined formally in Sec.~\ref{sec:auctions}.
We will assume users are rational, i.e.,
bid to maximize their expected payoff
(namely utility minus cost)
from allocations,
which may have randomness.
An auction is \emph{truthful} if it incentivizes users to bid their true values, $b_i=v_i$,
i.e., it is in the best interest of each user to report truthfully.
We focus on truthful auctions whose objective is to allocate resources---%
namely accurate predictions---%
in a way which maximizes social welfare.

\paragraph{Welfare vs. accuracy.}
For the special case of uniform valuations,
$v=1$, Eq.~\eqref{eq:welfare_objective} matches the conventional
learning objective of maximizing accuracy.
This implies that applying standard techniques for learning a classifier 
amounts to assuming (perhaps implicitly) that all users benefit equally from correct predictions.
While capturing a certain notion of equity, 
a simple construction shows that maximizing accuracy can result in arbitrarily bad welfare
(see Appendix \ref{appx:arbitrary_welfare}).
On the other hand,
the {\naive} approach of simply asking users to reveal their
private values will result in all users reporting $v=\vmax$, the maximal value,
as this best advances their own self-interest; %
this result follows from our claim in Thm.~\ref{thm:monotonicity}.
Thus, neglecting to account for user incentives in learning leads to the same outcomes as working with uniform values,
which further motivates the need for a mechanism.

\paragraph{The cost of welfare.}
Aligning incentives via mechanism design typically
requires enabling monetary transfers.
In auctions, these manifest as
user payments $p_i \in \R$. %
We aim for auctions that satisfy \emph{individual rationality},
i.e., in which payments never exceed values.
This ensures that all users with 
strictly positive values
are incentivized to participate in the auction.
Payments in our setting can either be charged directly,
or implemented to indirectly reduce utility 
(e.g., by injecting noise).
Since our goal is to maximize social welfare, 
payments can be viewed as 
a deadweight loss
that is undesired yet necessary for incentive compatibility;
for a discussion on the necessity of such ``money burning'', see \citet{hartline2008optimal}. 
Nevertheless,
our results show that common learning algorithms
induce accuracy auctions
achieving near budget-balance,
and in which the gap between residual and social surplus
is often negligible.
Furthermore, 
and as noted,
even this small gap manifests only at train time,
while test-time users bear no costs indefinitely.

Our setting is illustrated in \Cref{fig:schematic}.

\section{Accuracy Auctions} \label{sec:method}

Our general approach to maximizing expected welfare
will be to optimize an empirical proxy objective based on
values $v_i$ elicited through a truthful auction.
From Eq.~\eqref{eq:user_utility}, and given data $\smplst$,
we can rewrite the empirical analog of Eq.~\eqref{eq:welfare_objective} as
an equivalent weighted loss minimization objective:
\begin{equation}
\label{eq:empirical_0-1_objective}
\argmin_{h \in H} \frac{1}{m}
\sum\nolimits_{i=1}^m v_i \one{y_i \neq h(x_i)}
\end{equation}
Since minimizing the 0-1 loss is computationally challenging,
and as is common in learning,
the next step is to replace it with a tractable surrogate $\loss$
(e.g., hinge loss).
Adding a regularization term $\reg$ (e.g., $L_2$) to control overfitting gives:
\begin{equation}
\label{eq:empirical_proxy_objective}
\argmin_{h \in H} \frac{1}{m}
\sum\nolimits_{i=1}^m v_i \loss(y_i, f(x_i)) + \lambda \reg(f)
\end{equation}
This will be our target objective.
We use $\O(h;\w)$ to denote the above objective with general example weights
$\w=(w_1,\dots,w_m)$.
Interestingly, we will see that 
different loss functions have different implications for auctions.

If we are able to gain access to $\v$ and learn a classifier $h$
that both minimizes the loss in $\O(h;\v)$ and generalizes well,
then 
$h$ will
gain high expected social welfare on the entire population.
We therefore now turn to the question of how to elicit true values
as a first step towards optimizing
Eq.~\eqref{eq:empirical_proxy_objective}.
\squeeze

\subsection{Auctions} \label{sec:auctions}
Auctions are mechanisms for effectively allocating limited resources to users
on the basis of their submitted bids.
Formally, an auction 
is a defined by a pair of functions $(\alloc,\payment)$ 
that map a vector of user bids $\bids=(b_1,\dots,b_m) \in \R^m$
to corresponding \emph{allocations}
$\alloc(\bids)=\allocs=(a_1,\dots,a_m) \in \allocset \subseteq \{0,1\}^m$,
where $\allocset$ is a set of feasible allocations,
and individual \emph{payments}
$\payment(\bids)=\payments=(p_1,\dots,p_m) \in \R^m$.
An auction proceeds in three steps:
(1) the auctioneer commits to and publishes the pair $(\alloc,\payment)$,
(2) users submit their bids $b_1,\dots,b_m$,
and
(3) the auctioneer computes $\alloc(\bids)$ and $\payment(\bids)$,
distributes allocations, and collects payments.
For a given $(\alloc, \payment)$,
we use $\bids_{-i}$ to denote all bids in $\bids$ except $i$'s,
and $\alloc_i(b_i; \bids_{-i})$ and $\payment_i(b_i; \bids_{-i})$ 
as user $i$'s allocation and payment, respectively,
as a function of her own bid $b_i$ for fixed $\bids_{-i}$.
\squeeze

\paragraph{Allocation by classification.}
Users in our setting are interested in accurate predictions (Eq.~\eqref{eq:user_utility}),
but typically no classifier exists that can predict perfectly for everyone.
This makes accuracy itself a (homogeneous) limited resource which 
inevitably requires allocation.
In our setting, allocations are induced by the choice of classifier;
with slight abuse of notation we write $a_i(h)=\one{y_i=h(x_i)}$
and $\allocs(h)=(a_1(h),\dots,a_m(h))$.
Note this implies
that not all allocations are feasible,
since correct predictions are determined by the choice of classifier $h$.
Hence, the set of feasible allocations $\allocset = \allocset(H) \subseteq \{0,1\}^m$ is induced by the model class $H$.
This suggests that the expressivity of $H$ affects the flexibility of resource allocation.

\subsection{Allocating Accuracy} \label{sec:alloc_acc}
Recall our goal is to maximize welfare by allocating accuracy.
By definition, allocations $\allocs^*$ derived from the
optimal solution $h^*$ to Eq.~\eqref{eq:empirical_0-1_objective} achieve this.
We aim to approximate $\allocs^*$ by solving $\O(h;\w)$ with $\w=\v$ attained via auction.
Given the above, it is natural to define the auction's allocation rule as:
\begin{equation}
\label{eq:alloc_by_learning}
\alloc(\bids) = \allocs(\hhat), \qquad
\hhat = \argmin\nolimits_{h \in H} \O(h;\bids)
\end{equation}
i.e., as the accurate predictions obtained from training with example weights $\bids$.
Thus, if users bid truthfully, then the auction (approximately) optimizes social welfare.

\paragraph{Incentive compatibility.}
Our main result here is that the above $\alloc(\bids)$,
coupled with an appropriate payment rule $\payment(\bids)$,
can incentivize truthful reporting.
The key lies in proving that such allocations are \emph{monotone} in individual user bids.
This then allows us to invoke Myerson's Lemma \citep{myerson1981optimal}
to obtain guarantees for truthfulness,
as well as a template for an appropriate payment scheme.

\begin{theorem}[Monotonicity]
\label{thm:monotonicity}
Let $\alloc(\bids)$ be as in Eq.~\eqref{eq:alloc_by_learning}.
Then for all $i \in [m]$ and any fixed $\bids_{-i}$,
the allocation $\alloc_i(b_i;\bids_{-i})$
for user $i$ is (weakly) increasing in her bid $b_i$.
\end{theorem}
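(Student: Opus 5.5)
The natural route is the standard exchange (``swap'') argument used for monotonicity of weighted empirical risk minimizers, applied to the $0$-$1$ indicator. Fix $i$ and $\bids_{-i}$, and take two bids $b_i < b_i'$. Let $\hhat$ and $\hhat'$ be the minimizers of $\O(h;\bids)$ and $\O(h;\bids')$, where $\bids=(b_i;\bids_{-i})$ and $\bids'=(b_i';\bids_{-i})$. Write $\O(h;\w) = \frac1m\sum_j w_j \loss_j(h) + \lambda\reg(h)$, with $\loss_j(h)=\loss(y_j,f(x_j))$. The only term that depends on user $i$'s bid is $\frac1m b_i \loss_i(h)$. So we can write $\O(h;\bids) = G(h) + \frac1m b_i\loss_i(h)$, where $G$ is common to both objectives.

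By optimality of each minimizer,
\[
G(\hhat) + \tfrac1m b_i \loss_i(\hhat) \le G(\hhat') + \tfrac1m b_i \loss_i(\hhat'),
\qquad
G(\hhat') + \tfrac1m b_i' \loss_i(\hhat') \le G(\hhat) + \tfrac1m b_i' \loss_i(\hhat).
\]
Summing the two inequalities cancels $G$ and gives $(b_i'-b_i)\bigl(\loss_i(\hhat')-\loss_i(\hhat)\bigr)\le 0$. Hence $\loss_i(\hhat')\le \loss_i(\hhat)$: raising the bid weakly decreases user $i$'s surrogate loss at the chosen solution.

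This conclusion concerns the surrogate loss, not the allocation $a_i=\one{y_i=\sign(f(x_i))}$, and bridging that gap is the main obstacle. For a margin-based loss $\loss(y,f(x))=\phi(yf(x))$ with $\phi$ strictly decreasing (logistic, exponential), the inequality becomes $y_i\hat f'(x_i)\ge y_i\hat f(x_i)$, so the margin is monotone. Because $a_i$ is a nondecreasing function of the margin (with a fixed tie-breaking convention at $0$), monotonicity of $a_i$ follows.

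For losses that are only weakly decreasing, such as hinge, $\phi$ is flat on $[1,\infty)$. A drop in loss still implies a larger margin whenever the old margin was below $1$, which covers every case where $a_i$ could flip from $1$ to $0$. The problem arises when both losses are equal, since then the margin could move arbitrarily. In that case I would resolve the issue with a consistent tie-breaking rule among minimizers, for instance choosing the minimizer that maximizes $y_if(x_i)$ or using a fixed lexicographic order. Alternatively I would note that with a strictly convex regularizer the minimizer is unique, and argue that $\hhat$ remains optimal for the new bid whenever $\loss_i(\hhat)=\loss_i(\hhat')$. The equality case of the summed inequalities shows that both are minimizers of both objectives, so uniqueness forces $\hhat=\hhat'$.

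The remaining detail is to state the assumptions explicitly: $\loss$ is a nonincreasing function of the margin, and the argmin is unique or tie-broken consistently. Under these assumptions the allocation $\alloc_i(b_i;\bids_{-i})$ is weakly increasing in $b_i$, as the theorem claims.
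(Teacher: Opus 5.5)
Your proof is correct and is essentially the paper's argument. The paper observes that the optimal value $\min_h \O(h;(b_i,\bids_{-i}))$ is concave in $b_i$, being a minimum of affine functions, and that its slope at $b_i$ is $\frac1m\loss_i(\hhat)$, which is therefore nonincreasing; your summed optimality inequalities are exactly the elementary proof of that supergradient monotonicity.

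Your loss-to-allocation step is more careful than the paper's, which simply infers a nondecreasing margin from a decreasing loss. Two small remarks on it:
\begin{itemize}
\item For hinge the equal-loss case needs no tie-breaking: equal losses mean either equal margins or both margins at least $1$, so both points are classified correctly.
\item For losses that are flat across zero, use your uniqueness or fixed-order tie-break rather than the user-dependent max-margin rule. A single classifier must serve all users at once, so the selection among minimizers cannot depend on $i$.
\end{itemize}
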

Proof in Appendix \ref{appendix:proof_of_monotonicity_theorem}.
Intuitively, the result states that increasing an example's weight in the (proxy) objective can only improve its
(true) accuracy from the resulting learned classifier.
The result applies 
to any score-based model class %
and to all margin-based losses
(e.g., hinge loss, log-loss, 0-1 loss).
The technical challenge lies in showing that monotonicity is preserved when weights are pushed through the argmin operator
applied to $\O(h;\bids)$.
Interestingly, weak local conditions on the proxy loss suffice for each loss component to become concave
and maintain monotonicity.
Leveraging the monotonicity guarantees of \Cref{thm:monotonicity}, we invoke Myerson's lemma to obtain the auction mechanism:
\begin{corollary}
There exists a payment rule $\payment(\bids)$ such that the auction
$(\alloc,\payment)$ is
dominant-strategy incentive-compatible (DSIC),
i.e., each user's optimal strategy is to bid truthfully,
$b_i=v_i$,
irrespective of all other users' strategies.
\end{corollary}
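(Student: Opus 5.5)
The plan is to invoke Myerson's Lemma in its single-parameter form, using \Cref{thm:monotonicity} as the only substantive input. Fix a user $i$ and the other bids $\bids_{-i}$. \Cref{thm:monotonicity} says that $\alloc_i(b_i;\bids_{-i})$ is a weakly increasing function of $b_i$. Since allocations lie in $\{0,1\}$, it is a step function: there is a critical bid
\[
\bcrit_i(\bids_{-i}) = \inf\{ b \ge 0 : \alloc_i(b;\bids_{-i}) = 1 \},
\]
with $\alloc_i = 0$ below $\bcrit_i$ and $\alloc_i = 1$ above it. We set $\bcrit_i=\infty$ if user $i$ is never correctly classified. We then define the threshold payment $\payment_i(\bids) = \bcrit_i(\bids_{-i})\cdot \alloc_i(b_i;\bids_{-i})$. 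This is exactly Myerson's formula $p_i(b_i) = b_i a_i(b_i) - \int_0^{b_i} a_i(z)\,\d z$ specialized to a $0/1$ step allocation.

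Next we verify DSIC directly for this payment rule. The payoff of user $i$ with true value $v_i$ who bids $b_i$ is $(v_i - \bcrit_i)\,\alloc_i(b_i;\bids_{-i})$. The bid affects this payoff only through whether user $i$ is allocated, and the price $\bcrit_i$ does not depend on $b_i$. There are two cases.

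\emph{Case $v_i > \bcrit_i$.} Bidding truthfully yields allocation $1$ and payoff $v_i - \bcrit_i \ge 0$, which is the maximum attainable.

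\emph{Case $v_i < \bcrit_i$.} Bidding truthfully yields allocation $0$ and payoff $0$, whereas any allocation would give a negative payoff.

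So truthful bidding is optimal regardless of $\bids_{-i}$. As a byproduct, $\payment_i \le b_i$ whenever $\alloc_i = 1$, which gives individual rationality.

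The only delicate point is the boundary behavior at $b_i = \bcrit_i$, which affects the tie-breaking in the $\argmin$ of Eq.~\eqref{eq:alloc_by_learning}. At $v_i = \bcrit_i$ the user is indifferent, since both outcomes give payoff $0$, so the tie-breaking rule is irrelevant to DSIC. The rule only needs to be fixed deterministically so that $\alloc$ is a well-defined function, and any fixed choice consistent with monotonicity works.

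If the learned $\hhat$ is randomized, or allocations are interpreted as probabilities of correct prediction, the same argument goes through. Monotonicity of the expected allocation is what \Cref{thm:monotonicity} provides in that setting, and one uses the integral form of Myerson's payment with expected payoffs. I expect no step here to be a real obstacle, since all the difficulty is absorbed by \Cref{thm:monotonicity}.
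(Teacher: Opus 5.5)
Your proposal is correct and follows the paper's route: \Cref{thm:monotonicity} gives monotonicity of $\alloc_i(\cdot;\bids_{-i})$, Myerson's lemma then gives DSIC with the critical-bid payment of Eq.~\eqref{eq:payment_as_nested_argmin}, and your case analysis is just that lemma checked by hand for $0/1$ allocations. Two small points improve on the paper's statement. You define $\bcrit_i$ as an infimum, whereas the paper's $\argmin$ need not be attained when $\alloc_i(\bcrit_i;\bids_{-i})=0$. You also explicitly require a deterministic selection from the $\argmin$ in Eq.~\eqref{eq:alloc_by_learning}.
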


\paragraph{Payments.}
Myesron's lemma also gives an explicit formula for an appropriate (and unique)
payment rule:
\begin{equation}
\label{eq:myerson_payment_rule}
\payment_i(b_i;\bids_{-i}) = 
\int_0^{b_i} z \frac{\d}{\d z} \alloc_i(z;\bids_{-i}) \d z
\end{equation}
Since allocations in our case are binary,
and due to monotonicity,
the above can be equivalently expressed as the solution to a constrained optimization problem:
\begin{equation}
\label{eq:payment_as_nested_argmin}
\payment_i(b_i;\bids_{-i}) = 
\argmin\nolimits_{z \in [0,b_i]} \, z
\,\,\,\, \text{s.t.} \,\,\,\,
\alloc_i(z;\bids_{-i}) = 1
\end{equation}
i.e., the minimal bid $z$ user $i$ requires
to ensure a correct prediction.
If no such $z$ exists, then we define $\payment_i=0$.
Given this interpretation,
we refer to the solution of Eq.~\eqref{eq:payment_as_nested_argmin}
as the \emph{critical bid} for user $i$,
denoted $\bcrit_i \defeq \payment_i(b_i;\bids_{-i})$.

\paragraph{Individual rationality.}
For train time users, payoffs are given by their utility (Eq.~\eqref{eq:user_utility})
minus payments (Eq.~\eqref{eq:myerson_payment_rule}).
Note Eq.~\eqref{eq:myerson_payment_rule} implies wrong predictions entail $\payment_i=0$.
If we further assume $b_i =0$ implies $\payment_i(\bids)=0$,
then Myerson's Lemma guarantees \emph{individual rationality} (IR), i.e., 
truthful agents will never have negative payoffs since
$\payment_i \le v_i$.

\section{Payment Analysis} 
\label{sec:analysis}

We now turn to analyzing the 
surplus properties of accuracy auctions.
Since truthful reporting implies approximate
welfare maximization,
our key object of interest will be the
expected sum of payments
extracted from users,
defined as:
\begin{equation}
\rev_\datasize\left(\Learner\right)=\expect{S\sim D^\datasize}{\sum\nolimits_{i=1}^\datasize \payment_i(\v;\Learner)}
\end{equation}
Our goal is to understand how total payments scale with $m$,
and how this burden is distributed across users.

 Individual rationality implies that users never pay more than their valuations, and thus
$\rev_\datasize(\Learner) \le \expect{}{\sum_{i} v_i} \le m \vmax$.
From this, we might expect the sum of the payments to grow with $m$ in general.
However, and perhaps surprisingly, here we prove that 
for $L_2$-regularized risk minimization over linear classifiers,
payments are upper-bounded by a constant,
$\rev_\datasize(\Learner_\mathrm{lin})=O(1)$,
and so are independent of $m$.
We then complement this with a lower-bound
showing that payments due to $k$-Nearest-Neighbor (\knn{}) classifiers exhibit at least linear growth, $\rev_\datasize(\Learner_\text{\knn})=\Omega(\datasize)$ when label noise is present in some region.
Together, these establish the largest possible asymptotic discrepancy in payments.

\subsection{Payments Upper Bound for Linear Classifiers}
We begin by showing that the expected sum of payments of accuracy auctions using linear classifiers is bounded by a constant.
Our results apply to regularized risk minimization 
(Eq.~\eqref{eq:empirical_proxy_objective})
with $L_2$ regularization
and any proxy loss that is convex,
$\sigma$-Lipchitz 
in $f(x)$, and differentiable near the origin,
covering most common loss functions.
We make two regularity assumptions on the data distribution:
that features are bounded in norm and probability density,
and that the distribution has non-trivial signal strength,
meaning that the theoretical optimal classifier is non-zero.
\squeeze

\begin{theorem}[Linear RRM payment upper bound; Informal]
\label{thm:svm_upper_bound}
The expected sum of payments of any accuracy auction induced by
regularized risk minimization over a linear class
is upper-bounded by a constant for any sample size $\datasize$:
$$
\rev_\datasize\left(\Learner_\mathrm{lin}\right) = O(1)
$$
\end{theorem}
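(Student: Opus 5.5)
The plan is to bound each user's payment by the probability that her bid is ``pivotal.'' Then I will show that, under algorithmic stability of $L_2$-regularized RRM, the total pivotal mass is $O(1)$. Fix a sample and truthful bids $\v$. By Eq.~\eqref{eq:payment_as_nested_argmin}, $p_i = \bcrit_i \le v_i \le \vmax$. Moreover $p_i>0$ only if user $i$ is correctly classified at bid $v_i$ but misclassified at some smaller bid $z\in[0,v_i)$. By \Cref{thm:monotonicity} it then suffices that $i$ is misclassified at $z=0$, i.e.\ by the classifier $\hat f^{-i}$ trained with her weight zeroed out, which is the leave-one-out classifier. Hence
\[
p_i \le \vmax \cdot \one{ y_i \hat f(x_i) > 0 \ \text{and}\ y_i \hat f^{-i}(x_i) \le 0 },
\]
so $\rev_m(\Learner_{\mathrm{lin}}) \le \vmax \sum_i \Pr[\sign \hat f(x_i) \ne \sign \hat f^{-i}(x_i)]$. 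The task reduces to showing that the expected number of ``sign flips under deletion'' is $O(1)$. Each flip probability must therefore be $O(1/m)$.

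The second step is uniform stability. Because the objective $\O(h;\w)$ is $\lambda$-strongly convex in the linear weight vector $\theta$, the loss is $\sigma$-Lipschitz in $f(x)$, and $\|x\|\le B$, the standard argument of Bousquet--Elisseeff gives $\|\hat\theta - \hat\theta^{-i}\| \le 2\sigma B v_i/(\lambda m) \le C/m$ with $C = 2\sigma B\vmax/\lambda$. Consequently $|\hat f(x_i) - \hat f^{-i}(x_i)| \le BC/m$. A flip at point $i$ therefore requires $|\hat f^{-i}(x_i)| \le BC/m$, i.e.\ $x_i$ must lie in a slab of width $O(1/(m\|\hat\theta^{-i}\|))$ around the hyperplane of $\hat\theta^{-i}$. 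The key advantage is that $\hat\theta^{-i}$ is independent of $(x_i,y_i)$. Conditioning on $S_{-i}$, the bounded-density assumption on $x$ makes the probability of landing in that slab at most $\rho \cdot \mathrm{area} \cdot 2BC/(m\|\hat\theta^{-i}\|)$, using that the hyperplane's intersection with the ball of radius $B$ has bounded $(d-1)$-volume.

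The main obstacle is controlling $1/\|\hat\theta^{-i}\|$, since the slab becomes wide if the learned weight vector is near zero. Here I will use the signal-strength assumption. The population minimizer $\theta^*$ of the weighted regularized risk is nonzero, with $\|\theta^*\| \ge 2\gamma > 0$. By strong convexity, the ERM $\hat\theta^{-i}$ concentrates around $\theta^*$ at rate $O(1/\sqrt m)$ in expectation (stability plus McDiarmid, or a direct strong-convexity generalization bound). Hence $\Pr[\|\hat\theta^{-i}\| < \gamma] \le e^{-cm}$, or at least $O(1/m)$ via Chebyshev. I will split the expectation into two events. On $\{\|\hat\theta^{-i}\|\ge\gamma\}$ the flip probability is $O(1/m)$. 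On the complement it is bounded by that event's probability, which must itself be $O(1/m)$. If a second-moment bound only gives $O(1/m)$ this still suffices. The differentiability near the origin is where I expect to need care: possibly to ensure $\theta^*\neq0$ is well defined via a nonzero gradient of the risk at $0$, and to handle the case where $\lambda$ scales.

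Summing over the $m$ users then gives $\rev_m \le \vmax\, m \cdot O(1/m) = O(1)$, with a constant depending on $\vmax,\sigma,\lambda,B,\rho,d,\gamma$ but not on $m$. The same argument also bounds the expected number of paying users.
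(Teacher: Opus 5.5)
Your proposal is correct and shares the paper's skeleton. Payment forces a sign change when $w_i$ is zeroed. Weighted classification stability confines the leave-one-out score $y_i\hat f^{-i}(x_i)$ to an $O(1/m)$ window. Independence of $\hat\theta^{-i}$ from $(x_i,y_i)$, together with the density bound, turns this into a slab probability of order $1/(m\|\hat\theta^{-i}\|)$. The event that $\|\hat\theta^{-i}\|$ is small is handled separately.

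Your first step is slightly tighter than the paper's. Working with the flip event directly gives the window $[-\beta,0]$. The paper instead first uses $y_i f_S(x_i)\in[0,\beta]$ and applies stability a second time to reach $[-\beta,2\beta]$.

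The substantive difference is the norm lower bound. The paper argues locally at the origin. If $\|\theta_S\|<1/B$, every margin lies in $(-1,1)$, where $\ell'$ is $K_\ell$-Lipschitz. First-order optimality then gives $c_\ell\|\mu_S\|=\|\nabla L(0)-\nabla L(\theta_S)\|\le K\|\theta_S\|$ with $K=WK_\ell B^2+2\lambda$. A scalar Hoeffding bound on $\langle \mu_D/\|\mu_D\|,\mu_S\rangle$ yields $\|\theta_S\|\ge\min\{1/B,\,c_\ell\|\mu_D\|/(2K)\}$ with probability $1-e^{-Cm}$. This is exactly where differentiability near the origin and $c_\ell>0$ enter. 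It buys an explicit constant in terms of the signal strength, and a lower bound that does not deteriorate as $\lambda\downarrow 0$.

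Your route is concentration of the regularized ERM around the population minimizer $\theta^*\neq 0$, which is the paper's $\theta_D\neq 0$ assumption. It needs no smoothness at the origin, only $2\lambda$-strong convexity with fixed $\lambda$ and Lipschitzness. The price is a constant depending on $\|\theta^*\|$ and $\lambda$, which degrades when $\lambda\|\theta^*\|$ is small.

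To close your argument, use either of two tools.
\begin{itemize}
\item The bound $\|\hat\theta-\theta^*\|\le\|g_S(\theta^*)\|/(2\lambda)$, where $g_S(\theta^*)\in\partial L_S(\theta^*)$ is an average of bounded mean-zero vectors. This gives an exponential tail.
\item The excess-risk bound $\mathbb{E}[L_D(\hat\theta)]-L_D(\theta^*)=O(1/(\lambda m))$ plus strong convexity of $L_D$. This gives $\mathbb{E}\|\hat\theta-\theta^*\|^2=O(1/(\lambda^2 m))$ and hence the $O(1/m)$ tail you need.
\end{itemize}
Two alternatives you mention would not suffice on their own. McDiarmid by itself centres at $\mathbb{E}\hat\theta$ rather than $\theta^*$. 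A first-moment $O(1/\sqrt m)$ bound with Markov would only give $O(\sqrt m)$ in total.
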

Formal statement and proof in Appendix~\ref{appendix:stable_algorithm_auctions}.
The main tool for establishing our bound is \emph{algorithmic stability}
and %
its application to supervised learning \citep{bousquet2002stability}.
We outline the key components of the proof below.

\paragraph{Stability.}
Broadly, a learning algorithm is stable if the effect of
modifying a single point from the sample set
on learning outcomes (e.g., loss or example scores)
diminishes with the size of the sample set.
The common use of stability in learning is generalization;
here we use it for bounding payments.
For our proof,
we use the notion of \emph{classification stability}, 
which bounds the difference in example scores:
\begin{definition}[Classification stability; \citet{bousquet2002stability}]
Let $H$ be a collection of classifiers $h_f(x)$ induced by score functions 
$ \Scores \subseteq \Reals^\Features$.
Let $\A$ be a learning algorithm mapping sample sets $\smplst$ of size $m$
to score functions $f \in \Scores$, denoted $f_\smplst=\A(\smplst)$.
Then $\A$ is \emph{$\beta$-classification stable}
if:
\begin{equation}
\label{eq:algo_stability_def}
\forall \smplst %
\quad
\forall i,j \in [m]
\quad
|f_\smplst(x_j) - f_{\smplst^\minusi}(x_j)| \le \beta
\end{equation}
where $\smplst^\minusi$ denotes the sample set without the $i^{\text{th}}$ example.
\end{definition}

In their canonical paper, \citet{bousquet2002stability} show that the SVM algorithm for example
has classification stability $\beta_{\mathrm{SVM}} \le \kappa^2/2 \lambda m$, where $\kappa$ bounds the kernel diameter as $\sqrt{K(x,x)} \le \kappa$,
and equivalently $\max_i \|x_i\|_2 \le \kappa$ for linear kernels when features are bounded.
For weighted objectives, this notion of stability naturally extends
by associating $\smplst^\minusi$ with a modified dataset that sets a zero weight to the $i^{\text{th}}$ example $w_i\leftarrow0$, and standard stability is a special case of weighted stability with weights $w_i=1$.
Moreover, we prove that the classification stability bound of SVMs also extends naturally when weights are bounded:

\begin{lemma}
\label{lemma:weighed_svm_stability}
The SVM classification algorithm with weighted objective and bounded weights $w\in[0,\vmax]$ has classification stability bounded by $\beta \le \vmax \kappa^2/\lambda \datasize$.
\end{lemma}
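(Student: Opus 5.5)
We adapt the Bousquet--Elisseeff argument for SVM stability to the weighted objective. Work in the RKHS (for linear kernels, $\R^d$ with $f(x)=\InnerProd{w}{x}$). Fix weights $\w\in[0,\vmax]^m$ and write $\O(f;\w)=\frac1m\sum_j w_j\loss(y_j,f(x_j))+\lambda\Norm{f}^2$. Let $f=f_\smplst$ minimize $\O(\cdot;\w)$ and $g=f_{\smplst^\minusi}$ minimize $\O(\cdot;\w')$, where $\w'$ equals $\w$ except $w'_i=0$. The hinge loss is convex and $1$-Lipschitz in the score, so we take $\sigma=1$; a general $\sigma$ only rescales the bound. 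The reproducing property gives $|f(x)-g(x)|\le\kappa\Norm{f-g}$ for all $x$. So it suffices to show $\Norm{f-g}\le \vmax\kappa/(\lambda m)$, which yields $\beta\le\vmax\kappa^2/(\lambda m)$.

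The main step uses strong convexity of the regularizer. Set $L_{\w}(h)=\frac1m\sum_j w_j\loss(y_j,h(x_j))$, which is convex because $w_j\ge0$. Adding the optimality conditions of $f$ and $g$ (equivalently, the Bregman-divergence argument of Bousquet--Elisseeff with subgradients) and using that $\lambda\Norm{\cdot}^2$ is $2\lambda$-strongly convex gives
\begin{equation*}
2\lambda\Norm{f-g}^2 \le \bigl[L_{\w}(g)-L_{\w}(f)\bigr] - \bigl[L_{\w'}(g)-L_{\w'}(f)\bigr].
\end{equation*}
To see this, first note that $\O(g;\w)-\O(f;\w)\ge\lambda\Norm{f-g}^2$ by $2\lambda$-strong convexity of $\O(\cdot;\w)$ around its minimizer $f$. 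By the same argument, $\O(f;\w')-\O(g;\w')\ge\lambda\Norm{f-g}^2$. Adding these two inequalities cancels the regularizer terms. The right-hand side then collapses to the single differing term
\begin{equation*}
\frac{w_i}{m}\bigl[\loss(y_i,g(x_i))-\loss(y_i,f(x_i))\bigr]\le \frac{\vmax}{m}|g(x_i)-f(x_i)|\le\frac{\vmax\kappa}{m}\Norm{f-g}.
\end{equation*}
Dividing through gives $\Norm{f-g}\le \vmax\kappa/(2\lambda m)$. This is in fact a factor $2$ better than claimed, and it certainly implies the stated bound.

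Existence and uniqueness of the minimizers follow from strong convexity, and the argument is uniform over $\smplst$, $i$, $j$ and all weight vectors in $[0,\vmax]^m$. The point needing care is the definition of $\smplst^\minusi$ in the weighted setting. We zero out the weight while keeping the normalization $1/m$, so the two objectives differ in exactly one summand. If instead one renormalized by $1/(m-1)$, we would compare against an intermediate objective: rescaling all losses by $m/(m-1)$ is equivalent to changing $\lambda$ slightly, which perturbs the solution by $O(1/m)$ and is absorbed into the constant. I expect this bookkeeping, rather than the core strong-convexity inequality, to be the main obstacle. This is also why the lemma states the looser constant $\vmax\kappa^2/(\lambda m)$ rather than $\vmax\kappa^2/(2\lambda m)$.
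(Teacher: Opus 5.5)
Your proof is correct and is essentially the paper's argument. The paper also adapts Bousquet--Elisseeff: it combines the optimality of both minimizers with convexity along the segment $f+\tau\Delta f$ and a limit $\tau\to0$, where you appeal directly to $2\lambda$-strong convexity, and it arrives at the same inequality $\Norm{\Delta f}^2\le \frac{w_i\sigma}{2\lambda m}\Abs{\Delta f(x_i)}$. It then finishes with Cauchy--Schwarz and obtains the same factor-$2$-tighter constant you found.

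Your renormalization worry is moot. The paper defines $\smplst^\minusi$ exactly as you do, by setting $w_i\leftarrow0$ and keeping the $1/m$ factor, so the looser constant in the statement is simply slack, not a consequence of bookkeeping. Separately, your claim that a $1/(m-1)$ rescaling would be "absorbed into the constant" is not justified, so that aside is best dropped.
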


Proof in Appendix~\ref{appendix:stability_of_weighthed_loss_minimization}, following a technique similar to the unweighted result.
An important implication of \Cref{lemma:weighed_svm_stability} is that
weighted objectives maintain the same stability rates,
i.e., weighted SVM still has $\beta=O(1/m)$.
\blue{These results naturally extend to $\sigma$-Lipschitz loss, entailing a factor of $\sigma$.}
Interestingly, the weighted objective allows for
tighter \emph{individual} stability bounds,
where the effect of removing an example $i$ on all others
is at most
\blue{$\beta_i \le v_i \sigma \kappa^2 / \lambda \datasize$}.
This will later prove useful for computational speed up in Sec.~\ref{sec:payments}.

\paragraph{Necessary condition for payment.}
Stability entails a particular structure on which type of users can possibly pay.
By Eq.~\eqref{eq:payment_as_nested_argmin},
and from the monotonicity of $\alloc$ (Thm.~\ref{thm:monotonicity}),
a user $i$ will have $\payment_i>0$
only if $\alloc_i(b_i;\bids_{-i})=1$
and $\alloc_i(0;\bids_{-i})=0$,
i.e., if $i$ gets a correct prediction
under the default objective $\O(h;b_i,\bids_{-i})$,
but removing the example results in a learned classifier that errs on $x_i$.
By definition, this is possible only if 
$y_i f(x_i)$ changes from positive to negative
when $w_i$ is changed from $b_i$ to 0. 
Stability then implies that this is possible only for points
with sufficiently low scores.
\begin{lemma}
\label{lem:which_point_pay}
A point $x_i$ might pay only if $y_i f(x_i) \in (0,\beta_i]$.
\end{lemma}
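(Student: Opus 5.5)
We prove the contrapositive. If $y_i f(x_i) \notin (0,\beta_i]$, then $\payment_i = 0$. Here $f = f_\smplst$ is the score function learned from $\O(h;\bids)$ with the full bids, $b_i$ included, and $\beta_i$ is the individual stability bound of user $i$. In the weighted setting this is the maximal change $|f_\smplst(x_j) - f_{\smplst^\minusi}(x_j)|$ when $w_i$ is set from $b_i$ to $0$. By Lemma~\ref{lemma:weighed_svm_stability} and the remark after it, $\beta_i \le b_i \sigma\kappa^2/\lambda\datasize$.

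The first step uses the necessary condition for payment derived just above the statement. By Eq.~\eqref{eq:payment_as_nested_argmin} and the monotonicity of Thm.~\ref{thm:monotonicity}, $\payment_i>0$ requires two things:
\begin{itemize}
\item $\alloc_i(b_i;\bids_{-i})=1$, that is, $y_i f_\smplst(x_i)>0$;
\item $\alloc_i(0;\bids_{-i})=0$, that is, $y_i f_{\smplst^\minusi}(x_i)\le 0$.
\end{itemize}
If the first fails, the prediction is wrong and $\payment_i=0$ by definition. If the second fails, then $z=0$ already yields a correct prediction, so the critical bid is $0$. Together these already give $y_i f(x_i) > 0$, which is the left endpoint of the interval.

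The second step applies stability with $j=i$. Because $y_i\in\{\pm1\}$,
\[
y_i f_\smplst(x_i) = y_i f_{\smplst^\minusi}(x_i) + y_i\bigl(f_\smplst(x_i) - f_{\smplst^\minusi}(x_i)\bigr) \le 0 + |f_\smplst(x_i) - f_{\smplst^\minusi}(x_i)| \le \beta_i .
\]
This gives $y_i f(x_i)\in(0,\beta_i]$.

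The main subtlety is making sure the stability notion really compares the full-bid solution with the zero-weight solution. We interpret $\smplst^\minusi$ exactly as the weighted dataset with $w_i\leftarrow 0$, as stated in the weighted extension, so the bound applies directly. We also use the individual bound $\beta_i$, which depends only on the removed point's own weight $b_i$, rather than the uniform $\beta$. Since $\beta_i\le\beta$, the statement holds a fortiori with $\beta$.

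A minor point is the sign convention at $f=0$ when $h=\sign(f)$. We treat a score of $0$ as an incorrect prediction, which matches the closed endpoint at $0$ for the removed solution and the open endpoint for the full one. If the convention differs, only the endpoints change.
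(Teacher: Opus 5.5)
Your proof is correct and follows the paper's argument (the appendix proposition on the necessary condition for payment). A payment requires a correct prediction at bid $b_i$ and an incorrect one at bid $0$, and classification stability at $x_i$ then caps the margin at $\beta_i$. The paper phrases this as a contrapositive case split ($y_i f(x_i)>\beta$ versus $y_i f(x_i)<0$), but the ingredients are the same.
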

This result will also 
be key for our algorithm
in Sec.~\ref{sec:payments}.

\paragraph{Total payments.}
To bound the sum of payments,
the final step is to take an expectation over sample sets $\smplst \sim \dist^m$.
The proof relies on the stability bounds presented above, together with notion of non-trivial signal strength, 
and superlinear tail bounds for the learned feature vector norm. 
We begin by applying \Cref{lemma:weighed_svm_stability,lem:which_point_pay} to decouple the training set from the point under consideration, and bound the probability of paying by a constant which is proportional to the boundedness of features, and inversely proportional to both $\datasize$ and the norm of the learned feature vector. We then combine the non-trivial signal assumption with boundedness to obtain an exponential tail bound on the norm of the learned feature vector, yielding a constant payment bound.

\paragraph{Geometric interpretation.}
By Lemma~\ref{lem:which_point_pay},
a user will possibly pay only if $x_i$
lies on the correct side and sufficiently close to the decision
boundary.
Note this condition is necessary, 
but not sufficient,
and generally we expect that many (if not most)
of these users will not need to pay.
Geometrically,
payments can therefore be incurred only in a band of width $\beta$ near the decision boundary.
Increasing the number of samples $m$ entails two opposing effects: On the one hand, the number of points that
fall within the band can grow linearly.
On the other hand, stability implies that the band shrinks at rate $1/m$. Thus,
the effects cancel out.

\paragraph{Economic implications.}
Although Thm.~\ref{thm:svm_upper_bound} states that the number of paying users is at most constant,
this does not mean that payments are distributed uniformly across the population.
Rather, Lemma~\ref{lem:which_point_pay} implies that payments concentrate on marginal users---typically those that are `hard' to classify.
Thus, susceptibility to payment is not a property of an individual's features, labels, or value measured in isolation,
but rather depends on their position relative to the classifier and data distribution.
Payments can be viewed of as a means for users with incorrect predictions to influence the classifier into flipping their predictions to be correct.
A user will pay only if doing so is both necessary to secure a correct prediction---%
and possible.
In this sense, higher values give users greater leverage over outcomes.

\subsection{Payments Lower Bound for \knn{}}
For the converse result, we analyze the asymptotic behavior of the $k$-Nearest-Neightbor algorithm, showing a linear lower bound on expected payments.
Here we make the mild assumption that
there exists some bounded region within the feature space in which some amount label noise exists.

\begin{theorem}[\knn{} payments lower bound; Informal]
\label{thm:knn_lower_bound}
Consider a data distribution $D$ with some amount of label noise in some region. Then for a sufficiently large constant $k$, the expected sum of payments of an accuracy auction induced by a \knn{} classifier grows linearly with the sample size $\datasize$:
$$
\rev_\datasize\left(\Learner_{\text{\knn}}\right)=\Omega(\datasize)
$$
\end{theorem}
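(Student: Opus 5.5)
Since payments are nonnegative, it suffices to show that a constant fraction of training points pay a positive amount bounded away from zero. The plan is to exhibit an event, of probability bounded below, under which a given point $i$ pays at least some constant $c>0$. Linearity of expectation then gives $\rev_\datasize(\Learner_{\text{\knn}}) \ge \datasize \cdot \Pr[\text{event}] \cdot c = \Omega(\datasize)$.

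\textbf{Setup of weighted \knn{} and the payment event.} I will take weighted \knn{} to predict $\sign\bigl(\sum_{j\in N_k(x)} w_j y_j\bigr)$, where the sum runs over the $k$ nearest training points, with $x_i$ itself included when predicting at $x_i$. Let $R$ be the noisy region, where $\eta(x)=\Pr[y=1\mid x]\in[\eta_0,1-\eta_0]$ for some $\eta_0>0$, and where the feature density and the value distribution restricted to $R$ are nondegenerate. Fix a training point $i$ with $x_i\in R$. Point $i$ pays a positive amount exactly when $\alloc_i(0;\bids_{-i})=0$ and $\alloc_i(v_i;\bids_{-i})=1$, by Eq.~\eqref{eq:payment_as_nested_argmin}. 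Its critical bid is then $\bcrit_i = \sum_{j\in N_k(x_i)\setminus\{i\}} -y_iy_j v_j$, i.e.\ the weighted margin against $i$'s label among its other $k-1$ neighbors. So $i$ pays $\bcrit_i$ whenever $0<\bcrit_i<v_i$, giving the event
\[
E_i=\Bigl\{0 < \textstyle\sum_{j\in N_k(x_i)\setminus\{i\}} -y_iy_jv_j < v_i\Bigr\}.
\]
One could alternatively drop self-inclusion and use only the neighbors' sum. I will commit to self-inclusion, since that is what makes one's own bid matter.

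\textbf{Lower-bounding $\Pr[E_i]$.} First, with probability $\Pr[x\in R]>0$ the point $x_i$ falls in the interior of $R$. As $\datasize\to\infty$ with $k$ fixed, the $k-1$ nearest neighbors of $x_i$ lie in $R$ with probability tending to $1$. Conditioned on the neighbor locations, their labels are independent with $\eta\in[\eta_0,1-\eta_0]$ and their values are i.i.d. The signed sum $S=\sum -y_iy_jv_j$ is then a sum of $k-1$ independent nondegenerate terms. I need $\Pr[0<S<v_i]\ge c_0$, uniformly in $\datasize$. Taking values in $[0,\vmax]$ with, say, a positive-density part, the event that a near-tie occurs (e.g.\ a slight majority against $i$ with small enough total imbalance) has probability at least $\eta_0^{k-1}$ times a constant. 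On that event $S$ lands in $(0,v_i)$ with constant probability, and the payment is at least a constant, e.g.\ conditioning on $S\ge \vmax/4$. The phrase ``sufficiently large constant $k$'' suggests the authors instead use anti-concentration via a CLT or Berry--Esseen argument over the $k-1$ neighbors to make the tie window probability comparable to $v_i/\sqrt{k}$. I would try the direct small-$k$ combinatorial argument first and fall back to the CLT argument.

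\textbf{Main obstacle.} The hardest step is making the conditioning rigorous. Neighbor labels given positions are independent, but neighbor positions depend on the sample, and I must ensure the neighbors lie in $R$ with probability bounded below uniformly in $\datasize$. I would handle this by a ball argument: choose $B(x_i,r)\subset R$. The number of other samples in the ball is $\mathrm{Bin}(\datasize-1,p_r)$, which exceeds $k$ with probability tending to $1$. Given that, the $k-1$ nearest neighbors lie in the ball, and their $(y,v)$ are conditionally i.i.d.\ from the region's conditional law. Summing $\Pr[E_i]\cdot c$ over $i$ then yields the linear bound.
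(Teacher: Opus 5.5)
Your proof is correct, but it handles the key anti-concentration step differently from the paper.

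\textbf{The paper's route.} The paper intersects four events. The point $x_i$ is $\varepsilon$-close to $x_0$; its $k-1$ other neighbors lie near $x_0$; their labels realize the value-dependent \emph{optimal} sign pattern $\argmin_{\sigma}\bigl|\sum_j \sigma_j w_j' + M/2\bigr|$, where $M$ is the median value; and $w_i$ exceeds the resulting margin. It invokes the probabilistic partitioning theorem of \citet{karmarkar1986probabilistic} to show that, once $k$ is large, this optimal pattern puts the neighbors' signed sum in $(M/4,3M/4)$ with probability at least $1/2$. Label noise then realizes that pattern with probability at least $\eta^{k-1}$, giving $\Omega(\datasize)$ payers who each pay at least $M/4$.

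\textbf{Your route.} You fix a near-tie sign pattern in advance: one unpaired vote against $y_i$ plus value-cancelling pairs when $k-1$ is odd, or all pairs with one slightly unbalanced when $k-1$ is even. A two-vote majority can fail when values are concentrated. You then use a positive-density lower bound on values to make the residual land in $[c,v_i)$.

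\textbf{What each buys.} The paper pays the same factor $\eta^{k-1}$ to realize its pattern as you do. Choosing the optimal pattern therefore only improves the $k$-dependence of the constant: the value part is at least $1/2$ rather than exponentially small in $k$. That is irrelevant for $\Omega(\datasize)$ at fixed $k$. Your argument is more elementary and works for every fixed $k\ge 2$. It also needs a weaker assumption than the characteristic-function condition of \citet{karmarkar1986probabilistic}, and it shows that the ``sufficiently large $k$'' hypothesis is an artifact of the technique. The paper's route, in exchange, gives a clean explicit constant, $\delta^k\eta^{k-1}M/16$. Your binomial ball argument for neighbor locations is also cleaner than the paper's $\delta^{k-1}$ step. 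As written, that step only places the neighbors within $3\varepsilon$ of $x_0$, not $2\varepsilon$.

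\textbf{Details to tighten.}
\begin{enumerate}
\item Conditioning on positions makes the neighbors' $(y_j,v_j)$ independent but not identically distributed, because the nearest neighbors are not fresh draws from the region. You therefore need the noise bound $\eta(x)\in[\eta_0,1-\eta_0]$ and the value-density bound to hold uniformly for $x$ in the ball. The paper's ball-averaged noise assumption has the same looseness.
\item The payment floor should be a constant $c$ tied to the interval on which the value density is bounded below. The floor $S\ge \vmax/4$ may be incompatible with $S<v_i$.
\item Drop the CLT fallback. It is unnecessary, and it would not work as stated: the Berry--Esseen error $O(1/\sqrt{k})$ is of the same order as the Gaussian mass of a constant-width window, and for $\eta(x)\neq 1/2$ the signed sum drifts by order $k$, so the window lies in the large-deviation regime.
\end{enumerate}
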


Formal statement and proof in Appendix~\ref{appendix:knn}. The proof leverages a classic result from the theory of \emph{probabilistic optimal partitioning} \citep{karmarkar1986probabilistic}. Intuitively, consider a set of $k$ scalars $w_1,\dots,w_k$ sampled from a sufficiently-smooth distribution. The surprising result of \citet{karmarkar1986probabilistic} shows that with probability at least $1/2$ over the random draw and for any constant $\alpha$, there exist $\sigma_i\in\Set{-1,1}$ such that $\Abs{\sum_{i\in[k]} w_i \sigma_i - \alpha}=O(\sqrt k/2^k)$. To prove \Cref{thm:knn_lower_bound}, first observe that a data point pays if the weighed sum of its neighbors is opposite in sign but smaller in magnitude compared to its own weight. Our proof begins by showing that for any sampled feature vector $x$, there is positive probability independent of the dataset size that it falls within a region with label noise, and its $k$ neighbors fall within this region as well. Due to label noise, there is positive probability that the neighbor labels attain the bound guaranteed by probabilistic optimal partitioning. Jointly, the lower bounds show that any sampled feature vector $x$ pays a non-negative amount in expectation, and thus the overall sum of payments grows linearly with $\datasize$.

Finally, combining \Cref{thm:svm_upper_bound,thm:knn_lower_bound}, we obtain total payment discrepancy under mild regularity and label noise:
\begin{corollary}[Asymptotic discrepancy; Informal]
For distributions with bounded features and valuations, non-trivial signal and some regions with some label noise, it holds that $\rev_\datasize(\Learner_\mathrm{lin})=O(1)$ whereas $\rev_\datasize(\Learner_\text{\knn})=\Omega(\datasize)$.
\end{corollary}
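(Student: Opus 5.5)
The plan is to derive this corollary as a direct conjunction of \Cref{thm:svm_upper_bound} and \Cref{thm:knn_lower_bound}. The main work is to check that one family of distributions satisfies the hypotheses of both theorems at once. Concretely, I would fix a distribution $\dist$ over $(x,y,v)$ with the following properties:
\begin{itemize}
\item bounded features, $\|x\|_2 \le \kappa$, with bounded density;
\item bounded values, $v \in [0,\vmax]$;
\item non-trivial signal strength, meaning the population-optimal regularized linear predictor is nonzero;
\item a bounded region $B \subseteq \X$ of positive probability mass on which $\prob{}{y=1 \mid x}$ lies strictly inside $(0,1)$.
\end{itemize}
The first three conditions are exactly the regularity assumptions stated before \Cref{thm:svm_upper_bound}. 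The fourth is the label-noise assumption of \Cref{thm:knn_lower_bound}.

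For the upper bound, I would take $\Learner_\mathrm{lin}$ to be $L_2$-regularized risk minimization with a convex, $\sigma$-Lipschitz loss that is differentiable near the origin. \Cref{lemma:weighed_svm_stability}, together with its Lipschitz extension, gives individual stabilities $\beta_i \le v_i\sigma\kappa^2/\lambda\datasize$. By \Cref{lem:which_point_pay}, only points with margin in $(0,\beta_i]$ can pay, and each such payment is at most $\vmax$ by individual rationality. Summing over the $\datasize$ training points and using bounded density together with the tail bound on $\|f\|$ from the formal version of \Cref{thm:svm_upper_bound}, the expected number of payers is $O(\datasize\cdot 1/\datasize)=O(1)$. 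Hence $\rev_\datasize(\Learner_\mathrm{lin})=O(1)$.

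For the lower bound, I would apply \Cref{thm:knn_lower_bound} directly with a sufficiently large constant $k$. That theorem gives a per-point payment probability $c>0$, independent of $\datasize$, so the expected payment is $\Omega(\datasize)$.

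The main obstacle is compatibility of the assumptions. One must show that label noise in $B$ does not destroy non-trivial signal, and that the bounded-density condition does not conflict with the existence of the noisy region. I would handle this with an explicit witness family: a distribution that is mostly linearly separable with a margin, mixed with a small-mass region where labels are noisy. I would then argue that for a small enough mixing weight, the population minimizer stays close to the separable one and therefore remains nonzero. With such a family in hand, both bounds apply to the same $\dist$ and the discrepancy follows.
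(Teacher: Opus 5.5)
Your proposal is correct and follows the paper's own argument, which obtains the corollary simply by applying \Cref{thm:svm_upper_bound} and \Cref{thm:knn_lower_bound} to a distribution satisfying both sets of hypotheses. The compatibility/witness construction you single out as the main obstacle is not logically needed, since the statement quantifies only over distributions already assumed to meet all the conditions, so it merely shows the statement is non-vacuous. For the formal \knn{} bound, however, your fourth condition should also include the smoothness assumption on the valuation distribution and the requirement that the label noise lie in $(\eta,1-\eta)$ conditionally on $w$.
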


In this context, we also note that $L_2$-regularized linear classifiers have bounded classification stability \citep{bousquet2002stability}, while \knn{} classifiers satisfy a weaker notion of hypothesis stability \citep{devroye1979distribution}. 
The question of whether notions of algorithmic stability can directly characterize
the induced 
auction welfare is an intriguing direction for future inquiry.

\section{Algorithms} \label{sec:payments}
We now turn to the question of how to compute payments.
By Eq.~\eqref{eq:payment_as_nested_argmin}, %
payments require solving
a nested optimization problem,
since the allocation $\alloc_i(z;\bids_{-i})$ in the constraints
is determined by the learned classifier $\hhat$,
which in itself is the argmin of $\O(z,\bids_{-i})$
(Eq.~\eqref{eq:alloc_by_learning}).
Solving such problems directly is notoriously hard,
despite the simple (linear) objective,
and even when the nested learning objective is convex\footnote{See \citet{dempe2002foundations}. We note that payment computation for \knn{} classifiers is simpler due to their localized nature, and given by a closed form formula. See Appendix \ref{subsec:knn_monotonicity}.}.
We therefore resort to other techniques for computing user payments,
and give several alternative algorithms differing
in how they balance runtime with probabilistic guarantees.
The algorithms we present apply to any computable learning objective $\O$.
However, 
and leveraging
our results from Sec.~\ref{sec:analysis},
we show that
stable algorithms also provide significant runtime improvements.

\begin{algorithm}[tb]
\caption{\texttt{ComputeAllPaymentsExact}}
   \label{algo:compute_all_payments}
\begin{algorithmic}[1]
    \item [] \textbf{Input:} Data $\smplst = \{(x_i, y_i)\}_{i=1}^m$,
   bids $\bids$, %
   $\beta$-class. stable objective $\O(h;\w)$,
   constant $c \ge \frac{\kappa^2\sigma}{2\lambda}$,
   toler. $\tau$, apx. $\epsilon$
    \STATE initialize $\payments = \bm{0}$
    \STATE solve $\fhat = \argmin_{f \in F} \O(h_f;\bids)$
    and set $\hhat = h_\fhat$
    \STATE %
    $\mu_i \leftarrow y_i \fhat(x_i)  \quad \forall i \in \smplst$
    \algocmnt{compute margins}
    
    \STATE %
    $\beta_i \leftarrow c \cdot b_i %
    \quad \forall i \in \smplst$
    \algocmnt{relevance thresholds}
    \FOR{$i=1, \dots,m$}
    \IF{$\mu_i \in [0, \beta_i + \tol]$}
    \STATE solve $\hhat_0= \argmin_{h \in H} \O(h;(0,\bids_{-i}))$
    \IF{$a_i({\hhat_{0}}) = 0$}
    \STATE $\payment_i \gets \mathtt{BinarySearch}(\O(h;(\cdot,\bids_{-i})),[0,b_i],\epsilon)$ 
    \ENDIF
    \ENDIF
    \ENDFOR
    \STATE \textbf{return} $\payments = (\payment_1,\dots,\payment_m)$
\end{algorithmic}
\end{algorithm}

\paragraph{Efficient numerical approximation.}
Consider first a single user $i \in [m]$.
Since $\alloc_i$ is both binary and monotone in $z$,
it is in effect a step function
whose threshold lies at the critical bid,
i.e.,
$\alloc_i(z;\bids_{-i}) = \one{z \ge \bcrit_i}$.
Computing $\bcrit_i$ therefore amounts to finding the threshold of $\alloc_i$.
Since the domain of $z$ is bounded to $[0,b_i]$,
this can be done efficiently using binary search,
which guarantees an $\epsilon$-approximation of $\bcrit_i$
with {$k \ge \lceil \log_2 \frac{b_i}{\epsilon} \rceil$}
\
search steps for any $\epsilon$.
\squeeze

The above procedure can be repeated for all points to obtain the set of critical bids $\bcrit_1,\dots,\bcrit_m$.
However, since each step in each search requires solving a learning problem,
this can be quite prohibitive.
Luckily, 
classification-stable learning algorithms permit
a drastic reduction in effective runtime without compromising guarantees.
First, Lemma~\ref{lem:which_point_pay} implies that only points with
$y_i f(x_i) \in [0,\beta_i]$,
where $f$ is trained with $\O(\bids)$,
\emph{might} require computation;
all other points get $p_i=0$ by definition.
This reduces the number of candidate points possibly requiring search to $\mcnddt<m$.
For these candidate points, it suffices to first compute $a_i(0;\bids_i)$,
and proceed only if the result is 0
(since otherwise, no other $z \in [0,b_i]$ will flip the allocation).
\blue{In our experiments, the vast majority of candidate points terminate at this step, and so do not require search.}
\blue{For linear classifiers,}
Thm.~\ref{thm:svm_upper_bound} then ensures that $\mcnddt$ is constant and does not grow with $m$.
The result is that a full search is needed only for a small subset of all points.
Pseudocode for the full procedure is given in Algorithm~\ref{algo:compute_all_payments}.
\squeeze

For the SVM algorithm in particular, %
stability can further help reduce the runtime of each call to
$\svmobj(z;\bids_{-i})$.
\begin{lemma}
\label{lem:stability_through_out_points}
Let $f$ be the score function learned from $\svmobj(\bids)$,
and consider some user $i$.
For all $j \in [m]$ define $b^{(i)}_j=b_j$ if 
$y_j f(x_j) \leq 1 + 2\beta_i$
and 0 otherwise.
Then
$a_i(z;\bids_{-i})=a_i(z;\bids^{(i)}_{-i})$ for all $z$,
and 
$\payment_i(\bids)=\payment_i(\bids^{(i)})$.
\end{lemma}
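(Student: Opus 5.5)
The plan is to exploit the structure of the SVM optimum: points with hinge margin strictly above $1$ sit in the flat region of the hinge loss and contribute neither to the objective's subgradient nor to the solution. Write $f_z$ for the SVM score function learned from $\svmobj(z;\bids_{-i})$, and $\tilde f_z$ for the one learned from $\svmobj(z;\bids^{(i)}_{-i})$. The goal is to show $f_z=\tilde f_z$ for every $z\in[0,b_i]$. Then $a_i(z;\bids_{-i})=a_i(z;\bids^{(i)}_{-i})$ for all $z$. Since by Eq.~\eqref{eq:payment_as_nested_argmin} the payment depends only on the map $z\mapsto a_i(z;\cdot)$ on $[0,b_i]$, this also gives $\payment_i(\bids)=\payment_i(\bids^{(i)})$. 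Note that $b_i$ itself is unchanged, since $y_if(x_i)\le\beta_i\le 1+2\beta_i$ whenever $i$ could pay at all; the case $y_if(x_i)>1+2\beta_i$ can be handled directly by Lemma~\ref{lem:which_point_pay}, where both payments vanish.

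\textbf{Step 1: margins move little as $z$ varies.} For any $z\in[0,b_i]$, compare $f_z$ with $f=f_{b_i}$. Both differ from the score function trained with $w_i=0$ by at most $\beta_i$ at each $x_j$, by the individual stability bound $\beta_i\le b_i\sigma\kappa^2/\lambda m$ (the weighted analogue of Lemma~\ref{lemma:weighed_svm_stability}; the same argument bounds the effect of changing $w_i$ from $0$ to $z\le b_i$). By the triangle inequality, $|f_z(x_j)-f(x_j)|\le 2\beta_i$ for all $j$. Hence every $j$ with $y_jf(x_j)>1+2\beta_i$ satisfies $y_jf_z(x_j)>1$, i.e.\ it lies strictly outside the margin for every $z\in[0,b_i]$.

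\textbf{Step 2: zeroing such points preserves the optimum.} Fix $z$. The point $f_z$ satisfies the first-order (KKT) optimality condition for $\svmobj(z;\bids_{-i})$: $0$ lies in the sum of $\lambda\nabla R(f_z)$ and the weighted hinge subgradients. For each $j$ with $y_jf_z(x_j)>1$, the hinge loss is locally identically zero, so its subgradient is exactly $\{0\}$. Removing those terms, i.e.\ setting $b_j\mapsto 0$, leaves the optimality condition satisfied at $f_z$ for the modified objective $\svmobj(z;\bids^{(i)}_{-i})$. That objective is strictly convex because the $L_2$ regularizer makes it so, and it has a unique minimizer. Therefore $\tilde f_z=f_z$.

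\textbf{Main obstacle.} The delicate part is Step 1. The stability constant must be the individual one tied to $i$'s weight, and it must hold uniformly over intermediate weights $z$, not only for full removal. I would establish this by rerunning the proof of Lemma~\ref{lemma:weighed_svm_stability} with a weight perturbation $b_i-z$ in place of full removal, which gives a bound proportional to $|b_i-z|\le b_i$. I would also make sure that Step 2 is applied to the full linear class (an unconstrained convex program), so that local zero-subgradient arguments suffice without any constraint qualifications.
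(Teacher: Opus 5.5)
Your proof is correct and follows essentially the paper's route: the triangle inequality through the zero-weight solution, with individual stability at weights $b_i$ and $z\le b_i$, keeps every discarded point strictly outside the margin for all $z\in[0,b_i]$, after which such points cannot affect the optimum. This is, as in the paper, the range actually covered, and it is all the payment needs. Your Step~2 (subdifferential sum rule plus uniqueness of the minimizer of the strictly convex $L_2$-regularized objective) is a more rigorous version of the paper's informal ``zero slack, so changing the weight does not change the solution'' step, and your treatment of the case $y_if(x_i)>1+2\beta_i$, where $b^{(i)}_i=0$, fills in a detail the paper leaves implicit.
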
 
Proof in Appendix~\ref{appx:svm_discard}.
Lemma~\ref{lem:stability_through_out_points}
implies that all points $j$ with $b'_j=0$ can effectively
be removed from $\smplst$ for the purpose of computing prices and allocations.
As such points typically comprise the majority,
this has potential to significantly reduce the runtime of each call to $\O$.
In practice and for simplicity we replace $\beta_i$ in the condition
with the global $\beta$,
and for numerical stability add a small tolerance $\tol$.

\begin{table}[t!]
  \centering
  \caption{\textbf{Comparison of payment algorithms.}
  $\O$ is the learning objective, $\mcnddt$ the number of candidate points, $k$ the number of search steps,
  $N$ the number of (possible) random draws.
  Typically $N \ll k$, and we expect $\mcnddt \ll m$.
   For SVM in particular, $\mcnddt$ is constant,
   and calls to $\O$ require only a subset of data points.}
{
\setlength{\tabcolsep}{7.7pt}
\begin{tabular}{lllccc}
\toprule
\multicolumn{2}{l}{\textbf{algorithm}} & \textbf{calls to} $\O$ & \multicolumn{1}{l}{\textbf{DSIC}} & \multicolumn{1}{l}{\textbf{IR}} &  \multicolumn{1}{l}{$p_i{=}\bcrit_i$} \\
\midrule
\multicolumn{1}{r}{\multirow{3}[2]{*}{\begin{sideways}stable\end{sideways}}} & search  & $1+\mcnddt k$ & \checkmark & \checkmark &  \checkmark \\
  & random & $1+\mcnddt N$ & {$\mathbb{E}$} & \checkmark & $\mathbb{E}$ \\
  & one-shot & $N$ & {\checkmark} & \checkmark &  $\times$ \\
\midrule
\multicolumn{2}{l}{non-stable} & $1+mk$ & \checkmark & \checkmark &  \checkmark \\
\bottomrule
\end{tabular}%
}
  \label{tbl:payment_algorithms}%
\end{table}%

\paragraph{Randomized algorithm.}
One way to further reduce runtime,
and in particular the number of calls to $\O$ per candidate point,
is via randomization.
Consider a candidate user $i$, i.e.,
one whose critical bid $\bcrit_i$ requires explicit computation.
As observed by \citet{Archer2004Tardos},
$\bcrit_i$ can also be expressed as the expected (dis)allocation
of bids $z$ drawn at random from the uniform distribution over $[0,b_i]$:
\begin{equation}
\label{eq:crit_bid_as_expectation}
\bcrit_i = \expect{z \sim U(0,b_i)}{1-\alloc(z;\bids_{-i})}
\end{equation}
This follows since, as a step function with threshold at $\bcrit_i$,
the integral over $\alloc_i(z;\bids_{-i})$ in the range $z \in [0,b_i]$
is precisely $1-\bcrit_i$.
The implication is that estimating $\bcrit_i$
by computing $\O(z;\bids_{-i})$ for a \emph{single} $z \sim U(0,b_i)$ guarantees an expected payment
of $\payment(b_i;\bids_{-i})$ for user $i$,
and so maintains expected truthful reporting when users maximize expected utility;
see Algo.~\ref{algo:compute_payment_randomized} in Appendix~\ref{appx:additional_payment_algorithms}.
This approach requires a single call to $\O$,
rather than $O(\log(1/\epsilon))$ as with line search.
\blue{For stable algorithms, $b_i$ can be replaced with $\beta_i$ as before.}
\squeeze

The downside is that randomization comes at
the cost of variance in payments for users.
In some sense, this `shifts' compute costs incurred by the system
to 
uncertainty in payments
imposed on users. 
A better balance can be attained by drawing $N$ random bids and averaging,
which maintains expectation while reducing variance by $1/\sqrt{N}$.
The overhead is the need for $N$ calls to $\O$ per candidate user rather than one,
though as before, the number of such users is typically very small.
\squeeze

\paragraph{One-shot payments.}
A final alternative is to compute payments for all users together using a \emph{single} call to $\O$, but at the cost of both higher variance and a possible loss in expected welfare.
Using the method of \citet{Babaioff2015myersonpaymentcalculation},
users are randomly split into `paying' and 'rebate' groups.
Paying users are charged the maximal $\payment_i=b_i$ if $a_i=1$,
whereas rebate users with $a_i=1$ receive (random) negative payments.
All users with $a_i=0$ pay zero.
The mechanism is IR and truthful per instance.
The ratio of group sizes and randomness in payments are determined by global parameters
controlling the tradeoff between bias (in welfare) and variance (in payments).
Both can be reduced by averaging payments over $N$ trials.
See Algo.~\ref{algo:compute_all_payments_random} in Appendix~\ref{appx:additional_payment_algorithms}.
\squeeze

\section{Experiments} \label{sec:experiments}

In this section, we empirically explore our setting and approach.
We begin with experiments focusing on payments using synthetic data,
and then study the relation between welfare and accuracy %
using real and semi-synthetic data.
In all experiments we use learning algorithms that 
comply with our upper bound,
and the exact payment algorithm;
see Appendix~\ref{appx:additional_exps} for a comparison to the randomized algorithms and additional experiments.
Results are averaged over multiple random trials.
Full experimental details in Appendix~\ref{appx:exp_details}.
\squeeze

\subsection{Payments in accuracy auctions} \label{sec:exps_synth}

\paragraph{Number of paying users.}
Thm.~\ref{thm:svm_upper_bound} states that the number of paying users
is bounded by a constant.
Here we demonstrate this empirically on a simple classification task.
We use multivariate Gaussian class-conditional distributions
$P(x|y)=\N(\bm{\mu}_y,\sigma^2 I_d)$ with means $\bmu_y$ and isotropic covariance parameterized by $\sigma$,
and set $P(y=1)=P(y=0)=1/2$.
We use $d=16$, fix $\|\bm{\mu}_1-\bm{\mu}_0\|=0.5$, and
for simplicity set $v_i=1$ for all $i$.
The variance parameter $\sigma$ allows us to control the 
density of points in the interval $[0,\beta]$;
increasing $\sigma$ is expected to entail a larger number of candidate points, and likely more paying users.
Fig.~\ref{fig:synth_exps} (Left)
shows the number of paying users $\mpay$ for increasing
sample size $m$ across various $\sigma$.
As can be seen, $\mpay$ plateaus quickly for all $\sigma$,
where smaller $\sigma$ entails lower $\mpay$ and faster convergence.
Though differences in $\mpay$ across $\sigma$ may appear large in absolute terms,
these differences are negligible relative to the total sample size $m$,
even for moderate values.

\begin{figure}[t]
  \centering
  \includegraphics[
    width=0.49\columnwidth,
    trim=7 7 7 5,
    clip
  ]{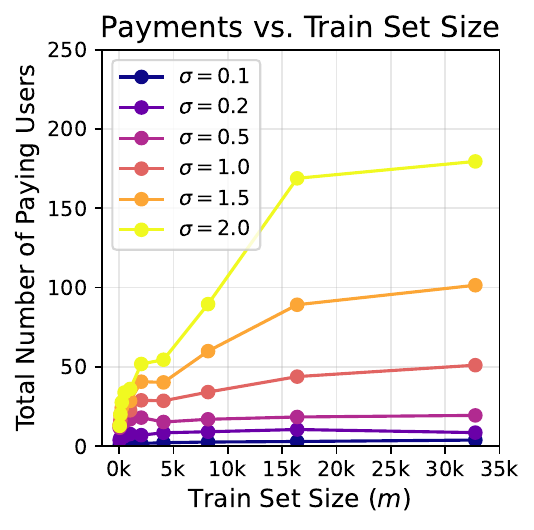}%
  \includegraphics[
    width=0.49\columnwidth,
    trim=7 7 7 5,
    clip
  ]{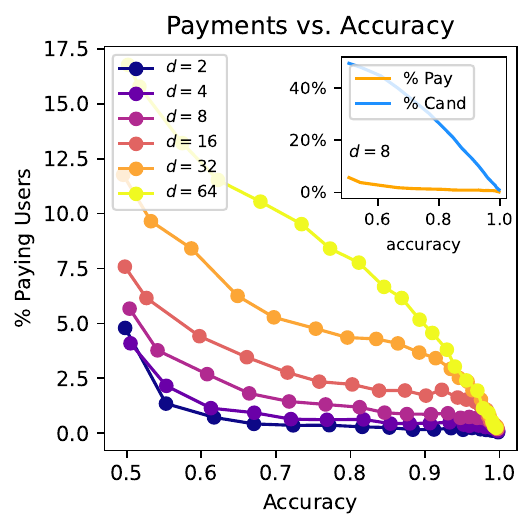}

\caption{
  User payments under class-conditional multivariate Gaussian data.
  \textbf{(Left)}
  Number of paying users for increasing sample size and increasing variance ($\sigma$).
  \textbf{(Right)}
  Percentage of paying users as a function of model accuracy across varying class-mean distances and for increasing dimension ($d$).}
  \label{fig:synth_exps}
\end{figure}

\paragraph{Accuracy vs. payments.}
In our previous experiments, increasing $\sigma$ also makes the data less separable, and hence reduces the maximal attainable accuracy.
Here we aim to isolate and control the effect of accuracy on $\mpay$.
We keep the same setup, 
but now use means $\bmu_y=(y\mu,0,\dots,0)$.
This makes the classification task uni-dimensional by construction 
(i.e., only $x_1$ correlates with $y$),
which allows us to control maximal accuracy by varying $\mu$
and with no dependence on $d$.
We use $m=500$ and fix $\sigma=1$, set $\lambda=1$,
and control class-mean distances by varying $\mu \in [0,6]$.

Fig.~\ref{fig:synth_exps} (Right)
shows the relation between accuracy and the number of payers $\mpay$
(out of $m$) for increasing $d$, where points on the plot are generated by
varying $\mu \ge 0$.
Results show that higher accuracy corresponds to lower payments across all $d$.
This aligns with the idea that accuracy is a scarce resource,
and the higher the scarcity (i.e., the lower the accuracy),
the higher the cost of truthful elicitation.
Note the trend is sharper trend for larger $d$.
The inset figure exemplifies for \blue{$d=8$} the number of paying vs. candidate users our algorithm identifies using Lemma~\ref{lem:which_point_pay}.
This suggests the constants in our upper bound are far from tight
for this data.
\squeeze

\begin{figure*}
  \centering
  \includegraphics[width = \textwidth]{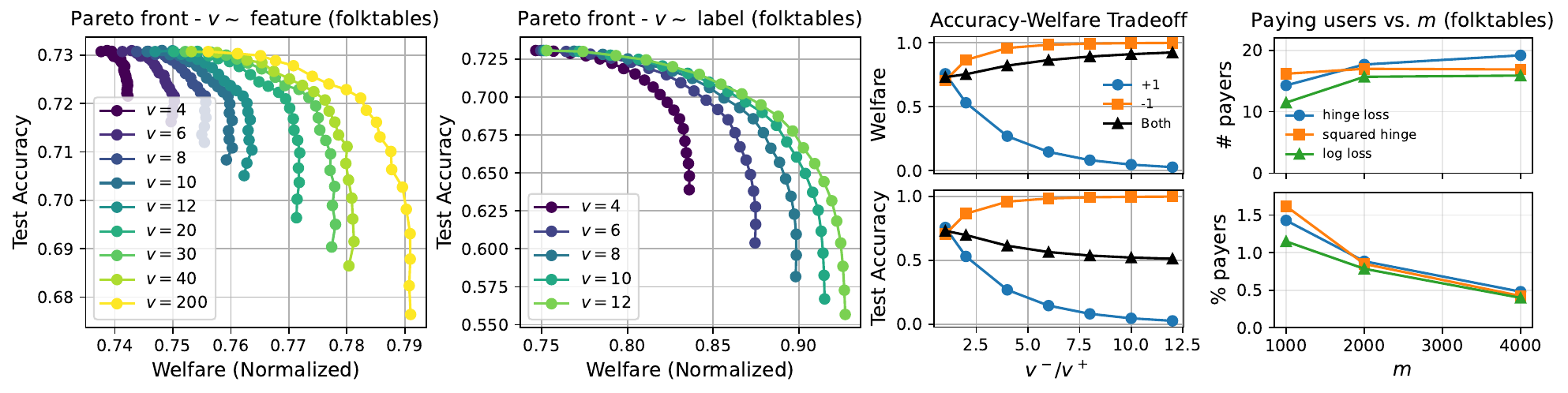}
  \caption{
  Welfare, accuracy, and payments on the \dataset{folktables} dataset.
  \textbf{(Left)}
  Tradeoff in accuracy vs. welfare under feature-based values
  for increasing valuation dispersion (controlled by $v_+$).
  \textbf{(Center-left)}
  Tradeoff in accuracy vs. welfare under label-based values
  for increasing valuation gap (controlled by $v_+$).
  \textbf{(Center-right)}
  For label-based values,
  per-class and overall welfare (top) and accuracy (bottom) when optimizing welfare ($\alpha=1$).
  \textbf{(Right)}
  For label-based values,
  number (top) and percentage (bottom) of paying users under different losses complying with Thm.~\ref{thm:svm_upper_bound} for increasing sample size $m$.
  }
  \label{fig:exps}
\end{figure*}

\subsection{Welfare, accuracy, and payments} \label{sec:exps_real}
To demonstrate the relation between welfare and accuracy
we use the publicly available
\folktables\ dataset \citep{ding2021retiring},
and focus on the ACSIncome task.
Values $v_i$ are set either according to the WKHP (usual hours worked per week past 12 months) attribute,
denoted $z$ and which is not used in $x$,
or to correlate with $y$.
We use flexible mappings 
$v(z)$ and $v(y)$
allowing to control the variance of values via a parameter $v_+$.
Full details in Appendix~\ref{appx:exp_details_real}.
\squeeze

\paragraph{Accuracy-welfare tradeoffs.}
Our first question considers the possible welfare gap due to
{\naive}ly optimizing (unweighted) accuracy vs. optimizing welfare directly.
To this end, we interpolate between these two extremes
by training with example weights $w_i = (1-\alpha) + \alpha v_i$
for a range of $\alpha \in [0,1]$,
where $\alpha=0$ recovers the accuracy objective
and $\alpha=1$ recovers welfare.
This simulates a setting where the system benefits from maximizing accuracy,
but also cares about welfare, whether endogenously or due to regulation.

Fig.~\ref{fig:exps} (Left)
shows Pareto curves comparing welfare and accuracy
of learned models across values of $\alpha$,
and for varying degrees of dispersion of $v(z)$.
As can be seen, 
welfare improves by up to $5\%$ %
when it is explicitly optimized.
This becomes more distinct when the variance in $v$ is larger
and weights are farther from uniform.
Note that even for large dispersion,
the curve is efficient in that
welfare can be significantly improved
with relatively little loss in accuracy.

\paragraph{Allocation.}
To better understand how welfare improves,
we examine a more controlled setting where $v=v(y)$,
allowing us to
examine how allocations from predictions
shift between user groups.
Specifically we set
$v(0)=1$ and $v(1)=v_+$,
and vary $v_+$.
Fig.\ref{fig:exps} (Center-left) shows that
the resulting Pareto curves follow generally similar trends,
albeit with much larger gaps in welfare ($+23\%$) and accuracy ($-21\%$).
Fig.\ref{fig:exps} (Center-right)
compares accuracy and welfare for users from each class and overall
under the welfare objective ($\alpha=1$).
As the gap in values between groups increases ($v_+$),
the number of correct predictions for positive (high-value) users increases,
and for negative (low-value) users decreases.
This naturally results in increasing overall welfare,
but also in lower accuracy.
Thus, using true values as example weights in the objective results in less, but better, allocations.

\paragraph{Payments.}
Fig.~\ref{fig:exps} (Right) shows the number (top)
and proportion (bottom) of paying 
users for increasing samples size $m$ across different loss functions: hinge, squared hinge, and log-loss, all of which admit the bound of Thm.~\ref{thm:svm_upper_bound}, though with different constants.
To accommodate for the effect of sample size on generalization, we set $\lambda=10^3/\sqrt{m}$,
and verify that this 
controls overfitting across all experimental conditions.
We observe that the number of paying users plateaus as $m$ increases, 
and remains generally very low (at most 20).
Similarly, proportion
decays 
from $1.5\%$ towards zero
as $m$ increases.
In Appendix \ref{increasing_m}, we show this trend holds broadly across additional 
regularization schemes.

\section{Discussion} \label{sec:discussion}
This work is motivated by the idea that machine learning can, and should,
be used to improve social welfare.
But while some aspects of this goal---such as optimization and generalization---are generally well-studied,
the economic aspects of this challenge remain underexplored.
Here we focus on the particular challenge that arises from the information gap due to user valuations being private,
and propose accuracy auctions as a possible approach for mitigating this gap.
Our results suggest that, for linear classification stable learning algorithms, the cost of welfare is quite low, given that the sum of payments is bounded by a constant.

An interesting future question is whether this holds also for some non-stable algorithms (e.g., decision trees or deep neural networks).
Another is whether the need for a payment-based mechanism can be relaxed, or even circumvented altogether.
Importantly, private information is but one way in which human agency, coupled with limited resources, introduces economic considerations into the learning objective.
We hope our work will motivate further discussion on how to design learning algorithms that maximize welfare in face of broader economic considerations.

\section*{Impact Statement} \label{sec:impact}
Our work introduces a novel learning setting where the goal is to maximize welfare by eliciting truthful reports of private user values.
To this end, and to enable tractable analysis, we make several key assumptions
on both the setup and user behavior.
First, we focus on cardinal welfare,
and in particular a utilitarian formulation.
Second, we  assume that user utility derives from individual prediction accuracy, with correct and incorrect predictions associated with fixed and predetermined individualized values.
Third, we assume that users are rational:
they know their valuations,
and act to maximize their utility.
These assumptions are crucial for guaranteeing that our proposed accuracy auction is incentive-compatible,
and the basis for requiring individual rationality.
Looking forward, we hope future work relaxes these assumptions
by considering broader social welfare functions
and more expressive utility models.

While the type of assumptions we make are common in economics,
it is well known that real users are rarely fully rational.
Human-subject experiments show that this 
holds even for truthful auctions,
and even when users are aware that truthful reporting is, by design, the optimal strategy \citep[e.g.,][]{noti2014experimental}.
Nonetheless, we are hopeful that in our setting,
incentive compatibility will still encourage truthful reporting,
perhaps due to the small number of users who ultimately pay.
We also hope that our method provides some robustness to boundedly rational behavior, but rigorous guarantees for such settings requires additional future work.
\squeeze

Maximizing welfare through our framework requires a capacity to collect payments from users,
either directly or indirectly through utility reduction.
This may not always be feasible, socially desirable, 
or---in some cases---ethical.
For example, if payments are monetary,
some users may lack the financial means to 
submit bids matching their true values,
raising potential socio-economic fairness concern.
Another concern is that, because payments depend on features,
whether a given user pays may correlate---or even result from---membership in a particular social group.
Thus, 
deciding whether or not to use accuracy auctions in a given domain requires careful deliberation.
In particular, it requires determining whether payments are appropriate and, if so, which implementation is most suitable.

Ideally, a policymaker should weigh the risks and benefits of a proposed approach, compare it with other available approaches, and choose the one that best serves the public interest.
For policymakers seeking to promote welfare using machine learning, accuracy auctions offer an alternative to an existing prevalent and widespread default approach, which is to maximize accuracy. As we argue, maximizing accuracy---by ignoring the private valuations and implicitly assuming uniform values---can lead to very low expected welfare.
Our approach offers policymakers a choice: maximize population accuracy without payments and hope that welfare is reasonable, or maximize population welfare while incurring the social cost of having some train-time users pay (in some form). Importantly, both approaches make some assumptions on human behavior and agency:
accuracy maximization assumes them away, while accuracy auctions model them explicitly. Each carries its own risks.

Myerson’s lemma implies that payments are necessary for maximizing welfare via truthful elicitation when users are rational \citep{hartline2008optimal}. In this context, and from a policy perspective, our results are encouraging:
only train-time users (possibly) pay; payments need not be monetary; linear classification-stable algorithms guarantee a constant number of paying users in theory; and simulated results suggest that empirically often only a handful do. Thus, even in domains where payments are undesirable, their extent is likely limited.
The flip side, however, is that limited payments can create inequity across users:
a small minority pays to enable high welfare for all.
This can be seen as a form of free riding.
One mitigation strategy is for the system to subsidize payments,
rather than collect them directly from users.
We hope our work motivates the development of future approaches that relax or eliminate payments, though doing so will require circumventing truthful elicitation, and may require relying on different, perhaps stronger, assumptions.

\section*{Acknowledgements}
The authors would like to thank Axel Niemeyer and anonymous reviewers for their insightful remarks and valuable suggestions.
Bana Sadi is supported by the Israel Council for Higher Education PBC scholarship for M.Sc. students in data science.
Eden Saig is supported by the Israel Council for Higher Education PBC scholarship for Ph.D. students in data science, and the Zuckerman STEM Leadership Program.
Nir Rosenfeld is supported by the Israel Science Foundation grant no. 278/22.

\bibliography{refs}
\bibliographystyle{icml2026}

\newpage
\appendix
\onecolumn
\section{Deferred Proofs}

\subsection{Preliminaries}

\newcommand{\valbound}{{\vmax}}
\newcommand{\svmw}{{\bm{\theta}}}
\newcommand{\sigstr}{{\bm{\mu}}}
\newcommand{\trainingset}{{\Set{(w_i,\x_i,y_i)}_{i=1}^\datasize}}
\newcommand{\dataavg}{{\frac{1}{\datasize}\sum_{i=1}^\datasize}}

For the proofs, we denote the space of features by $\Features = \Reals^d$, the space of binary labels by $\Labels=\Set{-1,1}$,
the class of hypotheses by $\Hypotheses\subseteq \Labels^\Features$, and the space of valuations by $\Valuations = [0,V]$. Denote the joint valuation-feature-label-valuation space by $\JointSpace=\Valuations\times\Features\times\Labels$, and let $D\in\DistOver{\JointSpace}$ be a distribution over feature-label-valuation triplets. We denote the training set size by $n$, and a training set drawn i.i.d. from the joint distribution by $S=\Set{(w_i,\x_i,y_i)}_{i=1}^\datasize\sim D^n$. A sample-weighted learning algorithm is a mapping $\Learner:\JointSpace^*\to\Hypotheses$, where $\JointSpace^*$ is the set of finite sequences from $\JointSpace$, formally $\JointSpace^*=\bigcup_{k=0}^\infty \JointSpace^k$. We denote by $h_S(x)$ the classifier learned using training set $S$. We denote inner products by $\InnerProd{\cdot}{\cdot}$, and vector norms by $\Norm{\cdot}$. To simplify notation, we assume the norm is $L_2$ unless stated otherwise. 

\subsection{Asymmetric Valuations} \label{appendix: asymmetric_valuations}

In \Cref{sec:setup}, we implicitly assume that bidding agents associate accurate prediction with value $v$, and inaccurate predictions with value $0$. However, in practical applications, the agent may assign non-zero utility to inaccurate predictions. Here we show that our auction model directly extends to settings with non-zero valuation for inaccurate predictions.

Formally, for a given agent, denote by $v^+\in\R$ their valuation of an accurate prediction, and denote by $v^- < v^+$ their valuation of an inaccurate prediction. The pseudo-linear utility of each agent, given in \Cref{sec:setup} by \Cref{eq:user_utility}, thus generalizes to:
\begin{equation*}
\begin{aligned}
\util(y,\yhat;v^+, v^-) 
&= v^+ \cdot \one{y = \yhat} + v^- \cdot \one{y \neq \yhat}
\\&= (v^+-v^-) \cdot \one{y = \yhat} + v^- 
\end{aligned}
\end{equation*}
This utility is pseudo-linear in the difference $(v^+-v^-)>0$, and shifted by an agent-dependent constant $v^-$ that does not depend on the allocation or payment.
Since dominant-strategy equilibria are invariant to constant shifts in payoffs, any auction that is DSIC with respect to valuations $\tilde v=(v^+-v^-)$ of accurate predictions and valuation of $0$ for inaccurate predictions will also be DSIC in the asymmetric valuations case. 
Asymmetric valuations therefore induce a \emph{single-parameter} auction setup if the agents are asked to report their marginal difference, and the properties of the auction are identical to the properties of the auction defined in \Cref{sec:setup}. In particular, we also note that welfare-maximizing allocation are also invariant to constant shifts in agent utility due to linearity of the welfare objective.

\subsection{Score-Based Classification}
Score-based classifiers make predictions by thresholding a scalar score function. Formally, given a score function $s:\Features\to\Reals$, a score-based classifier is a function of the form:
\begin{equation}
h(x)=\sign \left(f(\x)\right)    
\end{equation}
Common examples include linear classifiers corresponding to $f(\x)=\InnerProd{\svmw}{\x}$ for a feature vector $\svmw$ (discussed in \Cref{appendix:stable_algorithm_auctions}), weighted nearest neighbors  corresponding to $f(\x)=\sum_{(w_i,\x_i,y_i)\in N_k(\x)} w_i y_i$ (discussed in \Cref{appendix:knn}), kernel support vector machines, and binary classification neural networks $\phi:\Features\to\Reals^2$ corresponding to $f(\x)=\phi_1(\x)-\phi_0(\x)$.

\subsection{Score-Based Empirical Risk Minimization}
Score-based empirical risk minimization learning algorithms optimize empirical loss over a class of score functions $\Scores\subseteq \Reals^\Features$. Given a score function $f$ and feature-label pair $(\x,y)$, we denote the loss function by $\loss(t)$, where $t=\loss(f(\x)\cdot y)\in\Reals$.
For score-based classifiers as defined above, we assume that the margin-based loss $\loss$ is monotonically decreasing with $t$.
Note that this captures the common binary classification loss functions, including the 0-1 loss $\loss_\mathrm{01}(t)=\Indicator{t\le 0}$, the hinge loss $\loss_\mathrm{hinge}(t)=\max(0,1-t)$ and the logistic loss. 
With these definitions, we formally define the score-based learning rules:
\begin{definition}[Empirical regularized classification score loss]
\label{def:classification_score_loss}
    Let $s\in\Scores$ be a score function, let $\trainingset\in\JointSpace^\datasize$ be a weighted training set,
    let $\loss:\Reals\to\Reals$ be a loss function,
    and let $\Regularizer:\Scores\to\Reals$ be a regularizer. 
    The empirical regularized classification score loss of $f$ is:
    \begin{equation}
    \label{eq:score_function_empirical_loss}
    \Loss(f) = \dataavg w_i \loss\left(f(\x_i)\cdot y_i\right) + \lambda \Regularizer(f)
    \end{equation}
\end{definition}
\begin{definition}[Score-based empirical risk minimization algorithm]
    Let $\Scores$ be a class of score functions, and let $\Loss$ be a classification loss as defined above.
    A score-based learning algorithm outputs a loss-minimizing score function:
    \begin{equation}
    \label{eq:score_based_learning_algorithm}
    f^*\in\argmin_{f\in \Scores} L(f)
    \end{equation}
\end{definition}
In the context of accuracy auctions, weights in \Cref{eq:score_function_empirical_loss} are given by bids (as in \Cref{eq:alloc_by_learning}), $w_i=b_i$. We denote the learned score function by $f^*(x;\bids)$, and the allocation rule induced by the learning algorithm is:
\begin{equation}
\label{eq:score_function_allocation_rule}
a_i(\bids)=\Indicator{\sign\left(f^*(\x;\bids)\right)=y}    
\end{equation}

\subsection{Monotonicity of Score-Based Regularized Empirical Risk Minimization}
\label{appendix:proof_of_monotonicity_theorem}

\begin{proof}[Proof of \Cref{thm:monotonicity}]
Consider a score-based learning algorithm with a monotonically-decreasing score function $\loss(t)$. Fix a non-weighted training set $\Set{(\x_i,y_i)}_{i=1}^\datasize$, fix an arbitrary coordinate $i\in[\datasize]$, and fix the bids of all bidders except for agent $i$, denoted by $\bids_{-i}$. 
Let $s\in\Scores$ be a score function, and denote by $\Loss(b_i;f)$ the classification score loss of score function $f$ when the bids profile is $(b_i,\bids_{-i})$ (see \Cref{def:classification_score_loss}). By \Cref{eq:score_function_empirical_loss}, observe that the loss $\Loss(b_i;f)$ is a linear function of $b_i$, with slope given by $\frac{1}{\datasize}\loss\left(f(\x_i)\cdot y_i\right)$, and offset $c = \frac{1}{\datasize}\sum_{i' \neq i} b_{i'} \loss((f(\x_{i'})\cdot y_{i'}) + \lambda \Regularizer(f)$. 

For each $b_i$, denote the loss of the optimal score function by $\Loss^*(b_i) = \min_{f\in \Scores} \Loss(b_i;f)$. The loss of the optimal score $\Loss^*(b_i)$ is a minimum over functions linear in $b_i$, and is therefore concave. 
Denote by $f^*(\x_i;b_i)$ the score function learned by the algorithm, as given by \Cref{eq:score_based_learning_algorithm}, and note that the slope at $b_i$ is $\loss\left(f^*(\x_i;b_i)\cdot y_i\right)$ by definition.
Since the slope of concave functions is monotonically non-decreasing, the slope of $\Loss^*(b_i)$ is monotonically non-increasing in $b_i$.
Furthermore, since the loss function $\loss$ is decreasing, it therefore holds that the product $f^*(\x_i;b_i)\cdot y_i$ is non-decreasing with $b_i$. 
Finally, as $y_i$ is fixed, the score function $f^*(\x_i;b_i)$ is non-decreasing with $b_i$ for $y_i=1$, and non-increasing with $b_i$ for $y_i=-1$, and therefore the accuracy allocation rule induced by the score-based algorithm is monotone as required.
\end{proof}

\begin{remark}[Extension of \Cref{thm:monotonicity} to a wider class of threshold-value loss functions]
In the proof of \Cref{thm:monotonicity}, we assume that $\loss(t)$ is monotonically decreasing with $t$, as it captures the common loss functions used in theory and practice, such as the 0-1, hinge, and binary cross-entropy loss functions. However, we note that a similar proof technique also applies for a wider family of loss functions,  namely those for which there exist a separating value $\alpha\in\Reals$ such that $\loss(t)\ge \alpha$ for all $t\le0$, and $\loss(t) \le \alpha$ for all $t>0$. Observe that this class of functions captures monotone functions as well.
To prove this, let $i\in[n]$, and assume without loss of generality that $y_i=1$. The proof of \Cref{thm:monotonicity} shows that $\loss(f^*(x_i;b_i)\cdot y_i)=\loss(f^*(x_i;b_i))$ is non-increasing in $b_i$, and therefore the loss $\loss(f^*(x_i;b_i))$ may cross the threshold value $\alpha$ at most once. If a crossing point does not exist, then the accuracy allocation is constant and trivially monotone. Otherwise, denote the crossing point by $b^*_i$. By definition it holds that $f^*(x_i;b_i)\le 0$ for all $b_i\le b^*_i$, and $f^*(x_i;b_i)\ge 0$ for all $b_i > b^*_i$, and therefore the accuracy allocation is monotone as required.
\end{remark}

\smallskip
\begin{remark}[Allocation monotonicity in multiclass classification]
In multi-class settings, the set of labels $\Labels$ is discrete with cardinality possibly larger than 2, and score-based classifiers are commonly defined as $h(\x)=\argmax_{y\in\Labels} f_y(\x)$. The proof of \Cref{thm:monotonicity} relies on the assumption that a point is classified correctly if and only if its loss is below a certain threshold. In a multi-class setting, this property which holds for the multi-class 0-1 loss, but does not always hold for smooth loss functions such as multi-class negative log likelihood (NLL). 
To illustrate this, consider a three-class setting with two probabilistic classifiers and two data points. Assume the labels are $y_1=1$, $y_2=2$, and the score are given by:
$$
\begin{matrix}
f_1(\x_1)=(1,0,0) & f_1(\x_2)=(0.3,0.4,0.3)
\\
f_2(\x_1)=(0.01,0.99,0) & f_2(\x_2)=(0,0.45,0.55)
\end{matrix}
$$
Consider the NLL loss $-\log(p_y)$.
For $b_2=0$, NLL is minimized by $f_1$ which predicts $\x_2$ correctly, but for sufficiently large $b_2$ we learn $f_2$ which predicts $x_2$ incorrectly, and the allocation is not monotone.
Guaranteeing monotonicity in multi-class accuracy allocations is an intriguing direction for future work.

\end{remark}

\subsection{Total Payments and Stability in $L_2$-regularized Linear Classification}
\label{appendix:stable_algorithm_auctions}

\subsubsection{Regularized Linear Classification Preliminaries}
\label{appendix:linear_classifier_preliminaries}
In this section, we focus on linear classifiers $h(\x)=\sign\left(\InnerProd{\svmw}{\x}\right)$ trained with admissible loss functions smooth near the origin, and $L_2$ regularization. Formally, given a dataset $S=\trainingset$, a loss function $\ell(t)$, and a regularization parameter $\lambda \ge 0$, the training loss is:
\begin{equation}
L(\svmw; S)=
\dataavg
w_i \ell\left(\InnerProd{\svmw}{\x_i}\cdot y_i\right) + \lambda\Norm{\svmw}_2^2
\end{equation}

Here we assume that that $\ell(t)$ is convex and $\sigma$-Lipchitz with respect to $t=\InnerProd{\svmw}{\x}\cdot y$.
For smoothness near the origin, we assume that $\ell(t)$ has a $K_\ell$-Lipchitz first derivative $\ell'(t)$ all $t\in(-1,1)$. We denote the derivative of $\ell$ at the origin by $c_\ell=-\ell'(0)$, and focus on loss functions which satisfy $c_\ell > 0$. Loss functions satisfying these properties include:
\begin{itemize}
    \item The hinge loss $\ell_\mathrm{hinge}(t)=\max\Set{0,1-t}$ 
    with Lipchitz constant $\sigma=1$,
    derivative magnitude $c_\ell=-\ell'_{\mathrm{hinge}}(0)=1$ at the origin,
    and first-derivative Lipchitz constant $K_\ell=1$ for $t\in(-1,1)$. Hinge loss is a canonical loss function for the Soft-SVM algorithm \citep{cortes1995support} -- See, e.g., \citet[Section 15.2]{shalev2014understanding}.
    \item The log loss
    $\ell_{\mathrm{log}}(t)=\log\left(
    1+\exp(-t)
    \right)$ 
    with constant $\sigma=1$,
    $c_\ell=-\ell'_{\mathrm{hinge}}(0)=0.5$,
    and first-derivative Lipchitz constant $K_\ell=0.25$ for $t\in(-1,1)$. Log loss is the canonical loss for the Logistic Regression algorithm -- See, e.g., \citet[Section 9.3]{shalev2014understanding}.
    \item The squared hinge $\ell_{\mathrm{hinge}}^2(t)$ for bounded data and features, with $\sigma$ depending on the data radius, $c_\ell=2$, and $K_\ell=2$.
\end{itemize}

\paragraph{Learned classifier.}
We denote the learned feature vector by:
$$\svmw_S=\argmin_\svmw L(\svmw;S)$$
and the corresponding decision function by: 
$$
f_S(\x) = \InnerProd{\svmw_S}{\x}
$$
Given distribution $D$, we denote the classifier learned in the population-limit ($n \to \infty$) by $\svmw_D$, and assume that $\svmw_D\neq 0$. 

\paragraph{Aggregate loss gradient.}
In regions where $\ell(t)$ is differentiable, the gradient components of $L(\svmw;S)$ are given by:
\begin{equation}
\label{eq:admissible_loss_gradient}
\left(\nabla_\svmw L\right)_k 
= \frac{\partial L}{\partial \theta_k} = \dataavg 
w_i \cdot
\left(\left.\FullDerivative{\ell}{t}\right|_{t=\InnerProd{\svmw}{\x_i}\cdot y_i}\right)
\cdot x_{i,k} \cdot y_i
+2\lambda \theta_k
\end{equation}

\subsubsection{Signal Strength}

Our bounds depend on \emph{weighed signal strength}, a data-dependent quantity which will be related the magnitude of the learned feature vector $\svmw_S$:

\begin{definition}[Weighted signal strength]
\label{def:signal_strength}
Given a distribution $D\in\DistOver{\JointSpace}$, the population signal strength is:
\begin{align*}
\sigstr_D 
&= \expect{(\x,y,w)\sim D}{\x\cdot wy} 
\\&= \expect{(w,\x,y)\sim D}{w\x\mid y=1} \cdot\prob{}{y=1}
- \expect{(w,\x,y)\sim D}{w\x\mid y=-1} \cdot\prob{}{y=-1}
\end{align*}
given a training set $S\in\JointSpace^\datasize$, the empirical signal strength is defined analogously:
$$
\sigstr_S = \dataavg{\x_i \cdot w_i y_i} 
$$
\end{definition}

\begin{proposition}
\label{prop:signal_strength_at_origin_gradient}
    For any loss function $\ell(t)$ differentiable at the origin and any training set $S$, it holds that:
    $$
\left.\nabla_\svmw L(\svmw;S)\right|_{\svmw=0}
= -c_\ell \cdot \sigstr_S
$$
\end{proposition}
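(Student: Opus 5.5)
The plan is to evaluate the gradient formula \eqref{eq:admissible_loss_gradient} directly at $\svmw=0$. The first step is to justify that the formula applies there. At $\svmw=0$, every argument of the loss is $t_i=\InnerProd{0}{\x_i}\cdot y_i=0$ for all $i\in[\datasize]$. Since $\ell$ is assumed differentiable at the origin, each summand $w_i\,\ell(\InnerProd{\svmw}{\x_i}y_i)$ is differentiable in $\svmw$ at $\svmw=0$. This follows from the chain rule applied to the composition of the linear map $\svmw\mapsto\InnerProd{\svmw}{\x_i}y_i$ with $\ell$. The regularizer $\lambda\Norm{\svmw}_2^2$ is differentiable everywhere, so $L(\cdot;S)$ is differentiable at $0$. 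Moreover, its gradient there is given componentwise by \eqref{eq:admissible_loss_gradient}, even though $\ell$ may fail to be differentiable elsewhere (as with the hinge loss at $t=1$).

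Next I substitute $\svmw=0$ term by term. The regularization contribution is $2\lambda\theta_k=0$. In the data term, the derivative factor equals $\ell'(0)=-c_\ell$ for every $i$, independently of the sample. This gives
\begin{equation*}
\left(\nabla_\svmw L\right)_k\big|_{\svmw=0}=\dataavg w_i\,(-c_\ell)\,x_{i,k}\,y_i=-c_\ell\,\dataavg x_{i,k}\,w_i y_i .
\end{equation*}
Collecting the components into a vector yields $-c_\ell\cdot\dataavg \x_i w_i y_i=-c_\ell\,\sigstr_S$, by the definition of empirical signal strength in \Cref{def:signal_strength}.

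There is essentially no obstacle here. The only point requiring care is the differentiability justification in the first step. The assumption is differentiability of $\ell$ specifically at $0$, which is exactly the point where all margins collapse at $\svmw=0$. So no smoothness elsewhere is needed, and the formula holds for any training set $S$.
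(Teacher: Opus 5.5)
Your proposal is correct and follows the paper's proof. Like the paper, you evaluate the gradient formula at $\svmw=0$, where every margin is zero, so each derivative factor equals $\ell'(0)=-c_\ell$ and the $2\lambda\theta_k$ term vanishes, and then collect the components into $-c_\ell\,\sigstr_S$; your chain-rule argument that this formula is valid at the origin when $\ell$ is differentiable only at $0$ is a detail the paper leaves implicit.
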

\begin{proof}
Evaluating \Cref{eq:admissible_loss_gradient} at $\svmw=0$, we obtain:
\begin{equation}
\left.\frac{\partial L}{\partial \theta_k}\right|_{\svmw=0}
= -c_\ell \dataavg w_i x_{i,k} y_i
= -c_\ell \cdot \left(\sigstr_S\right)_k
\end{equation}
and aggregating the different vector components yields 
$
\left.\nabla_\svmw L(\svmw;S)\right|_{\svmw=0}
= -c_\ell \cdot \sigstr_S
$ as required.
\end{proof}

\begin{corollary}
$\svmw_S \neq 0$ if and only if $\sigstr_S \neq 0$.
\end{corollary}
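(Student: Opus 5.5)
This corollary follows from \Cref{prop:signal_strength_at_origin_gradient} together with the convexity of the training objective. The plan is to show separately that $\sigstr_S \neq 0$ forces $\svmw_S \neq 0$, and that $\sigstr_S = 0$ forces $\svmw_S = 0$.

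\emph{Step 1: $L$ is strictly convex.} Each term $w_i\ell(\InnerProd{\svmw}{\x_i}y_i)$ is a nonnegative multiple of a convex function composed with a linear map, so it is convex. The regularizer $\lambda\Norm{\svmw}_2^2$ is strictly convex for $\lambda>0$. Hence $L(\cdot;S)$ is strictly convex and coercive, so $\svmw_S$ exists and is unique. By first-order optimality for convex functions, $\svmw_S$ is characterized by $0\in\partial L(\svmw_S;S)$.

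\emph{Step 2: the point $\svmw=0$.} By assumption $\ell$ is differentiable near the origin. At $\svmw=0$ every argument $\InnerProd{0}{\x_i}y_i$ equals $0$, so $L$ is differentiable at $\svmw=0$ and its subdifferential there is a single point. \Cref{prop:signal_strength_at_origin_gradient} gives this point as $\nabla L(0;S)=-c_\ell\sigstr_S$.

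\emph{Step 3: both directions.} Since $c_\ell>0$, the gradient at the origin vanishes if and only if $\sigstr_S=0$.
\begin{itemize}
\item If $\sigstr_S\neq 0$, then $0\notin\partial L(0;S)$, so $0$ is not the minimizer and $\svmw_S\neq 0$.
\item If $\sigstr_S=0$, then $0$ is a stationary point of a convex function and therefore a global minimizer. By uniqueness, $\svmw_S=0$.
\end{itemize}

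The only delicate point is Step 2. For the hinge loss the kink is at $t=1$, so the loss is differentiable in a neighborhood of $0$ and the subdifferential at the origin really is the singleton from \Cref{prop:signal_strength_at_origin_gradient}. The log loss and squared hinge are differentiable everywhere near $0$ as well. The degenerate case $\lambda=0$ would lose uniqueness, but the setting assumes $\lambda>0$.
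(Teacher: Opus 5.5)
Your proof is correct and is the argument the paper leaves implicit: the corollary is read off from \Cref{prop:signal_strength_at_origin_gradient} via first-order optimality of the convex training loss, and your Steps 1--2 supply the details the paper omits, namely differentiability of $L$ at the origin and uniqueness of the minimizer. One small note: the paper's preliminaries nominally allow $\lambda \ge 0$, so $\lambda>0$ is an assumption you are rightly adding for the direction $\bm{\mu}_S = 0 \Rightarrow \bm{\theta}_S = 0$, not one the paper states; it is, however, implicit in the later stability bounds, which divide by $\lambda$.
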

\begin{remark}
The analogous argument also holds in the population limit, yielding $\svmw_D \neq 0 \leftrightarrow \sigstr_D\neq 0$. 
\end{remark}

\begin{lemma}
\label{prop:boundary_prob_bound}
Let $\svmw \in \Reals^d $
be a vector of weights with lower-bounded norm $\Norm{\svmw} \ge B_\svmw$, and let $X\in\DistOver{\Features}$ be a distribution over features $\x\sim X$ with bounded support 
$\Norm{\x}\le B_x$ and probability density bounded by $M_x$. Let $a,b>0$. Then there exists a constant $C=C(B_x, B_\svmw, M_x)$ such that:
$$
\prob{\x\sim X}{\InnerProd{\svmw}{\x}\in\left[a,b\right]} \le C(b-a)
$$
\end{lemma}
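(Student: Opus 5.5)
We bound the probability of the slab $\Set{\x : \InnerProd{\svmw}{\x}\in[a,b]}$ by integrating the density bound $M_x$ over the part of the slab that meets the support ball. Let $u=\svmw/\Norm{\svmw}$ be the unit direction. Then $\InnerProd{\svmw}{\x}\in[a,b]$ if and only if $\InnerProd{u}{\x}\in[a/\Norm{\svmw},\,b/\Norm{\svmw}]$. This is a slab orthogonal to $u$ of width
\[
(b-a)/\Norm{\svmw}\le (b-a)/B_\svmw .
\]
Since the support of $X$ lies in the ball $\Set{\Norm{\x}\le B_x}$, it suffices to bound the Lebesgue measure of the intersection of this slab with the ball and multiply by $M_x$.

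Rotate coordinates so that $u$ becomes the first standard basis vector. Write $\x=(s,\x')$ with $s=\InnerProd{u}{\x}\in\Reals$ and $\x'\in\Reals^{d-1}$. By Fubini, the measure of the intersection is
\[
\int_{a/\Norm{\svmw}}^{b/\Norm{\svmw}} \mathrm{vol}_{d-1}\bigl(\Set{\x' : \Norm{\x'}^2\le B_x^2-s^2}\bigr)\,\d s .
\]
Each cross-section is contained in a $(d-1)$-dimensional ball of radius $B_x$. Its volume is therefore at most
\[
\omega_{d-1}B_x^{d-1},
\]
where $\omega_{d-1}$ is the volume of the unit ball in $\Reals^{d-1}$. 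Combining the pieces,
\[
\prob{\x\sim X}{\InnerProd{\svmw}{\x}\in[a,b]}\le M_x\,\omega_{d-1}B_x^{d-1}\,\frac{b-a}{B_\svmw},
\]
so we may take $C=M_x\omega_{d-1}B_x^{d-1}/B_\svmw$. This constant also depends on the ambient dimension $d$, which we treat as fixed, consistent with the statement. When $d=1$ the cross-section factor is simply $1$.

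There is no real obstacle here. The only points that need care are that a density exists, so the probability is at most $M_x$ times Lebesgue measure, and that the ratio $(b-a)/\Norm{\svmw}$ is controlled uniformly through the lower bound $\Norm{\svmw}\ge B_\svmw$. The assumption $a,b>0$ is not needed, but it is harmless. If $b<a$ the interval is empty and the bound holds trivially.
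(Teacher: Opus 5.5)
Your proposal is correct and follows the paper's proof essentially step for step: normalize the weight vector, dominate the density by $M_x$ times the indicator of the radius-$B_x$ ball, and integrate over a slab of width $(b-a)/\|\bm{\theta}\|$ whose cross-sections lie in a $(d-1)$-dimensional ball of radius $B_x$, which gives the same constant (the paper writes that radius as $B_w$, apparently a typo; your $B_x$ is the right one). One aside is wrong, though: if $b<a$ the right-hand side $C(b-a)$ is negative while the probability is $0$, so the bound does not hold ``trivially'' there, and the lemma implicitly requires $a\le b$.
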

\begin{proof}
By assumption, the density of $X$ is bounded by a constant $M_x > 0$. Define the function $g(\x)=M_x \Indicator{\Norm{\x}\le B_x}$, and note that $g(\x)$ upper bounds the probability density of $X$ by definition. Denote the normalized model parameters by $\hat\svmw$, such that $\svmw=\hat\svmw\Norm{\svmw}$. It holds that:
\begin{equation*}
\begin{aligned}
\prob{\x\sim X}{\InnerProd{\svmw}{\x}\in[a,b]} 
&= 
\prob{\x\sim X}{\InnerProd{\hat \svmw}{\x}\in\left[\frac{a}{\Norm{\svmw}},\frac{b}{\Norm \svmw}\right]} 
\\& \le \int_{\x\in{\Reals^d}} g(x) \Indicator{{\InnerProd{\hat \svmw}{x}\in\left[\frac{a}{\Norm{\svmw}},\frac{b}{\Norm \svmw}\right]}} \mathrm{d}\x
\\& \le \frac{b-a}{\Norm \svmw} \cdot M_x \cdot B_d
\end{aligned}
\end{equation*}
where $B_d$ is the volume of the $(d-1)$-dimensional ball of radius $B_w$. The final result is obtained by noting that $\Norm{\svmw}\ge B_w$ and setting $C=\frac{M_x \cdot B_d}{B_w}$.
\end{proof}

\begin{lemma}
\label{lemma:signal_strength_hoeffding}
Let $D$ be a distribution over $d$-dimensional vectors with bounded support $\Norm{\v}\le B$, and let $\v_1,\dots,\v_m \sim D$. Denote $\sigstr_D = {\expect{\v\sim D}{\v}}$ and $\sigstr_S = {\frac{1}{m}\sum_{i=1}^m \v_i}$. Then $\Norm{\sigstr_S}\ge \Norm{\sigstr}/2$ with probability at least $1-\exp\left(
-\frac{m\Norm{\sigstr_D}^2}{8B^2}
\right)$.
\end{lemma}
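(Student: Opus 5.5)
The plan is to reduce the statement to a scalar concentration inequality for the bounded random variable $\Norm{\sigstr_S}$, viewed as a function of the i.i.d.\ samples $\v_1,\dots,\v_m$. Two facts drive this. First, $\Norm{\sigstr_S}$ is concentrated around a quantity at least $\Norm{\sigstr_D}$ minus a small fluctuation. Second, by the reverse triangle inequality, $\Norm{\sigstr_S} \ge \Norm{\sigstr_D} - \Norm{\sigstr_S-\sigstr_D}$. So it suffices to show $\Norm{\sigstr_S-\sigstr_D}\le \Norm{\sigstr_D}/2$ with the stated probability. I will avoid a dimension-dependent union bound over coordinates, since the target bound has no $d$.

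The cleanest route is a one-dimensional projection. Let $\hat\sigstr = \sigstr_D/\Norm{\sigstr_D}$ (the case $\sigstr_D=0$ is trivial). Define the scalar random variables $Z_i=\InnerProd{\hat\sigstr}{\v_i}$. These are i.i.d., lie in $[-B,B]$ by Cauchy--Schwarz, and have mean $\InnerProd{\hat\sigstr}{\sigstr_D}=\Norm{\sigstr_D}$. Then
\[
\Norm{\sigstr_S}\ \ge\ \InnerProd{\hat\sigstr}{\sigstr_S} = \frac1m\sum_i Z_i .
\]
Hence $\Norm{\sigstr_S}<\Norm{\sigstr_D}/2$ implies $\frac1m\sum_i Z_i - \mathbb{E}Z < -\Norm{\sigstr_D}/2$. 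Hoeffding's one-sided inequality for variables with range length $2B$ gives
\[
\exp\!\left(-\frac{2m^2(\Norm{\sigstr_D}/2)^2}{m(2B)^2}\right)=\exp\!\left(-\frac{m\Norm{\sigstr_D}^2}{8B^2}\right),
\]
which is exactly the claimed bound.

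There is essentially no hard step here. The only point requiring care is the choice to project onto the fixed direction $\hat\sigstr$ rather than bounding $\Norm{\sigstr_S-\sigstr_D}$ directly. The direction must be deterministic, depending only on $D$, so that the $Z_i$ remain independent and Hoeffding applies. Projection also yields precisely the constant $8$ in the exponent. A vector-valued (McDiarmid) argument on $\Norm{\sigstr_S}$ would instead incur an additional expectation term of order $B/\sqrt m$.
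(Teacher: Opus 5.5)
Your proposal is correct and follows essentially the same route as the paper: both arguments project onto the deterministic unit vector $\sigstr_D/\Norm{\sigstr_D}$, use Cauchy--Schwarz to get $\Norm{\sigstr_S}\ge \frac1m\sum_i \InnerProd{\hat\sigstr}{\v_i}$ with summands in $[-B,B]$ and mean $\Norm{\sigstr_D}$, and then apply one-sided Hoeffding with deviation $\Norm{\sigstr_D}/2$ to obtain exactly $\exp\left(-\frac{m\Norm{\sigstr_D}^2}{8B^2}\right)$. The reverse-triangle-inequality framing in your first paragraph is not needed and is not what you actually execute; the projection argument alone suffices.
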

\begin{proof}
Let $\hat \sigstr_D = \frac{\sigstr_D}{\Norm{\sigstr_D}}$. 
By the Cauchy-Schwarz inequality, we obtain: 
$$
\frac{1}{m}\sum_{i=1}^m \InnerProd{\hat\sigstr}{\v_i} = \InnerProd{\hat\sigstr}{\sigstr_S}\le \Norm{\sigstr_S}
$$
and therefore:
$$
\prob{D}{\Norm{\sigstr_S}\le\frac{\Norm{\sigstr_D}}{2}}
\le 
\prob{D}{\frac{1}{m}\sum_{i=1}^m \InnerProd{\hat\sigstr}{\v_i}\le\frac{\Norm{\sigstr_D}}{2}}
$$
By definition and linearity of expectation, it holds that $\expect{\v_i}{\InnerProd{\hat\sigstr_D}{\v_i}}=\Norm{\sigstr_D}$.
Moreover, as $D$ has bounded support $\Norm\v \le B$ and $\hat\sigstr_D$ is normalized, it holds by Cauchy-Schwarz that $\InnerProd{\hat\sigstr_D}{\v_i}\in[-B,B]$. 
We can therefore apply Hoeffding's inequality on the variables $\InnerProd{\hat\sigstr_D}{\v}$ to obtain:
$$
\prob{D}{\frac{1}{m}\sum_{i=1}^m \InnerProd{\hat\sigstr}{\v_i}\le\frac{\Norm{\sigstr_D}}{2}}
\le \exp\left(
-\frac{m\Norm{\sigstr_D}^2}{8B^2}
\right)
$$
and hence:
$$
\prob{D}{\Norm{\sigstr_S}\ge\frac{\Norm{\sigstr_D}}{2}}
\ge 1-\exp\left(
-\frac{m\Norm{\sigstr_D}^2}{8B^2}
\right)
$$
as required.
\end{proof}

\newcommand{\wgradientlipchitzconstant}{W\cdot K_\ell\cdot B^2+2\lambda}
\newcommand{\wlowerboundconst}{\min\Set{\frac{1}{B},\frac{c_\ell \cdot \Norm{\sigstr_D}}{2\left(\wgradientlipchitzconstant\right)}}}
\begin{lemma}
\label{lemma:w_norm_lower_bound}
Consider a data distribution $D$ over a feature space $\JointSpace$ such that $\Norm{\x}\le B$ and $w\le W$, with signal strength $\sigstr_D\neq0$. Consider a loss function satisfying the assumptions above, and characterized by constants $c_\ell$ and $K_\ell$. Fix a regularization parameter $\lambda$. Then there exists a constant $C$ such that for any training set $S\sim D^\datasize$, it holds that: 
\begin{equation}
\label{eq:w_lower_bound_lemma_bound}
\Norm{\svmw_S}\ge \wlowerboundconst
\end{equation}
with probability at least $1-\exp\left(-C\datasize\right)$.
\end{lemma}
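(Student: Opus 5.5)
The plan is to combine the first-order characterization of $\svmw_S$ with a local Lipschitz bound on the gradient of $L(\cdot;S)$ near the origin. We then use \Cref{lemma:signal_strength_hoeffding} to control the gradient at the origin with high probability.

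\emph{Step 1 (high-probability event).} We apply \Cref{lemma:signal_strength_hoeffding} to the i.i.d.\ vectors $\v_i = w_i y_i \x_i$. Their mean is $\sigstr_D$ by \Cref{def:signal_strength}, and their norm is bounded by $WB$. With probability at least $1-\exp(-C\datasize)$, where $C = \Norm{\sigstr_D}^2/(8W^2B^2)>0$ since $\sigstr_D\neq 0$, we get $\Norm{\sigstr_S}\ge \Norm{\sigstr_D}/2$. We work on this event for the rest of the argument. By \Cref{prop:signal_strength_at_origin_gradient}, on this event
$$\Norm{\nabla_\svmw L(0;S)} = c_\ell\Norm{\sigstr_S}\ge c_\ell\Norm{\sigstr_D}/2 > 0.$$

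\emph{Step 2 (gradient is Lipschitz on a small ball).} Take any $\svmw$ with $\Norm{\svmw}<1/B$. By Cauchy--Schwarz, $|\InnerProd{\svmw}{\x_i}y_i|<1$ for every $i$. On $(-1,1)$ the loss is differentiable with $K_\ell$-Lipschitz derivative, so every term of \Cref{eq:admissible_loss_gradient} is differentiable there. For two such points $\svmw,\svmw'$, each data term changes by at most
$$w_i K_\ell\,|\InnerProd{\svmw-\svmw'}{\x_i}|\,\Norm{\x_i}\le W K_\ell B^2\Norm{\svmw-\svmw'}.$$
The regularizer gradient changes by $2\lambda\Norm{\svmw-\svmw'}$. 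Hence $\nabla L$ is $(\wgradientlipchitzconstant)$-Lipschitz on the open ball of radius $1/B$, and in particular
$$\Norm{\nabla L(\svmw;S)-\nabla L(0;S)}\le(\wgradientlipchitzconstant)\Norm{\svmw}.$$

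\emph{Step 3 (the minimizer lies outside the ball).} Let $r$ be the right-hand side of \Cref{eq:w_lower_bound_lemma_bound}. Suppose $\Norm{\svmw}<r$. Then $\svmw$ lies in the ball of Step 2, and by the reverse triangle inequality
$$\Norm{\nabla L(\svmw;S)}\ge c_\ell\Norm{\sigstr_S}-(\wgradientlipchitzconstant)\Norm{\svmw}>c_\ell\Norm{\sigstr_D}/2-c_\ell\Norm{\sigstr_D}/2=0.$$
Since $L$ is convex and differentiable at such $\svmw$, the only subgradient there is the nonzero gradient, so $\svmw$ is not a minimizer. Therefore $\Norm{\svmw_S}\ge r$ on the event of Step 1, which is the claim.

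The main obstacle I expect is Step 2. For losses like the hinge, differentiability and the $K_\ell$ bound hold only on $(-1,1)$, so the argument has to be confined to the ball of radius $1/B$; this is exactly why the $\min$ with $1/B$ appears in the bound. I also need to check that the open and closed boundary cases do not break the strict inequality in Step 3, which is handled by working with $\Norm{\svmw}<r$ throughout.
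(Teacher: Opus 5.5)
Your proof is correct and follows essentially the same route as the paper: you apply \Cref{lemma:signal_strength_hoeffding} to the vectors $w_iy_i\x_i$, combine it with \Cref{prop:signal_strength_at_origin_gradient}, and use the $(\wgradientlipchitzconstant)$-Lipschitz bound on the gradient inside the ball of radius $1/B$. The differences are cosmetic. You argue the contrapositive (no point of norm below the threshold can have zero gradient), whereas the paper splits into cases and uses $\nabla L(\svmw_S)=0$. Your rate constant $C=\Norm{\sigstr_D}^2/(8W^2B^2)$ correctly keeps the square that the paper's displayed exponent drops.
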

\begin{proof}
Sample a training set $S\sim D^\datasize$. We consider two cases: 
If the learned $\svmw_S$ satisfies $\Norm{\svmw_S}\ge 1/B$, the bound is satisfied trivially. 
Otherwise, it holds that $\Norm{\svmw}< 1/B$, and it also holds that $\Norm{\x}\le B$ by assumption. Denote $t_i=\InnerProd{\svmw}{\x_i}\cdot y_i$.
By Cauchy-Schwarz, it holds that:
$$
\Abs{t_i} = \Abs{\InnerProd{\svmw}{\x_i}\cdot y_i} \le \Norm{\svmw}\cdot\Norm{\x_i} < \frac{1}{B}\cdot B = 1
$$
Therefore $\Abs{t_i}<1$, and hence $t_i\in(-1,1)$ for all $i\in[\datasize]$. Thus, by \Cref{eq:admissible_loss_gradient} and for $\svmw$ with magnitude upper bounded by $1/B$, each gradient component of the aggregate training loss, denoted by $\left(\nabla_\svmw L(\svmw;S)\right)_k$ and given by \Cref{eq:admissible_loss_gradient}, is a sum of Lipchitz-continuous functions.
The derivative of $\ell$ is $K_\ell$-Lipchitz by assumption, weights are bounded by $W$, features vectors $\x$ are bounded by $B$. 
Therefore, the loss gradient $\nabla_\svmw L$ is Lipchitz continuous with respect to $\svmw$ and the $L_2$ norm, with constant $K=\wgradientlipchitzconstant$.
From the definition of Lipchitz continuity, we obtain:
$$
\Norm{
\left.\nabla_\svmw L\right|_{\svmw=0}-
\left.\nabla_\svmw L\right|_{\svmw=\svmw_S}
}
\le
K \Norm{0-\svmw_S}
$$
rearranging and applying \Cref{prop:signal_strength_at_origin_gradient}, we obtain:
$$
\Norm{\svmw_S}
\ge \frac{1}{K} \cdot \Norm{\left.\nabla_\svmw L\right|_{\svmw=0}}
= \frac{c_\ell}{K} \cdot \Norm{\sigstr_S}
$$

By definition $\sigstr_S=\frac{1}{m}\sum_{i=1}^\datasize w_i \cdot \x_i \cdot y_i$, and by the boundedness assumption $\Norm{w_i \cdot \x_i \cdot y_i}\le W\cdot B$.
Thus, applying \Cref{lemma:signal_strength_hoeffding} we obtain that:

$$
\Norm{\svmw_S} 
\ge \frac{c_\ell \cdot \Norm{\sigstr_D}}{2K}
$$
with probability at least $1-\exp\left(-\frac{m\Norm{\sigstr_D}}{8B^2W^2}\right)$. Combining with the converse case $\Norm{\svmw}\ge 1/B$ and setting $C=\frac{\Norm{\sigstr_D}}{8B^2W^2}$ as the probability constant yields the required result, and thus \Cref{eq:w_lower_bound_lemma_bound} holds.

\end{proof}

\subsubsection{Stability Properties in Weighted Classification Settings}
\label{appendix:stability_of_weighthed_loss_minimization}

For algorithmic stability, we extend the definitions and related lemmas in \citet{bousquet2002stability} to the weighted setting by considering a sample-weighted training objective, and identifying sample removal with setting $w_i\leftarrow 0$. In a notational convention closer to \citet{bousquet2002stability}, we consider the following training objective:
$$
L(f) = \dataavg w_i \ell\left(f(\x_i),y_i\right) + \lambda R(f)
$$
where $f\in\Scores$ is a score function assumed to be of the form $f(\x)=\InnerProd{\svmw}{\x}$ for linear classifiers, $\ell(f(\x),y)=\ell\left(\InnerProd{\svmw}{\x}\cdot y\right)$ is a loss function, and $R(f)$ is a regularizer, assumed to be $R(f)=\Norm{\svmw}^2$ in the case of $L_2$ regularization.
We recall the key definitions related to classification stability:
\begin{definition}[Classification stability; {\citet[Definition 15]{bousquet2002stability}}]
\label{def:classification_stability}
Let $H$ be a collection of score-based classifiers $h(\x)=\sign(f(\x))$ induced by score functions $\Scores \subseteq \Reals^\Features$.
Let $\A$ be a learning algorithm mapping sample sets $\smplst$ of size $\datasize$
to score functions $f \in \Scores$, denoted $f_\smplst=\A(\smplst)$.
Then $\A$ is \emph{$\beta$-classification stable}
if:
\begin{equation}
\forall \smplst %
\quad
\forall i,j \in [m]
\quad
|f_\smplst(\x_j) - f_{\smplst^\minusi}(\x_j)| \le \beta
\end{equation}
where $\smplst^\minusi$ denotes the sample set without the $i^{\text{th}}$ example, or equivalently setting $w_i\leftarrow 0$.
\end{definition}

\begin{definition}[$\sigma$-admissibility; {\citet[Definition 19]{bousquet2002stability}}]
\label{def:sigma_admissibility}
A loss function $\loss(f(x),y)$ is $\sigma$-admissible if it is convex and $\sigma$-Lipschitz continuous with respect to its first argument:
$$
\forall x,x'\in\Features, y\in\Labels:\quad \Abs{\loss(f(x),y)- \loss(f(x'),y)} \le \sigma \Abs{f(x)-f(x')}
$$
\end{definition}

\begin{remark}
For classification tasks, we note that single-variable loss functions of the form $\ell(f(\x) \cdot y)$ as defined in \Cref{appendix:linear_classifier_preliminaries} are $\sigma$-admissible when $\ell(t)$ is $\sigma$-Lipchitz as a single-variable function.
\end{remark}

The following lemma generalizes {\citet[Lemma 20]{bousquet2002stability}} to weighted risk minimization:

\begin{lemma}[Stability of $\sigma$-admissible regularized learning]
\label{lemma:sigma_admissible_stability}
Let $\loss$ be $\sigma$-admissible loss function, and $R$ be a regularization functional defined on $\Scores$ such that
for all training sets, the empirical loss minimization has a minimum in $\Scores$. 
Denote the minimizer of $\Loss(f)$ by $f$, the minimizer of $\Loss^\minusi(f)$ by $f^\minusi$, and $\Delta f = f^\minusi-f$.
For any $\tau\in[0,1]$, it holds that:
\begin{equation}
\label{eq:stability_convextiy_lemma}
R(f)-R(f+\tau \Delta f)+R(f^{\setminus i})-R(f^{\setminus i}-\tau\Delta f)
\le \frac{w_i \tau \sigma}{\lambda\datasize} \Abs{\Delta f (x_i)}
\end{equation}
\end{lemma}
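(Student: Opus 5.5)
The plan follows the argument of \citet[Lemma 20]{bousquet2002stability}, adapting it to weights. Write $\Loss(f)=\dataavg w_j\loss(f(\x_j),y_j)+\lambda R(f)$ for the full weighted objective. Write $\Loss^\minusi(f)$ for the same objective with $w_i\leftarrow 0$. By definition $f$ minimizes $\Loss$ and $f^\minusi$ minimizes $\Loss^\minusi$. Let $\Loss_\mathrm{emp}$ and $\Loss^\minusi_\mathrm{emp}$ denote the weighted empirical-risk parts, without the regularizer. The key structural fact is
\[
\Loss_\mathrm{emp}(g)-\Loss^\minusi_\mathrm{emp}(g)=\frac{w_i}{\datasize}\loss(g(\x_i),y_i)
\]
for every $g$.

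\textbf{Step 1: optimality of the two minimizers.} For $\tau\in[0,1]$, optimality of $f$ against the competitor $f+\tau\Delta f$, and of $f^\minusi$ against $f^\minusi-\tau\Delta f$, gives
\[
\Loss(f)-\Loss(f+\tau\Delta f)\le 0,\qquad \Loss^\minusi(f^\minusi)-\Loss^\minusi(f^\minusi-\tau\Delta f)\le 0 .
\]
Summing these and isolating the regularizer terms yields
\[
\lambda\bigl[R(f)-R(f+\tau\Delta f)+R(f^\minusi)-R(f^\minusi-\tau\Delta f)\bigr]\le -A,
\]
where $A$ collects the four empirical-risk terms.

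\textbf{Step 2: convexity on the shared terms.} Write $\Loss_\mathrm{emp}=\Loss^\minusi_\mathrm{emp}+\frac{w_i}{\datasize}\loss(\cdot,y_i)$ evaluated at $\x_i$. Convexity of $\Loss^\minusi_\mathrm{emp}$, which holds because each $\loss$ is convex and the weights are nonnegative, gives
\[
\Loss^\minusi_\mathrm{emp}(f+\tau\Delta f)-\Loss^\minusi_\mathrm{emp}(f)\le\tau\bigl[\Loss^\minusi_\mathrm{emp}(f^\minusi)-\Loss^\minusi_\mathrm{emp}(f)\bigr],
\]
\[
\Loss^\minusi_\mathrm{emp}(f^\minusi-\tau\Delta f)-\Loss^\minusi_\mathrm{emp}(f^\minusi)\le\tau\bigl[\Loss^\minusi_\mathrm{emp}(f)-\Loss^\minusi_\mathrm{emp}(f^\minusi)\bigr].
\]
Adding the two inequalities makes the right-hand sides cancel. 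This shows that the $\minusi$ contributions to $-A$ are nonpositive.

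\textbf{Step 3: bound the residual.} Only the $i$-th term survives in $-A$:
\[
\frac{w_i}{\datasize}\bigl[\loss((f+\tau\Delta f)(\x_i),y_i)-\loss(f(\x_i),y_i)\bigr].
\]
By $\sigma$-admissibility (\Cref{def:sigma_admissibility}) this is at most $\frac{w_i\sigma}{\datasize}\tau\Abs{\Delta f(\x_i)}$. Dividing by $\lambda$ gives \eqref{eq:stability_convextiy_lemma}.

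The main thing to get right is the bookkeeping of signs in Steps 1 and 2. The convexity step must be applied to the objective that omits example $i$, which is why the cancellation works. The lone $i$-th term then carries the factor $w_i$, and that factor replaces the $1$ in the unweighted lemma. Note also that $w_i\ge0$ is needed so that convexity is preserved; this holds since $w_i\in[0,\vmax]$.
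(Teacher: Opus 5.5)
Your proposal is correct and follows the paper's proof essentially step for step. Both use optimality of $f$ and $f^{\setminus i}$ against the interpolated competitors, convexity of the empirical risk without example $i$ to cancel the shared terms, and $\sigma$-admissibility on the lone $w_i$-weighted $i$-th term. Your bookkeeping is in fact cleaner than the paper's written proof: that proof sets $\Delta f = f - f^{\setminus i}$, which is inconsistent with the statement and with its own convexity step, and it evaluates the second optimality inequality with the full $L$ instead of $L^{\setminus i}$.
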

\begin{proof}
Given a training set $S\in \JointSpace^n$, let $i\in[n]$. Recall the score function loss in \Cref{def:classification_score_loss}, and denote:
$$
\Loss^\minusi(f)=\frac{1}{\datasize}\sum_{j\neq i} w_i \ell(f(x_i),y_i)
$$
Denote $\Delta f=f-f^\minusi$. 
For all $\tau\in[0,1]$, any convex function $g$ satisfies:
$$
g\left(x+t(y-x)\right)-g(x)\le t\left(g(y)-g(x)\right)
$$
Both $\Loss$ and $L^\minusi$ are convex as a function of their inputs as sums of convex functions. Therefore, for all $\tau \in[0,1]$:
\begin{align*}
\Loss^\minusi\left(f+\tau\Delta f\right)-\Loss^\minusi\left(f\right) \le \tau\left(\Loss^\minusi(f^\minusi)-\Loss^\minusi(f)\right)
\\
\Loss^\minusi\left(f^\minusi-\tau\Delta f\right)-\Loss^\minusi\left(f^\minusi\right) \le \tau\left(\Loss^\minusi(f)-\Loss^\minusi(f^\minusi)\right)
\end{align*}
Summing the two inequalities yields:
\begin{equation}
\label{eq:convexity_lemma_proof_sum_of_convex_inequalities}
\Loss^\minusi\left(f+\tau\Delta f\right)-\Loss^\minusi\left(f\right)
+
\Loss^\minusi\left(f^\minusi-\tau\Delta f\right)-\Loss^\minusi\left(f^\minusi\right)
\le 0
\end{equation}
By the optimality of $f$ and $f^\minusi$, we also have:
\begin{align*}
    \Loss(f)-\Loss(f+\tau\Delta f)\le 0
    \\
    \Loss^\minusi(f^\minusi)-\Loss(f^\minusi-\tau\Delta f)\le 0
\end{align*}
Summing the inequalities above and applying \Cref{eq:convexity_lemma_proof_sum_of_convex_inequalities}, we obtain:
\begin{equation*}
    \frac{w_i}{\datasize} \loss\left( f(x_i),y_i \right)
    - \frac{w_i}{\datasize} \loss\left( (f+\tau\Delta f)(x_i),y_i \right)
    + \lambda \left(
    R(f)-R(f+\tau\Delta f)+R(f^{\setminus i})-R(f^{\setminus i}-\tau\Delta f)
    \right)
    \le 0
\end{equation*}
and finally, by $\sigma$-admissibility, we obtain:
\begin{equation*}
    R(f)-R(f+\tau \Delta f)+R(f^{\setminus i})-R(f^{\setminus i}-\tau \Delta f)
    \le \frac{w_i \tau  \sigma}{\lambda\datasize}\Abs{\Delta f(x_i)}
\end{equation*}
as required.
\end{proof}

The following lemma generalizes {\citet[Example 2]{bousquet2002stability}} to weighted risk minimization for $L^2$-regularized linear classifiers:

\begin{lemma}[Stability of $L^2$-regularized linear classifiers]
\label{lemma:soft_svm_classification_stability}
Consider a linear classification algorithm with $L_2$ regularization and a $\sigma$-admissible loss function, training on a dataset with bounded features $\Norm{\x}\le B$.
For a weighted training objective such that $w_i\le \wmax$, the classification stability of the algorithm satisfies:
$$
\beta(\w) \le\frac{w_\mathrm{max} \sigma B}{2\lambda \datasize} 
$$
\end{lemma}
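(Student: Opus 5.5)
We follow the argument of \citet[Example 2]{bousquet2002stability}, using \Cref{lemma:sigma_admissible_stability} as the main tool. Fix a training set $S$ and an index $i$. Let $f(\x)=\InnerProd{\svmw}{\x}$ minimize $\Loss$, let $f^\minusi(\x)=\InnerProd{\svmw^\minusi}{\x}$ minimize $\Loss^\minusi$ (the objective with $w_i\leftarrow 0$), and set $\Delta f = f^\minusi - f$, i.e.\ $\Delta\svmw=\svmw^\minusi-\svmw$. We instantiate \Cref{eq:stability_convextiy_lemma} with $R(f)=\Norm{\svmw}^2$ and $\tau=1/2$. For the squared norm the left-hand side is an exact identity:
\begin{equation*}
\Norm{\svmw}^2-\Norm{\svmw+\tfrac12\Delta\svmw}^2+\Norm{\svmw^\minusi}^2-\Norm{\svmw^\minusi-\tfrac12\Delta\svmw}^2 = \tfrac12\Norm{\Delta\svmw}^2 .
\end{equation*}
To verify it, expand the two differences. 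They give $-\InnerProd{\svmw}{\Delta\svmw}-\tfrac14\Norm{\Delta\svmw}^2$ and $\InnerProd{\svmw^\minusi}{\Delta\svmw}-\tfrac14\Norm{\Delta\svmw}^2$. Since $\svmw^\minusi-\svmw=\Delta\svmw$, the inner-product terms combine to $\Norm{\Delta\svmw}^2$, and the sum is $\tfrac12\Norm{\Delta\svmw}^2$.

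Next we bound the right-hand side. With $\tau=1/2$ it equals $\frac{w_i\sigma}{2\lambda\datasize}\Abs{\Delta f(\x_i)}$. By Cauchy--Schwarz, $\Abs{\Delta f(\x_i)}=\Abs{\InnerProd{\Delta\svmw}{\x_i}}\le \Norm{\Delta\svmw}B$. Combining with the identity gives $\tfrac12\Norm{\Delta\svmw}^2\le \frac{w_i\sigma B}{2\lambda\datasize}\Norm{\Delta\svmw}$, and hence $\Norm{\Delta\svmw}\le \frac{w_i\sigma B}{\lambda\datasize}$. Applying Cauchy--Schwarz once more, for every $j$ we get $\Abs{f(\x_j)-f^\minusi(\x_j)}\le B\Norm{\Delta\svmw}$. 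Taking the maximum over $i$ and $w_i\le\wmax$ yields a bound that is uniform over $i,j$, as \Cref{def:classification_stability} requires. As a by-product this gives the individual bound $\beta_i\propto w_i$ used in the main text.

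The main obstacle is matching the stated constant exactly. The clean derivation above yields $\wmax\sigma B^2/(\lambda\datasize)$. The statement instead has a single factor of $B$ and a denominator of $2\lambda\datasize$. We would first recheck the exact form of \Cref{eq:stability_convextiy_lemma}. The concern is whether the summed inequality in that lemma's proof, which has only one loss-difference term at $\x_i$, actually supports a sharper factor. We would also check whether $B$ in the statement is meant as the squared radius, playing the role of $\kappa^2$ as in \citet{bousquet2002stability}, where $\beta\le\kappa^2/2\lambda\datasize$ arises from a normalization $\tfrac12\Norm{\svmw}^2$ or from using RKHS norms. A second choice affecting the constants is how $\tau$ is picked. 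We would optimize over $\tau\in[0,1]$ and note that the identity generalizes to $2\tau(1-\tau)\Norm{\Delta\svmw}^2$ on the left, with $\tau\sigma w_i\Abs{\Delta f(\x_i)}/(\lambda\datasize)$ on the right. Letting $\tau\to0$ sharpens the bound to $\Norm{\Delta\svmw}\le \frac{w_i\sigma B}{2\lambda\datasize}$, which removes a factor of $2$. The remaining factor of $B$ from evaluating at $\x_j$ would then be reconciled against the statement's normalization convention.
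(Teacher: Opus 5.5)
Your argument is the paper's argument. Both instantiate \Cref{lemma:sigma_admissible_stability} with $R=\Norm{\cdot}^2$, use the identity $2\tau(1-\tau)\Norm{\Delta\svmw}^2$, apply Cauchy--Schwarz, and send $\tau\to 0$. The paper divides by $t$ and lets $t\to 0$ directly, so make that your main line rather than $\tau=1/2$, which loses a factor of $2$. Done this way, you get $\Norm{\Delta\svmw}\le w_i\sigma B/(2\lambda\datasize)$, and hence $\Abs{f(\x_j)-f^{\minusi}(\x_j)}\le w_i\sigma B^2/(2\lambda\datasize)$ for every $j$. This is exactly what the paper derives in \Cref{eq:stability_individual_inequality}. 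Your route through $\Norm{\Delta\svmw}$ is slightly more complete than the paper's. The paper bounds the change only at the removed point $\x_i$ and then asserts that this gives classification stability, which needs all $j$.

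The leftover factor of $B$ you are trying to reconcile is a typo in the statement, not a gap in your argument. You should not look for a sharper factor in the summed inequality or a normalization convention. The single-$B$ bound is false for $B>1$. Take $d=1$, hinge loss ($\sigma=1$), $\datasize$ copies of the point $x=B$ with $y=1$, $w=1$, and $\lambda>B^2/2$. The full objective is minimized at $\theta=B/(2\lambda)$. The leave-one-out objective is minimized at $\theta=(\datasize-1)B/(2\lambda\datasize)$. Both lie in the linear part of the hinge, so the score at $x$ changes by exactly $B^2/(2\lambda\datasize)$, which exceeds $B/(2\lambda\datasize)$. This example also shows the $B^2$ bound is tight. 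The paper itself uses the $B^2$ form downstream: $\beta\le VB^2/(2\lambda\datasize)$ in the proof of the payment upper bound, and $\beta_i=w_i\sigma B^2/(2\lambda\datasize)$ in the remark on individual stability. So state your conclusion as $\beta(\w)\le \wmax\sigma B^2/(2\lambda\datasize)$.
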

\begin{proof}
For $L_2$-regularized classifiers, 
it holds that $\Regularizer(f)=\Norm{f}^2_{}=\InnerProd{f}{f}_{}$, and therefore the LHS of
\Cref{eq:stability_convextiy_lemma} given by \Cref{lemma:sigma_admissible_stability}
simplifies to:
\begin{equation*}
\begin{aligned}
\underbrace{
R(f)
}_{=\Norm{f}^2_{}}
-
\underbrace{
R(f+t\Delta f)
}_{=\Norm{f}^2+t^2\Norm{\Delta f}^2+2t\InnerProd{f}{\Delta f}_{}}
+
\underbrace{
R(f^{\setminus i})
}_{=\Norm{f^\minusi}^2_{}}
-
\underbrace{
R(f^{\setminus i}-t\Delta f)
}_{=\Norm{f}^2+t^2 \Norm{\Delta f}^2-2t\InnerProd{f}{\Delta f}_{}}
&=2t\InnerProd{f^\minusi-f}{\Delta f}_{}-2t^2\Norm{\Delta f}^2_{}
\\
&=2t(1-t)\Norm{\Delta f}^2_{}
\end{aligned}
\end{equation*}
Thus \Cref{eq:stability_convextiy_lemma} is equivalent to:
$$
2t(1-t)\Norm{\Delta f}^2_{}\le
\frac{w_i t \sigma}{\lambda\datasize} \Abs{\Delta f (x_i)}
$$
Dividing both sides by $t$ and taking $t\to 0$, we obtain:
\begin{equation}
\label{eq:stability_inequality_with_delta_f_xi}
    \Norm{\Delta f}^2_{} \le \frac{w_i \sigma}{2\lambda\datasize}\Abs{\Delta f(x_i)}
\end{equation}
By Cauchy-Schwarz, for $x_i$ it holds that:
\begin{equation}
\label{eq:delta_f_cauchy_schwarz}
\Abs{\Delta f (x_i)} = 
\Abs{\InnerProd{\Delta f}{x_i}_{}}
\le 
\Norm{\Delta f}_{} \cdot B
\end{equation}
Multiplying \Cref{eq:stability_inequality_with_delta_f_xi} by $B^2$, plugging \Cref{eq:delta_f_cauchy_schwarz} and dividing by $\Abs{\Delta f( x_i)}$, we obtain:
\begin{equation}
    \label{eq:stability_individual_inequality}
    \Abs{\Delta f(x_i)} \le \frac{w_i \sigma B^2}{2\lambda\datasize}
\end{equation}

An upper bound on $\Abs{\Delta f(x_i)}$ implies classification stability by definition, and thus the classification stability satisfies:
$$
\beta(\w) \le w_{\mathrm{max}} \cdot \frac{ \sigma B}{2\lambda \datasize}
$$
Also note that for the special case Soft-SVM with a linear kernel and uniform weights $w_i=1$, the bound above coincides with \citet[Example 2]{bousquet2002stability}.
\end{proof}

\begin{remark}[Individual classification stability $\beta_i$]
    In the proof of \Cref{lemma:soft_svm_classification_stability}, the bound established in \Cref{eq:stability_individual_inequality} can be interpreted as an \emph{individual} classification stability bound: $\Abs{f^\minusi(\x_i)-f(\x_i)}\le \beta_i$, where $\beta_i=w_i\sigma B^2/2\lambda m$.
    We use this notion for constructing efficient payment calculation algorithms in \Cref{sec:payments}.
\end{remark}

\subsubsection{Auctions of $L_2$-regularized Linear Classifiers}
\begin{proposition}
\label{prop:stability_necessary_condition_for_payment}
In an auction induced by an algorithm with
weighted
classification stability $\beta$, agent $i$ may pay only if $f(x_i)\cdot y_i \in[0,\beta]$.

\end{proposition}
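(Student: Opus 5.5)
We argue by contrapositive. If agent $i$'s margin $y_i f(\x_i)$ lies outside $[0,\beta]$, we show its payment is zero. Throughout, $f = f^*(\cdot;(b_i,\bids_{-i}))$ is the score function learned with $i$'s actual bid, and $f^{\setminus i} = f^*(\cdot;(0,\bids_{-i}))$ is the one learned with $w_i \leftarrow 0$. By \Cref{def:classification_stability}, extended to weights as in \Cref{lemma:soft_svm_classification_stability}, we have $\Abs{f(\x_i)-f^{\setminus i}(\x_i)}\le\beta$. Since $y_i\in\Set{\pm1}$, this gives
$$\Abs{y_i f(\x_i)-y_i f^{\setminus i}(\x_i)}\le\beta .$$

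First we use the structure of Myerson payments. By \Cref{eq:payment_as_nested_argmin} and the monotonicity of \Cref{thm:monotonicity}, the allocation $\alloc_i(z;\bids_{-i})$ is a step function in $z\in[0,b_i]$. So $p_i>0$ requires two things: $\alloc_i(b_i;\bids_{-i})=1$, and $\alloc_i(z;\bids_{-i})=0$ for some $z<b_i$. By monotonicity, the latter implies $\alloc_i(0;\bids_{-i})=0$. If instead $\alloc_i(0;\bids_{-i})=1$, monotonicity makes the allocation equal to $1$ on all of $[0,b_i]$. The critical bid is then $0$, hence $p_i=0$. This is also consistent with the convention that $b_i=0$ implies no payment.

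Next we translate these two conditions into margins. The condition $\alloc_i(b_i;\bids_{-i})=1$ means $\sign(f(\x_i))=y_i$, i.e.\ $y_i f(\x_i)>0$; the boundary case of a zero score is absorbed by the closed endpoint $0$. The condition $\alloc_i(0;\bids_{-i})=0$ means $y_i f^{\setminus i}(\x_i)\le 0$. Combining the latter with the stability inequality gives
$$y_i f(\x_i)\le y_i f^{\setminus i}(\x_i)+\beta\le\beta .$$
Hence a paying agent satisfies $y_i f(\x_i)\in[0,\beta]$, which is the statement. Using the individual bound $\Abs{\Delta f(\x_i)}\le\beta_i$ from the remark after \Cref{lemma:soft_svm_classification_stability} in place of $\beta$ yields the sharper \Cref{lem:which_point_pay}.

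The only delicate point is justifying that stability applies with $w_i$ set to $0$ while all other weights are held at $\bids_{-i}$. This is exactly the weighted notion of removal adopted in \Cref{def:classification_stability}, so \Cref{lemma:soft_svm_classification_stability} covers it. A secondary point is the sign convention at $f(\x_i)=0$. This is handled by taking the interval closed at $0$, as in the statement. No case requires more than this, so the argument is a short chain: monotonicity, then the step-function characterization of payments, then stability.
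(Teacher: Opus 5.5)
Your proof is correct and is essentially the paper's argument. The paper argues the contrapositive in two cases, with $y_i=1$ w.l.o.g. If $f(\x_i)>\beta$, stability keeps $f^{\setminus i}(\x_i)>0$, so the critical bid is $0$. If $f(\x_i)<0$, monotonicity keeps the prediction wrong at all lower bids, so nothing is paid. These are exactly your two necessary conditions $\alloc_i(b_i;\bids_{-i})=1$ and $\alloc_i(0;\bids_{-i})=0$, combined with stability at $\x_i$. One minor point: you don't need \Cref{lemma:soft_svm_classification_stability}, since weighted $\beta$-stability is already the proposition's hypothesis.
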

\begin{proof}
We prove this claim by showing that agent $i$ does not pay when the condition does not hold.
Denote the payment of agent $i$ by $p_i$, and assume without loss of generality that $y_i=1$. We consider two cases.
If $f(x_i)>\beta$, then $h(x_i)=y_i=1$. By definition of classification stability (\Cref{def:classification_stability}), it holds that $f^\minusi(x_i) \ge f(x_i)-\beta > 0$ and therefore the leave-one-out classifier retains the same prediction $h^\minusi(x_i)=h(x_i)=y_i=1$.
Then, by Myerson's payment rule 
(\Cref{eq:payment_as_nested_argmin}),
it holds that $p_i=0$ as required.
Otherwise, if $f(x_i)<0$ then $h(x_i)\neq y_i$, and from monotonicity (\Cref{thm:monotonicity}) it holds that $h^\minusi(x_i)=h(x_i)=0$, and $p_i=0$ by Myerson's payment rule as well.
\end{proof}

\newcommand{\npay}{{\#\mathrm{pay}(S)}}
\newcommand{\revenue}{{\mathrm{revenue}(S)}}

\begin{theorem}[Formal statement of \Cref{thm:svm_upper_bound}]
Consider an accuracy auction induced by a linear classification algorithm with $L_2$ regularization, bounded features $\Norm{\x}\le B$, feature probability density bound $M$, bounded valuations $v\le V$, expected weighted signal strength $\sigstr_D$, 
and fixed regularization parameter $\lambda$. For a random dataset of size $\datasize$,
denote the number of paying users by $\npay$. The expected value of $\npay$ is bounded by a constant:
$$
\expect{S\sim D^\datasize}{\npay} = O\left(
\frac{VB^3 M}{\lambda \wlowerboundconst}
\right)
$$
Furthermore, for expected revenue it holds that:
$$
\expect{S\sim D^\datasize}{\revenue} \le V\expect{S\sim D^\datasize}{\npay}
$$
\end{theorem}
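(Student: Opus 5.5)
By linearity of expectation and exchangeability of the sample, $\expect{S}{\npay} = \datasize\cdot\prob{S}{\text{agent } 1 \text{ pays}}$. So the plan is to show that this single-agent probability is $O(1/\datasize)$.

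By \Cref{prop:stability_necessary_condition_for_payment} combined with the individual stability bound of \Cref{eq:stability_individual_inequality}, agent $1$ can pay only if $y_1 f_S(\x_1)\in[0,\beta_1]$, where $\beta_1\le V\sigma B^2/(2\lambda\datasize)$. The difficulty is that $f_S$ depends on $\x_1$. To decouple them, I would pass to the leave-one-out classifier $f_{S^{\setminus 1}}$, obtained by setting $w_1\leftarrow 0$. It is independent of $(\x_1,y_1)$, and by stability $|f_S(\x_1)-f_{S^{\setminus1}}(\x_1)|\le\beta_1$. Hence a necessary condition for payment is
\[
\InnerProd{\svmw_{S^{\setminus 1}}}{\x_1}\in[-2\beta_1,\,2\beta_1],
\]
and dropping the sign on $y_1$ only enlarges the event.

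Next I condition on $S^{\setminus 1}$. Let $G$ be the event $\Norm{\svmw_{S^{\setminus1}}}\ge \wlowerboundconst$. On $G$, \Cref{prop:boundary_prob_bound} applies with interval width $4\beta_1$. It bounds the conditional probability by $C\cdot 4\beta_1$, where $C\propto M B_d/\Norm{\svmw}$ and the $(d-1)$-ball volume factor is absorbed into $B$-dependent constants. This gives
\[
O\!\left(\frac{V\sigma B^3 M}{\lambda\datasize\,\wlowerboundconst}\right).
\]
Off $G$, I bound the probability trivially by $1$. \Cref{lemma:w_norm_lower_bound}, applied to the $(\datasize-1)$-sample $S^{\setminus1}$ with weights bounded by $W=V$, gives $\prob{}{G^c}\le e^{-C(\datasize-1)}$. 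Multiplying by $\datasize$, the good-event term gives the claimed constant, and the bad-event term $\datasize e^{-C(\datasize-1)}$ is uniformly bounded in $\datasize$. Absorbing it into the $O(\cdot)$ yields the bound on $\expect{}{\npay}$. The revenue statement then follows from individual rationality: each payer pays $p_i\le b_i=v_i\le V$ and non-payers pay $0$, so $\revenue\le V\cdot\npay$ pointwise, and taking expectations finishes.

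\textbf{Main obstacle.} The delicate step is the decoupling together with the norm lower bound. \Cref{prop:boundary_prob_bound} needs a fixed $\svmw$ whose norm is bounded below, independent of $\x_1$. For this reason I must apply \Cref{lemma:w_norm_lower_bound} to the leave-one-out sample rather than to $S$ itself. I also need the density bound $M$ to apply to the marginal of $\x_1$ alone. One further subtlety is that \Cref{lemma:w_norm_lower_bound} is stated for $\Norm{\sigstr_D}\neq0$ with the weighted signal strength, so I have to verify that the leave-one-out sample, consisting of $\datasize-1$ i.i.d.\ points, meets its hypotheses. For small $\datasize$ the bound holds trivially since $\npay\le\datasize$.
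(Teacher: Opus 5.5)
Your proposal is correct and follows essentially the same route as the paper's proof: the stability-based necessary condition for payment, then leave-one-out decoupling so that $\svmw_{S^{\setminus 1}}$ is independent of $\x_1$. After that, both arguments split on the norm event of \Cref{lemma:w_norm_lower_bound}, apply the slab bound of \Cref{prop:boundary_prob_bound} on the good event and the exponential tail (times $\datasize$) on the bad event, and use individual rationality for revenue. Using the individual bound $\beta_1\le V\sigma B^2/(2\lambda\datasize)$ and the symmetric slab $[-2\beta_1,2\beta_1]$, instead of the paper's global $\beta$ and interval $[-\beta,2\beta]$, is a cosmetic difference that only changes constants.
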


\begin{proof}
Denote the classification stability of the weighted algorithm by $\beta$, fix a training set and assume valuations are bounded by $V$. Denote the payment of agent $i$ by $p_i$.
By definition, it holds that:
$$
\npay = \sum_{i=1}^\datasize \Indicator{p_i>0}
$$
By \Cref{prop:stability_necessary_condition_for_payment}, it holds that:
$$
\sum_{i=1}^\datasize \Indicator{p_i>0} \le \sum_{i=1}^\datasize \Indicator{0 \le f_S(\x_i)\cdot y_i \le \beta}
$$
Taking the expectation over training set $S\sim D^n$, we obtain from linearity of expectation:
$$
\expect{S}{\npay}
\le
\datasize \cdot \prob{S}{0\le f_S(\x_i)\cdot y_i \le\beta}
$$
where $i\in[\datasize]$ here denotes the index of an arbitrary point $(\x_i,y_i)$ in the random training set $S$. Note that probability is identical for all $i\in[\datasize]$ as data points in the training set are sampled i.i.d..

Next, we leverage classification stability once more. Adding and subtracting $f_{S^\minusi}(\x_i)\cdot y_i$, 
we observe that:
\begin{equation*}
\begin{aligned}
f_S(\x_i)\cdot y_i 
=  f_{S^\minusi}(\x_i)\cdot y_i + 
\underbrace{\left(f_S(\x_i)-f_{S^\minusi}(\x_i)\right)\cdot y_i}_{\in[-\beta,\beta]}
\end{aligned}
\end{equation*}
Note that $\Abs{f(\x_i)-f_{S^\minusi}(\x_i)}\le \beta$
by classification stability, and therefore:
\begin{equation}
\label{eq:svm_proof_exp_npay_simplified_bound}
\expect{S\sim D^\datasize}{\npay}
\le
\datasize \cdot \underbrace{\prob{S}{-\beta\le f_{S^\minusi}(\x_i)\cdot y_i \le2\beta}}_{(*)}
\end{equation}
Denote $(*) = \datasize \cdot \prob{S}{-\beta\le f_{S^\minusi}(\x_i)\cdot y_i \le2\beta}$.
Applying the law of total probability, we obtain:
\begin{equation*}
\begin{aligned}
(*) &=
\prob{S}{-\beta\le f_{S^\minusi}(\x_i)\cdot y_i \le2\beta} 
\\
&= \expect{S'\sim D^{\datasize-1}}{
\prob{(w,\x,y)\sim D}{-\beta\le f_{S'}(\x)\cdot y \le2\beta} 
}
\\
&= \expect{S'\sim D^{\datasize-1}}{\expect{(w,x,y)\sim D}{\Indicator{\InnerProd{\svmw_{S'}}{\x}\cdot y \in \left[-\beta, 2\beta\right]}} }
\end{aligned}
\end{equation*}

Denote by $C_1=\wlowerboundconst$ the high-probability lower bound for $\Norm{\svmw_{S'}}$ given by \Cref{lemma:w_norm_lower_bound}, and by $C_2$ the corresponding exponential convergence rate constant. 
For a random sample of $S'\sim D^{\datasize-1}$, define the event $A$ as $\Norm{\svmw_{S'}}\ge C_1$. From the law of total probability:
\begin{equation*}
\begin{aligned}
(*) &=
p(A) \expect{S' | A}{\expect{(w,x,y)\sim D}{\Indicator{\InnerProd{\svmw_{S'}}{\x}\cdot y \in \left[-\beta, 2\beta\right]}} }
+ p(A^c) \expect{S' | A^c}{\expect{(w,x,y)\sim D}{\Indicator{\InnerProd{\svmw_{S'}}{\x}\cdot y \in \left[-\beta, 2\beta\right]}} }
\end{aligned}
\end{equation*}

From \Cref{lemma:w_norm_lower_bound}, we obtain abound on the probability of the complementary event $p(A^c)\le 2d\exp\left(-C_2 (\datasize-1)\right)$. Additionally, from \Cref{prop:boundary_prob_bound}, there exist a constant $C_3$ depending additionally on the norm lower bound $C_1$, feature probability density bound $M$, and feature magnitude bound $B$,  denoted $C_3=\frac{M B}{C_1}$ such that:
$$
\expect{(w,x,y)\sim D}{\Indicator{\InnerProd{\svmw_{S'}}{\x}\cdot y \in \left[-\beta, 2\beta\right]}} \le C_3\cdot 3\beta
$$
Jointly, these yield an upper bound on $(*)$:
$$
(*) \le 3C_3 \beta + 2d\exp\left(-C_2 (\datasize-1)\right)
$$
By \Cref{lemma:soft_svm_classification_stability}, it holds that $\beta \le \frac{w_\mathrm{max} \kappa^2}{2\lambda \datasize} \le \frac{V B^2}{2\lambda \datasize}$, and therefore $(*)$ can be further bounded by:
\begin{equation}
\label{eq:svm_proof_prob_pay_simplified_bound}
    (*) \le \frac{3C_3 V B^2}{2\lambda \datasize}
    + 2d\exp\left(-C_2 (\datasize-1)\right)
\end{equation}
Plugging the bound in \Cref{eq:svm_proof_prob_pay_simplified_bound}
into \Cref{eq:svm_proof_exp_npay_simplified_bound}, we obtain:
$$
\expect{S\sim D^\datasize}{\npay} \le
\frac{3C_3 V B^2}{2\lambda}
+ 2d\exp\left(-C_2 (\datasize-1)\right) \cdot m
$$
observe that both terms are bounded by a constant.
We thus obtain that there exists a universal constant bounding the expected bidders that pay:
$$
\expect{S\sim D^\datasize}{\npay} = O\left(
\frac{VB^3 M}{\lambda \wlowerboundconst}
\right)
$$
Finally, for revenue observe that: 
$$
\mathrm{revenue}(S) = \sum_{i=1}^n p_i \le V \sum_{i=1}^n \Indicator{p_i>0} = V\expect{S}{\npay}
$$
and thus $\expect{S}{\mathrm{revenue}(S)}\le V\expect{S}{\npay}$.
\end{proof}

\subsection{$k$-Nearest Neighbors}
\label{appendix:knn}

Let $k\ge 2$ be a positive integer, and let $S=\trainingset$ be a dataset. A weighted \knn{} classification algorithm classifies a point $\x\in\Features$ according to a weighted majority of its $k$ nearest neighbors. Given a dataset $S$ and a point $\x\in\Features$, we denote by $N_k(\x)\subseteq S$ the set of $k$ training points closest to $\x$. If the point $\x$ appears in the training set $S$, we assume it is included in $N_k(\x)$, and we also assume that proximity ties are broken in an arbitrarily consistent way. Formally, the weighted $k$-Nearest Neighbors ($\knn{}$) classification algorithm makes predictions according to the following rule:
\begin{equation}
\label{eq:weighed_knn_classifier}
h_S(\x;S)=\sign\left(\sum_{(w_i,\x_i,y_i)\in N_k(\x)} w_i y_i\right)
\end{equation}

\subsection{Monotonicity of \knn{} Accuracy 
Allocation}
\label{subsec:knn_monotonicity}
\begin{proposition}
[Monotonicity of weighted \knn{} allocation]
For any positive integer $k$, the \knn{} classification algorithm induces a monotone allocation rule.
\end{proposition}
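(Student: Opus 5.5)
We prove the proposition directly from the weighted \knn{} rule in Eq.~\eqref{eq:weighed_knn_classifier}, fixing a training point $i$ and the other bids $\bids_{-i}$ and tracking how $a_i$ changes as $b_i$ grows. The key structural observation is that the neighborhood $N_k(\x_i)$ is determined only by feature-space distances and the fixed, consistent tie-breaking rule. It therefore does not depend on any of the weights. Moreover, by the convention stated above, $\x_i$ itself belongs to $N_k(\x_i)$. Hence, as a function of $b_i$ alone, the score at $\x_i$ has the form
\begin{equation*}
f(\x_i; b_i,\bids_{-i}) = b_i y_i + \sum_{j \in N_k(\x_i),\, j\neq i} b_j y_j ,
\end{equation*}
which is affine in $b_i$ with slope $y_i$.

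The first step is to multiply by $y_i$, which gives the margin
\begin{equation*}
y_i f(\x_i) = b_i + c_i, \qquad c_i \defeq y_i\sum_{j\in N_k(\x_i)\setminus\{i\}} b_j y_j .
\end{equation*}
Here $c_i$ is a constant independent of $b_i$, so the margin is strictly increasing in $b_i$. The second step converts this into a statement about allocations. Since $a_i(b_i;\bids_{-i}) = \Indicator{\sign(f(\x_i)) = y_i} = \Indicator{b_i + c_i > 0}$, the allocation is a threshold function of $b_i$. It equals $0$ for $b_i \le -c_i$ and $1$ above that point, so it is weakly increasing, which is the claimed monotonicity.

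Along the way, this gives the closed-form critical bid promised in Sec.~\ref{sec:payments}: $\bcrit_i = \max\{0,-c_i\}$, capped at $b_i$. In words, user $i$ pays exactly the net opposing weight among her $k-1$ other neighbors.

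The only delicate point, and the expected obstacle, is the convention for $\sign(0)$ and ties. A zero score must be treated consistently, either always as incorrect or via a fixed rule. We will adopt $\sign(0)$ counted as incorrect, or more generally any rule that depends only on the value of the score. Then $a_i$ is a function of the increasing quantity $b_i + c_i$ that switches at most once, from $0$ to $1$, so monotonicity is preserved.

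We also need to confirm that the neighborhood of $\x_i$ is weight-independent even under ties. This follows from the assumption that ties are broken consistently and without reference to weights. Finally, the argument does not require $k$ to be a training-set-size-dependent quantity; it holds for any positive integer $k$.
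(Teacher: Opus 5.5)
Your proof is correct and is essentially the paper's argument: the paper just observes from the weighted $k$-NN rule that $y_i\, h_S(\x_i;b_i)$ is non-decreasing in $b_i$. You additionally make explicit what that observation relies on, namely that $N_k(\x_i)$ is weight-independent and contains $\x_i$, so that $y_i f(\x_i) = b_i + c_i$, as well as the $\sign(0)$ convention. One small slip in your payment aside: if $b_i \le -c_i$, user $i$ is never allocated on $[0,b_i]$ and pays $0$, so the payment is $\max\{0,-c_i\}$ when $b_i > -c_i$ and $0$ otherwise, not $\max\{0,-c_i\}$ capped at $b_i$.
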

\begin{proof}
Fix a dataset $S=\trainingset$, and fix an arbitrary index $i \in[\datasize]$.
Denote by $h_{S}(\x_i;b_i)$ the classifier trained on $S$, overriding the weight of coordinate $i$ with $b_i$.
The accuracy allocation of point $i$ is therefore $a_i(b_i)=h_S(\x_i;b_i)\cdot y_i$.
By \Cref{eq:weighed_knn_classifier}, $h_S(\x_i;b_i)\cdot y_i$ is a monotonically non-decreasing function of $b_i$ for any fixed $w_{-i}$, and therefore the allocation rule is monotone as required.
\end{proof}

\begin{proposition}
For a dataset $S=\trainingset$ and a point $i\in[\datasize]$, denote $N^\minusi(\x_i)_k = N_k(\x_i)\setminus\Set{(w_i,\x_i,y_i)}$. The incentive-compatible payment of agent $i$ is given by:
$$
p_i(b_i;\bids_{-i})=\begin{cases}
    \sum_{(w,\x,y)\in N^\minusi_k(\x_i)} w y & \frac{\sum_{(w,\x,y)\in N^\minusi_k(\x_i)} w y}{w_i y_i} \in (-1,0) \\
    0 & \mathrm{otherwise}
\end{cases}
$$
\end{proposition}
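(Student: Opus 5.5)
The plan is to compute the critical bid of Eq.~\eqref{eq:payment_as_nested_argmin} directly, using the fact that the \knn{} score at $\x_i$ is an affine function of agent $i$'s own bid. Fix the dataset, an index $i$, and the other bids $\bids_{-i}$. The neighborhood $N_k(\x_i)$ is determined only by distances in feature space, with ties broken consistently. It therefore does not depend on any weights, and by convention it contains $(w_i,\x_i,y_i)$ itself. So $N^\minusi_k(\x_i)$ is the same set for every value of $b_i$. Write $s=\sum_{(w,\x,y)\in N^\minusi_k(\x_i)} w y$. By Eq.~\eqref{eq:weighed_knn_classifier}, when agent $i$ bids $z$ the score at $\x_i$ is $g(z)=z y_i + s$. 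Hence $y_i g(z) = z + y_i s$, using $y_i^2=1$. This gives the allocation as an explicit step function: $a_i(z;\bids_{-i})=\Indicator{z + y_i s>0}$, up to the convention for $\sign(0)$. This also reproves the monotonicity proposition above.

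Next I would read off the critical bid $\bcrit_i=\min\Set{z\in[0,b_i]: a_i(z;\bids_{-i})=1}$ and split into three cases.
\begin{itemize}
\item If $y_i s\ge 0$ (with $s\neq 0$, or with ties resolved in $i$'s favour), then $a_i(0;\bids_{-i})=1$. Agent $i$ is correctly classified even with zero weight, so the payment is $0$.
\item If $y_i s<0$ and $b_i>-y_i s=\Abs{s}$, then $a_i(b_i)=1$ and $a_i(z)=0$ for all $z<\Abs{s}$. The threshold is therefore $\bcrit_i=\Abs{s}$, and agent $i$ pays $\Abs{s}$.
\item If $y_i s<0$ and $b_i\le\Abs{s}$, then $a_i(b_i)=0$, and by the convention following Eq.~\eqref{eq:payment_as_nested_argmin} the payment is $0$.
\end{itemize}
The middle case is exactly the condition $y_i s\in(-b_i,0)$. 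Dividing by $b_i y_i$ and using $y_i=1/y_i$, this is equivalent to $\frac{s}{w_i y_i}\in(-1,0)$ with $w_i=b_i$, which is the case condition in the statement.

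The remaining step is to reconcile the value of the payment with the displayed formula. In the paying case $s$ has sign opposite to $y_i$, so the nonnegative Myerson payment is $-y_i s=\Abs{s}$. I would therefore interpret the displayed expression as the magnitude of the opposing neighbor mass, $\Abs{\sum_{N^\minusi_k(\x_i)} w y}$ (equivalently $-y_i\sum w y$), and state this explicitly. As a cross-check I would evaluate Eq.~\eqref{eq:myerson_payment_rule}: the derivative of the step function is a unit mass at $z=\Abs{s}$, so the integral equals $\Abs{s}$ whenever $\Abs{s}<b_i$. This confirms both the value and individual rationality ($p_i<b_i$).

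I expect the main obstacle to be the boundary behaviour rather than the core argument. At $z=\Abs{s}$ the score is exactly $0$, so whether the case interval is open or closed depends on the $\sign(0)$ convention. Likewise the case $s=0$ must be assigned to the zero-payment branch. I would fix a convention, for example $\sign(0)$ counted as incorrect, and check that it gives exactly the open interval $(-1,0)$. Since the step is at a single point, the payment value is unaffected either way.
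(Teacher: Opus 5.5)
Your proposal is correct and follows essentially the same three-case argument as the paper: the agent is correct even at zero bid, the agent is still incorrect at the full bid, and otherwise the critical bid is the opposing neighbor mass $s$ at which the sign flips. Reading the payment as $-y_i s=\Abs{s}$ and fixing the $\sign(0)$ convention to obtain exactly the open interval $(-1,0)$ makes precise what the paper leaves implicit when it says the flip point ``yields the magnitude of critical bid.''
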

\begin{proof}
By Myerson's lemma, agent pays their critical bid. 
Given index $i\in[\datasize]$, denote the weighted sum of agent $i$'s neighbor labels by $f_{S^\minusi}(\x_i) = \sum_{(w,\x,y)\in N^\minusi_k(\x_i)} w y$ . The \knn{} prediction for agent $i$ is accurate if the total weighted sum, including the agent's own reported weight $w_i$, is identical in sign to the true label $y_i$. 

We consider three cases:
\begin{itemize}
    \item If the agent receives an accurate prediction even when their reported weight is zero ($w_i=0$), their critical bid is zero.
    This holds when $f_{S^\minusi}(\x_i)$ and $w_i y_i$ have the same sign, formally $f_{S^\minusi}(\x_i)/w_iy_i\ge0$.
    \item Similarly, if the prediction remains inaccurate even when the agent reports their true valuation, their critical bid is also zero.
    This holds when $f_{S^\minusi}(\x_i)$ is opposite in sign and large in magnitude compared to $w_i y_i$, formally $f_{S^\minusi}(\x_i)/w_iy_i\le-1$.
    \item Otherwise, the prediction is inaccurate when $w_i = 0$ but becomes accurate at the agent's true valuation $w_i=v_i$. Therefore, there exists a critical weight where the prediction flips from inaccurate to accurate. By definition of the weighted \knn{} algorithm, this occurs when the weight is exactly equal to $f_{S^\minusi}(\x_i)$, which yields the magnitude of critical bid in this case.
\end{itemize}

Combining the three cases above into a single expression yields the required result.
\end{proof}

\subsection{\knn{} Total Payment Lower Bound}
\label{appendix:knn_revenue_lower_bound_proof}

\begin{remark}[Smoothness of valuations distribution]
    \label{remark:knn_valuations_smoothness}
    In the proofs below, the term \emph{sufficiently smooth} refers to scalar random variables $W$ having density functions with the smoothness conditions required by \citet{karmarkar1986probabilistic}. In particular, our analysis holds for density functions with a bounded fourth moment and a cosine transform (real part of the characteristic function) satisfying $\expect{w\sim W}{\cos (wt)}\le \frac{1}{1+\Abs{t}^\gamma}$ for some $\gamma>0$ \citep[see][Theorem 3.1]{karmarkar1986probabilistic}. Among other distributions, this condition is satisfied by uniform distributions on the unit interval, normal distributions, and density functions with bounded support and sufficiently smooth derivatives. See \citet{karmarkar1986probabilistic} for further discussion and extensions.
\end{remark}

We recall the classic random partitioning theorem of \citet{karmarkar1986probabilistic}:
\begin{theorem}[Probabilistic optimal partitioning; {\citet[Theorem 3.1]{karmarkar1986probabilistic}}]
\label{thm:random_npp}
    Let $W$ be a random variable with bounded support and sufficiently continuous probability density. Given $w_1,\dots,w_n\sim W$, and some constant $\alpha$ define the optimal partition random variable as:
    $
    \Delta_n = \min_{\sigma_1,\dots,\sigma_n\in\Set{-1,1}}\Abs{\sum_i \sigma_i w_i - \alpha}
    $.
    Then the median value of $\Delta_n$ is $O\left(\frac{\sqrt n}{2^n}\right)$. 
\end{theorem}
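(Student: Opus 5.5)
Since this is the classical result of \citet{karmarkar1986probabilistic}, my plan is to reconstruct it with a second-moment (Paley--Zygmund) argument over sign vectors. Fix a window width $\epsilon>0$ and let
$$
N_\epsilon=\sum_{\sigma\in\Set{-1,1}^n}\Indicator{\Abs{\textstyle\sum_i\sigma_i w_i-\alpha}\le\epsilon}.
$$
Then $\Delta_n\le\epsilon$ exactly when $N_\epsilon>0$. So it suffices to show that for $\epsilon=C\sqrt n/2^n$, with $C$ a large enough constant, $\prob{}{N_\epsilon>0}\ge 1/2$. Paley--Zygmund gives $\prob{}{N_\epsilon>0}\ge \expect{}{N_\epsilon}^2/\expect{}{N_\epsilon^2}$. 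The plan is therefore to show $\expect{}{N_\epsilon}\to\infty$ (as $C$ grows) and $\expect{}{N_\epsilon^2}\le(1+o(1))\expect{}{N_\epsilon}^2+O(\expect{}{N_\epsilon})$.

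For the first moment, fix $\sigma$. The sum $\sum_i\sigma_iw_i$ is a sum of $n$ independent bounded variables with variance of order $n$. I will use a local limit theorem, justified by the assumed decay of the cosine transform $\expect{}{\cos(wt)}\le 1/(1+\Abs t^\gamma)$, which makes the density of the sum integrable after finitely many convolutions. It shows that the density of this sum at $\alpha$ is $\Theta(1/\sqrt n)$ uniformly in $\sigma$, since $\alpha$ is fixed and the mean is $O(n)$.

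The mean causes a wrinkle: for sign vectors with many more $+1$ than $-1$ entries, $\alpha$ lies far in the tail of the sum. I therefore restrict the count to balanced $\sigma$, say $\Abs{\sum_i\sigma_i}\le\sqrt n$; alternatively one can center $W$ when $\expect{}{W}=0$. There are still $\Theta(2^n)$ such vectors, and this gives $\expect{}{N_\epsilon}=\Theta(2^n\epsilon/\sqrt n)=\Theta(C)$.

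For the second moment I expand $\expect{}{N_\epsilon^2}$ as a sum over pairs $(\sigma,\sigma')$ and group the pairs by the size $j$ of the set on which they disagree. On the agreement set put $A=\sum\sigma_iw_i$, and on the disagreement set put $B=\sum\sigma_iw_i$. The two sums are then $A+B$ and $A-B$, with $A$ and $B$ independent sums of $n-j$ and $j$ terms. For $j$ and $n-j$ both larger than $\delta n$, a two-dimensional local limit theorem gives a joint density of $(A+B,A-B)$ at $(\alpha,\alpha)$ of order $1/n$, asymptotically the product of the marginals. These pairs therefore contribute $(1+o(1))\expect{}{N_\epsilon}^2$.

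The main obstacle is the remaining pairs: $j<\delta n$ (near-identical sign vectors, strongly correlated sums) and $j>(1-\delta)n$ (near-antipodal vectors, where $A+B$ and $A-B$ nearly negate each other). For these I would not use a local CLT. Instead I would bound the joint probability by $\epsilon\cdot\sup(\text{density of }2B\text{ or }2A)\cdot\prob{}{\Abs{A+B-\alpha}\le\epsilon}$, using that a sum of $\min(j,n-j)\ge1$ smooth terms has a density bounded by $O(1/\sqrt{\min(j,n-j)})$. The number of such pairs is $2^n\binom nj$, so their total contribution is at most
$$
\sum_{j<\delta n}2^n\binom nj\cdot\frac{\epsilon}{\sqrt n}\cdot\frac{\epsilon}{\sqrt j}\lesssim \expect{}{N_\epsilon}\cdot 2^{H(\delta)n}\epsilon,
$$
which is $o(1)$ times $\expect{}{N_\epsilon}$ for $\delta$ small because $\epsilon\approx 2^{-n}$. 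The antipodal end is handled symmetrically. One care point: when $\alpha\approx0$ the pair $(\sigma,-\sigma)$ is perfectly correlated. This only doubles the diagonal term, and that term is absorbed into $O(\expect{}{N_\epsilon})$.

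Combining the pieces, $\prob{}{N_\epsilon>0}\ge \frac{\Theta(C)^2}{(1+o(1))\Theta(C)^2+O(C)}\ge 1/2$ for $C$ and $n$ large. Hence the median of $\Delta_n$ is $O(\sqrt n/2^n)$.
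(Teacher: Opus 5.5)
The paper gives no proof of this statement. It is quoted from \citet{karmarkar1986probabilistic} and used as a black box in the $k$-NN lower bound, so your second-moment argument is a self-contained reconstruction rather than a variant of anything in the paper. Its skeleton is sound:
\begin{itemize}
\item Paley--Zygmund applied to the number of near-optimal sign vectors;
\item $\mathbb{E}N_\epsilon=\Theta(C)$ from a local limit theorem over balanced $\sigma$;
\item near-diagonal and near-antipodal pairs made negligible by the factor $\epsilon\,2^{H(\delta)n}=O(\sqrt n\,2^{-(1-H(\delta))n})$;
\item $\sigma'=-\sigma$ contributing at most one extra copy of $\mathbb{E}N_\epsilon$, and nothing once $\epsilon<|\alpha|$.
\end{itemize}
Restricting to balanced $\sigma$ is a legitimate substitute for summing over all $2^n$ sign vectors. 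In that alternative, $\sum_\sigma e^{it\sum_i\sigma_iw_i}=2^n\prod_i\cos(tw_i)$ symmetrizes the steps, and a hypothesis on $\mathbb{E}\cos(tw)$, like the one the paper quotes, is the natural one.

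One claim is false as stated and must be repaired. For $\delta n\le j\le(1-\delta)n$, the pair $(A+B,A-B)$ is not asymptotically independent. Indeed $\mathrm{Cov}(A+B,A-B)=\mathrm{Var}(A)-\mathrm{Var}(B)=(n-2j)\mathrm{Var}(W)$, so the correlation is $1-2j/n$. Write $f_A,f_B$ for the densities. In the Gaussian approximation, the ratio of $\tfrac12 f_A(\alpha)f_B(0)$ to $f_{A+B}(\alpha)f_{A-B}(\alpha)$ carries the prefactor $n/(2\sqrt{j(n-j)})$, which equals $1/(2\sqrt{\delta(1-\delta)})>1$ at $j=\delta n$.

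This is not cosmetic. Paley--Zygmund yields $\mathbb{P}[N_\epsilon>0]\ge 1/2$ only if the leading constant in $\mathbb{E}N_\epsilon^2$ is $1+o(1)$. A bound $K(\mathbb{E}N_\epsilon)^2$ with $K>1$ gives probability only $1/K$, and hence no median bound.

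The fix is a binomial-concentration step. Write $m_A,m_B$ for the means of $A,B$; both are $O(\sqrt n)$ because $\sigma$ and $\sigma'$ are balanced. Then the ratio is $1+o(1)$ uniformly for $|j-n/2|\le n^{2/3}$. For the other middle values of $j$ it is only $O_\delta(1)$, but those pairs form an $e^{-\Omega(n^{1/3})}$ fraction of all pairs, so they contribute $o((\mathbb{E}N_\epsilon)^2)$.

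There are also two smaller points.
\begin{itemize}
\item In the near-diagonal bound you cannot factor through $\mathbb{P}[|A+B-\alpha|\le\epsilon]$, since $B$ and $A+B$ are dependent. Conditioning on $B$ instead gives $4\epsilon^2\|f_A\|_\infty\|f_B\|_\infty$, which is exactly what your display uses.
\item Let $\phi$ be the characteristic function of $W$. A per-$\sigma$ local limit theorem needs control of $|\phi(t)|^n$, and an upper bound on $\mathbb{E}\cos(tw)=\mathrm{Re}\,\phi(t)$ does not provide it. Rely instead on a bounded density. This gives $\phi\in L^2$, hence $\sup_x f_{S_j}(x)=O(1/\sqrt j)$ uniformly in $j\ge1$, where $S_j$ is any sum of $j$ terms $\pm w_i$.
\end{itemize}
With these repairs your argument proves the stated median bound.
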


Translating to the \knn{} setting, we show the following:
\begin{lemma}
\label{lemma:w1_larger_than_min_sum}
Let $W$ be a random variable with bounded support $[0,\valbound]$ and sufficiently continuous density. Denote the median of $W$ by $M>0$. Given $w_2,\dots,w_n\sim W$, denote:
$$y_2,\dots,y_n = \argmin_{\sigma_2,\dots,\sigma_n\in\Set{-1,1}}
\Abs{\sum_{i=2}^n \sigma_i w_i + \frac{M}{2}}$$
For sufficiently large $k$, it holds that:
$$
\prob{w_1,\dots,w_n\sim W}{
w_1 > \frac{3M}{4} > 
-{\sum_{i=2}^k y_i w_i}
> \frac{M}{4}}
\ge \frac{1}{4}
$$
\end{lemma}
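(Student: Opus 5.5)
The plan is to split the target event into two pieces that depend on disjoint sets of variables and then use independence. The first piece is $E_1=\{w_1>3M/4\}$, which involves only $w_1$. The second is $E_2=\{3M/4>-\sum_{i=2}^k y_i w_i>M/4\}$, which involves only $w_2,\dots,w_k$, since the signs $y_i$ are chosen from these weights alone. (I read the index range in the definition of the $y_i$ as $2,\dots,k$, with $n=k$.) Because the $w_i$ are i.i.d., $\prob{}{E_1\cap E_2}=\prob{}{E_1}\,\prob{}{E_2}$. So it is enough to show $\prob{}{E_1}\ge 1/2$ and $\prob{}{E_2}\ge 1/2$ for sufficiently large $k$.

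For $E_2$, write $\Delta=\min_{\sigma}\Abs{\sum_{i=2}^k\sigma_i w_i+M/2}$. By definition of the $y_i$, $\Abs{\sum_{i=2}^k y_iw_i+M/2}=\Delta$. This means $-\sum_{i=2}^k y_iw_i$ lies in $[M/2-\Delta,\,M/2+\Delta]$. Hence $E_2$ holds whenever $\Delta<M/4$.

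Now apply \Cref{thm:random_npp} to the $k-1$ i.i.d. variables $w_2,\dots,w_k$ with $\alpha=-M/2$. Its hypotheses are met: the support is bounded in $[0,\vmax]$, and the density is smooth in the sense of \Cref{remark:knn_valuations_smoothness}. The theorem says the median of $\Delta$ is at most $C\sqrt{k-1}/2^{k-1}$ for some constant $C$. Choose $k$ large enough that $C\sqrt{k-1}/2^{k-1}<M/4$; this is possible because $M>0$ is a fixed constant of $W$. Then $\prob{}{\Delta<M/4}\ge\prob{}{\Delta\le\mathrm{median}(\Delta)}\ge 1/2$, and therefore $\prob{}{E_2}\ge 1/2$.

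For $E_1$, I would like $\prob{}{w_1>3M/4}\ge 1/2$. Since $M$ is the median, $\prob{}{w_1\ge M}\ge 1/2$. Also $3M/4<M$ and $M>0$, so $\{w_1\ge M\}\subseteq\{w_1>3M/4\}$. This gives $\prob{}{E_1}\ge 1/2$.

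Multiplying the two bounds gives $\prob{}{E_1\cap E_2}\ge 1/4$, as claimed.

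The main obstacle is the careful use of \Cref{thm:random_npp}. It gives a median bound only asymptotically, with an unspecified $O(\cdot)$ constant. So "sufficiently large $k$" has to be read as "beyond the threshold $k_0$ at which the implied constant times $\sqrt{k}/2^{k}$ falls below $M/4$." I also need the theorem to hold for a shift $\alpha$ that is nonzero but fixed. The paper's statement of the theorem already allows an arbitrary constant $\alpha$, so I would rely on that and note that $\alpha=-M/2$ lies within the reachable range of the signed sums once $k$ is large.
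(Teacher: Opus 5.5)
Your proposal is correct and follows the paper's proof essentially step for step. You use the independence of $w_1$ from $w_2,\dots,w_k$, the median bound $\prob{}{w_1\ge M}\ge 1/2$, and \Cref{thm:random_npp} with shift $M/2$ to push the median of $\Abs{\sum_{i=2}^k y_iw_i+M/2}$ below $M/4$ for large $k$; your derivation of the strict event $E_2$ from $\Delta<M/4$ is, if anything, slightly tidier than the paper's, which bounds the second factor using non-strict inequalities.
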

\begin{proof}
Denote
$\Delta_{k-1}={\sum_{i=2}^k y_i w_i}$. By \Cref{thm:random_npp}, there exist $k_0$ such that for all $k\ge k_0$ the median of $\Abs{\Delta_{k-1}+\frac{M}{2}}$ is smaller than $\frac{M}{4}$. As $w_1$ and $\Delta_{k-1}$ are independent, it holds that:
\begin{align*}
\prob{w_1,\dots,w_n\sim W}{w_1 \ge M > \frac{3M}{4} > -\Delta_{k-1} > \frac{M}{4}} 
&\ge
\underbrace{\prob{w_1 \sim W}{w_1 \ge M}}_{=\frac{1}{2}} \cdot \underbrace{\prob{w_2,\dots,w_n \sim W}{\frac{3M}{4}\ge -\Delta_{k-1} \ge \frac{M}{4}}}_{\ge \frac{1}{2}}
\ge \frac{1}{4}
\end{align*}
\end{proof}

\newcommand\independent{\protect\mathpalette{\protect\independenT}{\perp}}
\def\independenT#1#2{\mathrel{\rlap{$#1#2$}\mkern2mu{#1#2}}}

Our revenue lower bounds assume there exist a finite-volume region of the data distribution with some level of label noise and sufficiently smooth distribution of weights:

\begin{definition}[Region with label noise]
Let $D\in\DistOver{\JointSpace}$ be a data distribution. $D$ has a \emph{region with label noise} with parameters $(\varepsilon,\delta,\eta)$ if there exist a point $\x_0\in\Features$, constants $\varepsilon,\delta>0$, and a constant $\eta\in(0,0.5)$ such that:
\begin{enumerate}
    \item $\prob{\x}{\Norm{\x-\x_0}\le\varepsilon}\ge \delta$
    \item $\prob{(w,\x,y)}{y=1\mid w,\Norm{\x-\x_0}\le 2\varepsilon} \in (\eta,1-\eta)$
\end{enumerate}
\end{definition}
\smallskip

\begin{theorem}[Revenue lower bound for \knn{} under label noise, for sufficiently large $k$]
\label{thm:knn_large_k_revenue_lowerbound}
Let $D$ be a data distribution that has a region with label noise with parameters $(\varepsilon,\delta,\eta)$, and assume 
that for any $\x$ such that $\Norm{\x-\x_0}\le\varepsilon$, the distribution of neighbor valuations is sufficiently smooth (see Remark \ref{remark:knn_valuations_smoothness}). 
Then for a sufficiently large fixed $k$ and weighted \knn{} classification, expected revenue and number of payers grow asymptotically linearly with the number of participants:
\begin{align*}
\expect{S\sim D^\datasize}{\npay} = \Omega(\datasize)
\\
\expect{S\sim D^\datasize}{\revenue} = \Omega(\datasize)
\end{align*}
\end{theorem}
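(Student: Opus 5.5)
By linearity of expectation and exchangeability of the i.i.d.\ training points, $\expect{S}{\npay} = \datasize\cdot\prob{S}{p_1>0}$, where $p_1$ is the payment of a fixed point, say the first. It therefore suffices to show that $\prob{S}{p_1>0}\ge c$ for a constant $c>0$ independent of $\datasize$. For revenue, we will show that on the same event the payment itself is at least $M/4$, where $M$ is the median valuation from \Cref{lemma:w1_larger_than_min_sum}. This gives $\expect{S}{\revenue}\ge \datasize\cdot c\cdot M/4=\Omega(\datasize)$.

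The characterization of payments in the \knn{} proposition says that agent $1$ pays exactly when $f_{S^{\setminus 1}}(\x_1)/(w_1y_1)\in(-1,0)$. In that case the payment equals $\Abs{f_{S^{\setminus 1}}(\x_1)}$, the magnitude of the neighbors' weighted sum. So we need a constant-probability event on which the other $k-1$ neighbors' signed weights sum to something opposite in sign to $w_1y_1$ and smaller in magnitude than $w_1$.

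We build this event in three stages.
\begin{enumerate}
\item \emph{Locality.} With probability at least $\delta$, $\x_1$ lies in the ball $\Norm{\x-\x_0}\le\varepsilon$. Conditional on this, we need the $k-1$ nearest other points to lie within $2\varepsilon$ of $\x_0$. A sufficient condition is that at least $k-1$ of the remaining $\datasize-1$ points fall in the $\varepsilon$-ball; by the triangle inequality, all nearest neighbors are then within $2\varepsilon$ of $\x_1$. By a Chernoff/binomial bound this holds with probability tending to $1$, so it is at least $1/2$ for $\datasize\ge \datasize_0(k,\delta)$; smaller $\datasize$ can be absorbed into constants.
\item \emph{Label pattern.} Conditional on the neighbors' locations and weights, their labels are conditionally independent. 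By the label-noise assumption, each label takes either sign with probability at least $\eta$. Hence any prescribed sign pattern, together with $y_1$, occurs with probability at least $\eta^{k}$. This is a constant because $k$ is fixed.
\item \emph{Weights.} The neighbor weights come from the smooth distribution. Apply \Cref{lemma:w1_larger_than_min_sum} with $w_1$ the agent's weight and $w_2,\dots,w_k$ the neighbors' weights: with probability at least $1/4$, $w_1>3M/4$ and there is a sign pattern $(y_2,\dots,y_k)$ with $-\sum_i y_iw_i\in(M/4,3M/4)$. Choose $y_1=+1$ and require the realized labels to equal that pattern.
\end{enumerate}
On the resulting event, $f_{S^{\setminus1}}(\x_1)=\sum_{i\ge2}y_iw_i\in(-3M/4,-M/4)$ while $w_1y_1>3M/4$. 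The ratio therefore lies in $(-1,0)$, and the payment exceeds $M/4$. Multiplying the bounds, $\prob{}{p_1>0}\ge \delta\cdot\tfrac12\cdot\eta^{k}\cdot\tfrac14$.

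The main obstacle is making this conditioning rigorous. The identity of the $k-1$ nearest neighbors depends on the features of all points, and the weights and labels must stay (conditionally) i.i.d.\ given that selection. We handle this by first conditioning on all feature vectors, which fixes $N_k(\x_1)$. We then use that, given features in the $2\varepsilon$-ball, the weights follow the smooth distribution assumed in the theorem and the labels are noisy with parameter $\eta$ given $(w,\x)$. This requires reading the smoothness hypothesis as holding for the conditional weight distribution given location, which is how we interpret the assumption on neighbor valuations. A second subtlety is that \Cref{lemma:w1_larger_than_min_sum} fixes the sign pattern as a function of the weights, so the labels must be drawn after the weights. The conditional-on-$w$ form of the label-noise assumption gives exactly this.
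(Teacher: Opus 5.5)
Your proposal follows the paper's proof. The paper also bounds $\mathbb{E}_S[\#\mathrm{pay}(S)]$ from below by $\datasize$ times the probability of the same chain of events: the agent lies in the $\varepsilon$-ball, its neighbors lie near $\x_0$, the neighbor labels equal the optimal sign pattern, and the weight event of \Cref{lemma:w1_larger_than_min_sum} holds. This gives $\datasize\,\delta^k\eta^{k-1}/4$, and revenue then follows from the $M/4$ floor on the critical bid. Two of your differences are cosmetic: a binomial tail bound in place of the paper's $\delta^{k-1}$, and $\eta^k$ in place of $\eta^{k-1}$. Your explicit conditioning order (features, then weights, then labels) matters more. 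Together with reading the noise and smoothness hypotheses pointwise in $\x$ on the $2\varepsilon$-ball, with a common weight law, it is what the paper's factorization silently relies on. The literal ball-averaged noise condition would not suffice. Take $\x$ uniform on $[-1,1]$, $\x_0=0$, $\varepsilon=1/2$, $y=\mathrm{sign}(\x)$ and $w$ independent uniform: the condition holds, yet only $O(k)$ agents pay in expectation.

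One step does not hold as written, and the paper's proof makes the same slip. Suppose $\x_1$ and $k-1$ other points lie in $B(\x_0,\varepsilon)$. The triangle inequality then places the $k-1$ nearest neighbors within $2\varepsilon$ of $\x_1$, which is only within $3\varepsilon$ of $\x_0$. That can fall outside the region where the noise and smoothness hypotheses apply. The repair fits inside your binomial argument: count the other sample points in $B(\x_1,\varepsilon)$ instead. Since $\x_1$ lies in the support of the feature distribution almost surely, this ball has positive mass. By dominated convergence, the count is then at least $k-1$ with probability tending to one. All $k-1$ neighbors are then within $\varepsilon$ of $\x_1$, hence within $2\varepsilon$ of $\x_0$, so the locality event has probability at least $\delta/2$ for large $\datasize$. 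If you want an explicit constant instead, partition $B(\x_0,\varepsilon)$ into $N_d$ pieces of diameter at most $\varepsilon$. Pick a piece of mass at least $\delta/N_d$, and require $\x_1$ and $k-1$ other points to land in it.
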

\begin{proof}
We focus on points $\varepsilon$-close to $\x_0$. For the amount of payers, it holds that:
\begin{align*}
\npay &= \sum_{i=1}^\datasize \Indicator{p_i > 0}
\\&
\ge \sum_{i=1}^\datasize 
\Indicator{\Norm{\x_i-\x_0} \le \varepsilon}
 \cdot
\Indicator{p_i > 0}
\\&
\ge \sum_{i=1}^\datasize 
\Indicator{\Norm{\x_i-\x_0} \le \varepsilon}
 \cdot
\Indicator{\forall (w,\x,y)\in N_k(\x_i):\, \Norm{\x-\x_0}\le 2\varepsilon}
\cdot
\Indicator{p_i > 0}
\end{align*}
Denote an arbitrary weight-feature-label triplet in the training set by $(w_i,\x_i,y_i)\sim D$, and denote by $M>0$ the median of the neighbor weights distribution. For the expected number of payers $\expect{S}{\npay}$, we will show that agents must pay when certain events hold, and bound the corresponding probabilities from below. 
Define the following events:
\begin{enumerate}
    \item Event $A_1$: the random point $\x_i$ is $\varepsilon$-close to $\x_0$:
    $\Norm{\x_i-\x_0} \le \varepsilon$
    \item Event $A_2$: the $k$ neighbors of $\x_i$ are $2\varepsilon$-close to $\x_0$:
    $\forall (w',\x',y')\in N_k(\x_i):\, \Norm{\x-\x_0}\le 2\varepsilon$
    \item Event $A_3$: Random neighbor labels satisfy: $y_1',\dots y_{k-1}' = \argmin_{\sigma\in\Set{-1,1}} \Abs{\sum_{j=1}^{k} \sigma_j w_j' + \frac{M}{2}}$
    \item Event $A_4$: It holds that $w_i \ge -{\sum_{j=1}^{k} y_j' w_j'} \ge \frac{M}{4}$
\end{enumerate}
Taking the expectation over $S\sim D^\datasize$ and applying the definitions above, we obtain:
\begin{align*}
\expect{S\sim D^\datasize}{\npay} 
&\ge
\datasize
\cdot
\prob{}{A_1 \cap A_2 \cap A_3 \cap A_4}
\\&=
\datasize
\cdot
\prob{}{A_1}
\cdot
\prob{}{A_2\mid A_1}
\cdot
\prob{}{A_3\mid A_1 \cap A_2}
\cdot
\prob{}{A_4\mid A_1 \cap A_2 \cap A_3}
\end{align*}
We note that the following holds:
\begin{enumerate}
    \item $\prob{}{A_1}$ is bounded from below by $\delta$ by assumption 1 of the theorem.
    \item $\prob{}{A_2\mid A_1}$ is bounded from below by $\delta^{k-1}$ by assumption 1, as the existence $k-1$ points that are $\varepsilon$-close to $\x_0$ ensures that all neighbors of $\x_i$ are $2\varepsilon$-close to $\x_0$. 
    \item $\prob{}{A_3\mid A_1 \cap A_2}$ is bounded from below by $\eta^{k-1}$ by assumption 2 of the theorem.
    \item $\prob{}{A_4\mid A_1 \cap A_2 \cap A_3}$ is bounded from below by $\frac{1}{4}$ for a sufficiently large fixed $k$ by \Cref{lemma:w1_larger_than_min_sum} and assumption 3 of the theorem.
\end{enumerate}
Combining the above, we obtain:
$$
\expect{S\sim D^\datasize}{\npay} \ge m \cdot \frac{\delta^k \eta^{k-1}}{4} = \Omega(\datasize)
$$

For revenue, observe that event $A_4$ also guarantees ${\sum_{j=1}^{k} y_j' w_j'} \le -\frac{M}{4}$, and therefore agent $i$ needs to bid at least $\frac{M}{4}$ in order to change the sign of the prediction, and from Myerson's lemma we obtain:
$$
\expect{S\sim D^\datasize}{\revenue} \ge m \cdot \frac{\delta^k \eta^{k-1}}{4}\cdot \frac{M}{4} = \Omega(\datasize)
$$
as required.
\end{proof}

\section{Additional Algorithms} \label{appx:additional_payment_algorithms}
 For completeness we give below pseudocode for
 computing individual payments exactly (Alg.~\ref{algo:compute_payment_exact})
and randomly (Alg.~\ref{algo:compute_payment_randomized}),
and for the one-shot randomized algorithm (Alg.~\ref{algo:compute_all_payments_random}) due to \citet{Babaioff2015myersonpaymentcalculation}.

\begin{algorithm}[H]
   \caption{\texttt{ComputePaymentExact}}
   \label{algo:compute_payment_exact}
\begin{algorithmic}[1]
    \item []  {\bfseries Input:} 
    User index $i$, sample set $S$, bids $\bids$, search steps $k$
    \STATE solve $\hhat_0= \argmin_{h \in H} \O(h;(0,\bids_{-i}))$
   \IF{$a_i({\hhat_{0}}) = 1$}
    \STATE $\payment_i \gets 0$
    \ELSE 
    \STATE $\zlo=0, \zhi=b_i$ \algocmnt{binary search}
    \FOR{$t=1,\dots, k$}
    \STATE $z_t \gets (\zlo+\zhi)/2$
    \STATE $\hhat_t= \argmin_{h \in H} \O(h;(z_t,\bids_{-i}))$
    \STATE \textbf{if} $\,a_i(\hhat_t)=0\,$
    \textbf{then} $\,\zlo \gets z_t\,$
    \textbf{else} $\,\zhi \gets z_t\,$
    \ENDFOR
    \STATE $\payment_i \gets \zlo$
    \algocmnt{highest bid giving $a_i=0$}
   \ENDIF

   \STATE {\bfseries return} $\payment_i$
\end{algorithmic}
\end{algorithm}

\begin{algorithm}[H]
\caption{\texttt{ComputePaymentRandomized} %
}
\label{algo:compute_payment_randomized}
\begin{algorithmic}[1]
 \item [] \textbf{Input:} User index $i$,
 bids $\bids$, sample set $\smplst$, number of trials $N$
    \STATE solve $\hhat_0= \argmin_{h \in H} \O(h;(0,\bids_{-i}))$  \algocmnt{optional step}
   \IF{$a_i({\hhat_{0}}) = 1$}
    \STATE \bfseries{return} $\payment_i = 0$  
    \ENDIF
    \STATE $\payment_i \gets 0$
    \FOR{$t = 1, \dots, N$}
        \STATE  $u \sim \text{Uniform}[0, b_i]$
        \STATE solve $\hhat_u = \argmin_{h \in H} \O(h;(u, \bids_{-i}))$  
        \STATE \textbf{if} $a_i(\hhat_u) = 0 $ \textbf{then}
           $\payment_i \gets \payment_i + b_i / N$        
    \ENDFOR
    \STATE \bfseries{return} $\payment_i$
\end{algorithmic}
\end{algorithm}

\begin{algorithm}[H]
\caption{\texttt{ComputeAllPaymentsOneShot \citep{Babaioff2015myersonpaymentcalculation}}
}
\label{algo:compute_all_payments_random}
\begin{algorithmic}[1]
\item [] \textbf{Input:} Resampling probability $\mu \in (0,1)$, bids $\bids$

\FOR{$i = 1, \dots, m$ \textbf{independently}}
    \STATE Sample $\gamma_i \sim \text{Uniform}[0,1]$
    \STATE Sample $\theta_i \sim \text{Uniform}[0,1]$
    \IF{$\theta_i < 1 - \mu$}
        \STATE $\chi_i \gets 1$
    \ELSE
        \STATE $\chi_i \gets \gamma_i^{1/(1-\mu)}$
    \ENDIF
\ENDFOR
\STATE  $\bids' \gets (b'_1,\ldots,b'_n)$ where $b'_i = \chi_i b_i$
\STATE solve $\hhat' = \argmin_{h \in H} \O(h;\bids')$ 

\FOR{$i = 1, \dots, m$}
    \IF{$\chi_i = 1$}
    \STATE  $p_i \gets b_i \cdot a_i(\hhat')$
    \ELSIF{$\chi_i < 1$}
    \STATE $p_i \gets b_i \cdot a_i(\hhat') \cdot (1-\frac{1}{\mu})$
    \ENDIF
\ENDFOR
\end{algorithmic}
\end{algorithm}

\subsection{Discarding Examples in Payment Computations With SVM} \label{appx:svm_discard}

\begin{proof}[Proof of Lemma \ref{lem:stability_through_out_points}]
Denote by $\beta$ the classification stability of standard (unweighted) SVM and $\beta_i$ to be the classification stability of weighted SVM when leaving out data point $i$.
Consider a user $i \in [m]$, and assume w.l.o.g. that $y_i = 1$. 
We use the convention that the margins are of unit size (hence the 1 in the condition).
First, we'll prove that if for any $j$ it holds that $f(x_j; \bids) > 1 + 2\beta_i$,
then for all $z \in [0,b_i]$ this results in $f(x_j; (z,\bids_{-i})) > 1$,
i.e., $x_j$ remains outside the margin.

Let $z \in [0,b_i]$:
\begin{align*}
\big|f(x_j; \bids) - f(x_j; (z,\bids_{-i}))\big|
&= \big| f(x_j; \bids) - f(x_j; (0,\bids_{-i})) 
 + f(x_j; (0,\bids_{-i})) - f(x_j; (z,\bids_{-i})) \big| \\
&\leq \big| f(x_j; \bids) - f(x_j; (0,\bids_{-i})) \big|
+ \big| f(x_j; (0,\bids_{-i})) - f(x_j; (z,\bids_{-i})) \big| \\
&\underset{(*)}{\le} \beta_i(\bids) + \beta_i((z,\bids_{-i}))
\underset{(**)}{\le}  2 \beta_i(\bids).
\end{align*}

Where the inequality $(*)$ is by the definition of weighted classification stability,
and $(**)$ holds since
$\forall z\in [0,b_i] : \beta_i(\bids) = \beta b_i \geq \beta z =  \beta_i((z,\bids_{-i}))$.
From the above we can derive:  
\[
 f(x_j; (z,\bids_{-i})) - f(x_j; \bids) \geq - 2 \beta_i 
\]
Since $f(x_j; \bids) > 1+2\beta_i$, we get $f(x_j; (z,\bids_{-i})) > 1$ as required.

Now, we'll show that the solution of both $\svmobj((z,\bids_{-i}))$ and $\svmobj((z,\bids^{(i)}_{-i}))$ is identical for all $z \in [0,b_i]$.
This will entail equivalent allocations and hence equivalent payments.

 Notice that for all $j \in [m]$ with  $f(x_j; (z,\bids_{-i})) > 1$,
 the correctness constraint in $\svmobj((z,\bids_{-i}))$ is tight, i.e.,
 the corresponding slack variable $\xi_j$ is 0.
 Therefore, point $j$ does not contribute to the loss term in the objective.
Hence, changing the weight of point $j$ does not change the solution.
As a result, solving $\svmobj((z,\bids^{(i)}_{-i}))$ will give the same solution.

\end{proof}

\subsection{Non-Negativity of Payments in \Cref{algo:compute_payment_randomized} }
In the original Alg. \ref{algo:compute_payment_randomized}, as it appears in \cite{Archer2004Tardos}, the payment is set to be $b_i - ZX$ where $Z = b_i * 1[a_i(u,b_{-i}) = 1] \in \{0,b_i\}$ meaning it is set according to the result of the auction run on bids $(u, b_{-i})$ and $X$ is an independent unbiased estimator for the inverse of the overall probability that agent $i$ wins their desired allocation. This payment rule can produce a negative payoff, specifically when $X \in (0,1)$. However, since our allocations are deterministic, $ZX$ are reduced to a an indicator, $ZX = b_i * 1[a_i(u,b_{-i}) = 1]$, therefore negative payoffs are not possible in our setting.

\section{Experimental Details} \label{appx:exp_details}
All code and experiments are implemented in Python.
Code is attached as supplementary material, and also provided at
\url{https://github.com/BML-Technion/accuracy_auctions}.
All results are averaged over multiple random data generations or splits. The number of trials varied from 5-100, depending on the experimental setting and need.

\subsection{Learning Algorithms}
Our experiments used linear SVM (synthetic and real data) and logistic regression with $\ell_2$ regularization (real data).
For implementation we used sklearn for both, which supports example weighing.
In our synthetic experiments the regularization coefficient $\lambda$ (or $C=1/\lambda$ in the implementation) was set to the default value of 1, though this had little effect on results.
In our semi-synthetic experiments $\lambda$ was fine-tuned to prevent overfitting.

Specifically, Sec. \ref{sec:exps_synth} and Appendices \ref{appx:exact_random_table}, \ref{bias_variance_v} use linear SVM with hinge loss. Sec.\ref{sec:exps_real} use logistic regression with $L_2$ regularization. Appendix. \ref{increasing_m} uses linear SVM with both hinge and squared hinge loss in addition to logistic regression with $L_2$ regularization . Appendix \ref{subsec:knn_vs_svm} also uses KNN implemented using sklearn with different values of $k$.

\subsection{Real Data} \label{appx:exp_details_real}
Our semi-synthetic experiments are base on the \dataset{folktables} dataset \cite{ding2021retiring}. 
This data is publicly available at: \url{https://github.com/socialfoundations/folktables}.
Specifically, we used the Income task where the goal is to predict wether an individual’s income exceeds $\$50K$. We focused our analysis on New Jersey (NJ) since, in this state, the distribution of the target variable (PINCP > \$50K) was approximately balanced. This choice minimizes issues related to class imbalance and allows us to construct larger balanced training datasets without excessive subsampling.
The data includes $\sim 44,000$ examples;
of these, in each trail we use a random subset of $30,000$ examples
split 60-40 into train and validation sets.

\paragraph{Features.}
We used the following standard features:
\texttt{COW} (class of worker),
\texttt{SCHL} (educational attainment),
\texttt{MAR} (marital status),
\texttt{OCCP} (occupation code),
\texttt{RELP} (relationship),
\texttt{SEX},
\texttt{RAC1P} (race),
\texttt{AGEP} (age),
\texttt{DIS} (disability status),
\texttt{ESR} (employment status),
\texttt{HISP} (Hispanic origin),
\texttt{WKHP} (usual hours worked per week), and
\texttt{MIG} (migration status).

\paragraph{Distribution of values.}
We experimented with two different ways to set individual values $v$.
Both are parameterized by a "global" value $v_+$, which allows to control variation in values.
\begin{itemize}
\item \textbf{Feature-based:}
Here we set $v=v(z)$ where $z$ is the \texttt{WKHP} attribute of the dataset. Note $z$ was not included in features $x$.
Since a typical work-week is 40 hours,
we set $v=1$ if $z<40$, $v=2$ if $z=40$, and $z=v_+$ if $z>40$.
The corresponding ratios of values are roughly 43\%, 31\%, and 26\%, respectively.

\item \textbf{Label-based:}
Here we set $v=v(y)$ as $v(0)=1$ and $v(1)=v_+$. 
\end{itemize}

\paragraph{Preprocessing.}
Data was imported using the \dataset{folktables} API.
Ordinal variables were retained without modification, while selected demographic variables were converted into binary indicators using rule-based mappings. The occupation feature was coarsened into 14 broad occupational groups and subsequently one-hot encoded.
To obtain a smoother representation  simulate a continuous distribution, small normal noise ($\mu=0,\sigma=0.4$) was added to categorical and 1-hot entries.

\section{Additional Experimental Results} \label{appx:additional_exps}

\subsection{Welfare Accuracy Gap -- Constructive Example}
\label{appx:arbitrary_welfare}

To understand the gap between accuracy and welfare, we give a simple illustrative example. Consider a one-dimensional classification problem over threshold classifiers, $ h(x; \theta) = \mathbbm{1}\{x < \theta\}$.
Points $x$ are sampled uniformly from $[0,1.5]$, and labels $y$ are defined as:
$$
y = 0 \,\text{ if }\, x \in [0, 0.5],  \qquad
y \sim \mathrm{Ber}(0.5) \,\text{ if }\, x \in [0.5, 1],  \qquad
y = 1 \,\text{ if }\, x \in [1, 1.5]
$$
Figure \ref{fig:graph1} depicts a sampled dataset as described above.
For simplicity, assume values correlate with labels as $v_y$ with $v_0 = 1$ and $v_1 = 3$. An incorrect prediction entails $v_y = 0$ for both classes.
Fig.~\ref{fig:graph2} shows accuracy and welfare for all possible $\theta$.
In terms of accuracy, any threshold $\theta \in [0.5,1]$ is optimal since it attains the Bayes optimal risk.
However, since values relate to labels, welfare breaks this symmetry, 
and the welfare-optimal classifier is at $\theta=1$.%
Notice that here the welfare-optimal classifier is also accuracy-optimal (but not vice versa).
This exemplifies problems in which maximizing accuracy 
could result in lower welfare.

\begin{figure}[t!]
    \centering
    \begin{subfigure}{0.47\textwidth}
        \centering
        \includegraphics[width=1\linewidth]{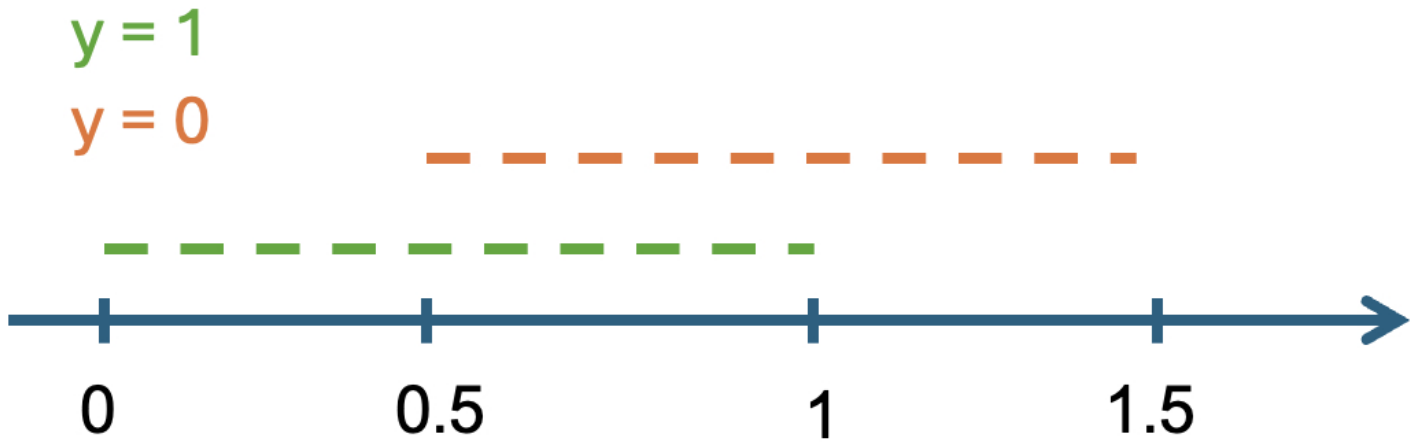} %
        \caption{The sampled data}
        \label{fig:graph1}
    \end{subfigure}
    \begin{subfigure}{0.47\textwidth}
        \centering
        \includegraphics[width=1\linewidth]{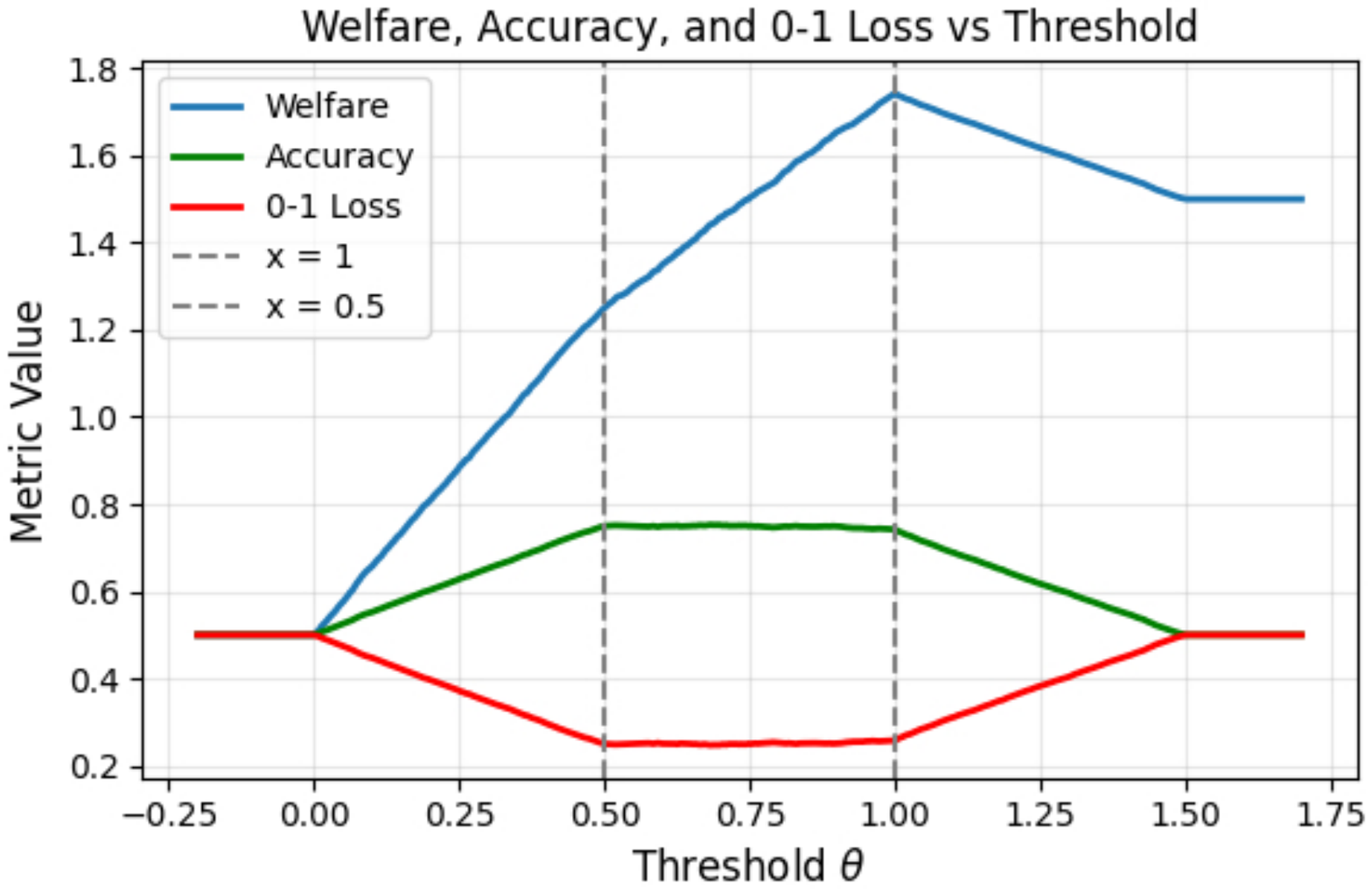} 
        \caption{Accuracy and average welfare as a function of $\theta$}
        \label{fig:graph2}
    \end{subfigure}
    \caption{1D example with threshold classifiers. }%
    \label{fig:graphs}
\end{figure}

\subsection{Payment Algorithms -- Comparison Between Exact and Randomized } \label{appx:exact_random_table}
Table \ref{tab:payment_algos_1} compares payments computed under the different algorithms proposed in Sec.~\ref{sec:payments}.
Here we use the same setup as in our first synthetic experiment.
The exact algorithm provides correct payments (up to numerical approximation) and so can serve as a benchmark.
Note the number of calls to $\O$ can be less than one because
relatively few points satisfy the condition in
Lemma~\ref{lem:which_point_pay}
and require explicit payment computation.
The random algorithm gives payments that are close in expectation,
and similar welfare.
Increasing $N$ provides a tighter approximation, 
with only a slight increase in runtime.
Notice the speedup relative to the exact baseline is only mild.
The oneshot baseline computes payments at a fraction of runtime;
this, however, comes at a steep increase in both bias and variance in payments, and leads to some extreme payments---both positive and negative (i.e., rebates).
Both bias and variance can be reduced by increasing $\mu$,
but this comes on account of a (small) reduction in welfare.

\begin{table}[H]
  \centering
\begin{tabular}{l|rrrr|r|r}
  & \multicolumn{4}{c|}{total payments ($m=1000$)} & \multicolumn{1}{c|}{\textbf{mean calls}} & \multicolumn{1}{c}{\textbf{mean}} \\
\textbf{payment algorithm} & \multicolumn{1}{l}{\textbf{mean}} & \multicolumn{1}{l}{\textbf{stdev}} & \multicolumn{1}{l}{\textbf{min}} & \multicolumn{1}{l|}{\textbf{max}} & \multicolumn{1}{c|}{\boldmath{}\textbf{to $\O$}\unboldmath{}} & \multicolumn{1}{c}{\textbf{ welfare}} \bigstrut[b]\\
\hline
exact & 0.003799 & 0.076 & 0.137 & 2.147 & 0.203 & 0.785 \bigstrut\\
\hline
random $N=1$ & 0.003899 & 0.085 & 1.000 & 3.000 & 0.125 & 0.783 \bigstrut[t]\\
random $N=3$ & 0.003832 & 0.080 & 0.333 & 3.000 & 0.133 & 0.783 \\
random $N=6$ & 0.003813	 & 0.078 & 0.167 & 3.000 & 0.145 & 0.783 \\
random $N=9$ & 0.003820	 & 0.077 & 0.111 & 3.000 & 0.157 & 0.783 \bigstrut[b]\\
\hline
oneshot $\mu=0.25$ & 0.010273	 & 2.081 & 1.000 & 3.000 & 0.001 & 0.781 \bigstrut[t]\\
oneshot $\mu=0.4$ & -0.001515	 & 2.079 & 1.000 & 3.000 & 0.001 & 0.780 \\
oneshot $\mu=0.5$ & -0.002303	 & 2.081 & 1.000 & 3.000 & 0.001 & 0.781 \\
oneshot $\mu=0.6$ & -0.000384	 & 2.080 & 1.000 & 3.000 & 0.001 & 0.780 \\
oneshot $\mu=0.75$ & -0.003162	 & 2.081 & 1.000 & 3.000 & 0.001 & 0.781 \bigstrut[b]\\
\hline
\end{tabular}%
\caption{Payment algorithms – comparison between exact and randomized}
  \label{tab:payment_algos_1}%
\end{table}%

\subsection{Payment Algorithms -- Theoretical Comparison Between Exact and Randomized }
The following table compares the number of calls to $\O$ for each of the payment algorithms proposed in \Cref{sec:payments}. We define $m$ as the number of agents, $r$ as the number of relevant agents, $z$ as the number of paying agents i.e. number of relevant agents with $p>0$, $k$ as the number of search iterations in binary search, and $n$ as the number of rounds of randomness.
Notice that in case $k$ is chosen to be relatively small (e.g. $k=33$ in \Cref{algo:compute_payment_exact}), the difference between the number of calls to $\O$ in the exact case versus the random case will not be substantial. 

\begin{table}[H]
  \centering
\begin{tabular}{l|r}
\textbf{Payment algorithm} &  \textbf{\# calls to $\O$ }\\
\hline
exact & $1+r+z\cdot k$ \\
\hline
random & $1+r+ z\cdot n$ \\
\hline
oneshot & 1\\
\hline
\end{tabular}%
\label{tab:payment_algos}%
\end{table}%

\subsection{Varying Loss and $\lambda$ with Increasing $m$}
\label{increasing_m}

Using real data (see \ref{appx:exp_details_real}), we study the interaction between loss functions, the regularization parameter $C$, the sample size $m$, and their combined effect on the performance of weighted SVMs and the resulting number of payers. the table below summarizes results across three loss functions, hinge, log, and squared hinge, over multiple scaling regimes for $C=c(m)$ and increasing values of $m$, reporting the number of payers, their percentage, and test accuracy.

Overall, the results, summarized in Table \ref{table}, indicate that log loss and squared hinge loss produce relatively similar behavior across increasing m, with only minor fluctuations in both the number and proportion of payers. In contrast, hinge loss yields a substantially higher number of payers across most configurations, although this effect is sensitive to the scaling of $C$. In particular, for certain choices of $c(m)$, the proportion of payers under hinge loss remains nearly invariant as $m$ increases.

\setlength{\tabcolsep}{3pt}
\renewcommand{\arraystretch}{1.1}

\begin{table}[H]
\begin{tabular}{llcccccccccccc}
\toprule

& & \multicolumn{4}{c}{Number of Payers} 
  & \multicolumn{4}{c}{Percent Pay} 
  & \multicolumn{4}{c}{Test Accuracy} \\

\cmidrule(lr){3-6} \cmidrule(lr){7-10} \cmidrule(lr){11-14}

loss & c(m) & 1000 & 2000 & 4000 & 10000 
  & 1000 & 2000 & 4000 & 10000 
  & 1000 & 2000 & 4000 & 10000 \\

\midrule

\multirow[t]{8}{*}{hinge} 
& $$m$$              & 215.700 & 408.100 & 837.100 & 2057.900 & 21.570 & 20.405 & 20.928 & 20.579 & 0.629 & 0.618 & 0.625 & 0.600 \\
& $m/10^2$          & 66.500  & 267.900 & 678.800 & 1981.600 & 6.650  & 13.395 & 16.970 & 19.816 & 0.699 & 0.665 & 0.661 & 0.612 \\
& $m/10^4$        & 16.100  & 14.200  & 21.800  & 73.200   & 1.610  & 0.710  & 0.545  & 0.732  & 0.692 & 0.706 & 0.716 & 0.714 \\
& $m/10^5 $      & 12.200  & 13.400  & 19.400  & 20.300   & 1.220  & 0.670  & 0.485  & 0.203  & 0.680 & 0.697 & 0.709 & 0.714 \\
& $\sqrt{m}/10 $    & 27.400  & 62.500  & 140.000 & 665.900  & 2.740  & 3.125  & 3.500  & 6.659  & 0.705 & 0.714 & 0.718 & 0.705 \\
& $\sqrt{m}/10^2$    & 22.400  & 19.900  & 31.500  & 73.200   & 2.240  & 0.995  & 0.788  & 0.732  & 0.700 & 0.710 & 0.715 & 0.714 \\
& $\sqrt{m}/10^3$   & 14.300  & 17.700  & 19.200  & 20.300   & 1.430  & 0.885  & 0.480  & 0.203  & 0.686 & 0.702 & 0.712 & 0.714 \\
& $\sqrt{m}/10^5$ & 2.600   & 5.800   & 9.100   & 13.400   & 0.260  & 0.290  & 0.228  & 0.134  & 0.598 & 0.662 & 0.679 & 0.691 \\

\cline{1-14}

\multirow[t]{8}{*}{log} 
& $m$              & 16.700 & 16.500 & 19.200 & 18.900 & 1.670 & 0.825 & 0.480 & 0.189 & 0.706 & 0.712 & 0.717 & 0.718 \\
& $m/10^2$         & 16.900 & 16.500 & 18.700 & 18.400 & 1.690 & 0.825 & 0.468 & 0.184 & 0.706 & 0.712 & 0.716 & 0.718 \\
& $m/10^4$        & 12.400 & 15.300 & 17.400 & 20.200 & 1.240 & 0.765 & 0.435 & 0.202 & 0.695 & 0.707 & 0.714 & 0.718 \\
& $m/10^5$      & 9.800  & 12.900 & 15.800 & 17.300 & 0.980 & 0.645 & 0.395 & 0.173 & 0.684 & 0.698 & 0.708 & 0.716 \\
& $\sqrt{m}/10$     & 17.300 & 16.100 & 17.900 & 18.300 & 1.730 & 0.805 & 0.447 & 0.183 & 0.705 & 0.712 & 0.717 & 0.718 \\
& $\sqrt{m}/10^2$    & 15.100 & 17.600 & 17.900 & 20.200 & 1.510 & 0.880 & 0.447 & 0.202 & 0.701 & 0.710 & 0.716 & 0.718 \\
& $\sqrt{m}/10^3 $  & 11.500 & 15.700 & 15.900 & 17.300 & 1.150 & 0.785 & 0.398 & 0.173 & 0.690 & 0.702 & 0.711 & 0.716 \\
& $\sqrt{m}/10^5$ & 2.800  & 3.500  & 5.900  & 9.500  & 0.280 & 0.175 & 0.148 & 0.095 & 0.616 & 0.658 & 0.678 & 0.691 \\

\cline{1-14}

\multirow[t]{8}{*}{$\text{hinge}^2$} 
& $m$              & 19.000 & 16.900 & 14.000 & 16.000 & 1.900 & 0.845 & 0.350 & 0.160 & 0.704 & 0.710 & 0.715 & 0.716 \\
& $m/10^2$          & 18.900 & 16.600 & 14.500 & 17.300 & 1.890 & 0.830 & 0.362 & 0.173 & 0.704 & 0.710 & 0.715 & 0.716 \\
& $m/10^4$        & 17.800 & 17.200 & 14.800 & 16.700 & 1.780 & 0.860 & 0.370 & 0.167 & 0.704 & 0.709 & 0.714 & 0.716 \\
& $m/10^5$       & 12.600 & 16.200 & 17.100 & 17.300 & 1.260 & 0.810 & 0.428 & 0.173 & 0.695 & 0.704 & 0.713 & 0.715 \\
& $\sqrt{m}/10$     & 19.200 & 16.800 & 14.600 & 16.300 & 1.920 & 0.840 & 0.365 & 0.163 & 0.704 & 0.710 & 0.715 & 0.716 \\
& $\sqrt{m}/10^2$    & 19.400 & 16.500 & 14.300 & 16.700 & 1.940 & 0.825 & 0.358 & 0.167 & 0.704 & 0.710 & 0.715 & 0.716 \\
& $\sqrt{m}/10^3$   & 16.200 & 17.000 & 16.900 & 17.300 & 1.620 & 0.850 & 0.422 & 0.173 & 0.699 & 0.708 & 0.714 & 0.715 \\
& $\sqrt{m}/10^5$ & 5.600  & 8.300  & 13.400 & 14.100 & 0.560 & 0.415 & 0.335 & 0.141 & 0.671 & 0.689 & 0.694 & 0.701 \\

\bottomrule
\end{tabular}
\caption{Varying loss and $\lambda$-regularization parameter with increasing $m$- dataset sizes.}
\label{table}
\end{table}

\subsection{Bias-Variance of Values}
\label{bias_variance_v}
We conduct a simulation to examine whether heterogeneity in agent valuations—specifically, the variance of the agent weights—affects average payments and percent of payers. e use multivariate Gaussian class-conditional distributions $P(x|y)=\N(\bm{\mu}_y,\sigma^2 I_d)$ with means $\bmu_y$ and isotropic covariance parameterized by $\sigma$, and set $P(y=1)=P(y=0)=1/2$.
We use $d=8$, fix $\|\bm{\mu}_1-\bm{\mu}_0\|=1$, the mean of the values $v_i$ is set to $1$.  

The payments are calculated with a constant case ($v_i =1$ for all $i$) and gamma-distributed cases with varying shape parameters $k \in \{16,4,1,0.5,0.33\}$, which induce increasing variance in valuations while holding the mean fixed at one. For each configuration, we run weighted SVM with hinge-loss. The results, shown in \Cref{fig:variance_bias_v}, indicate that higher valuation variance (i.e., lower k) leads to a slight increase in average payments, while the proportion of paying agents remains effectively unchanged across both variance levels and dataset sizes.

\begin{figure*}
  \centering
  \includegraphics[width=0.95\textwidth]{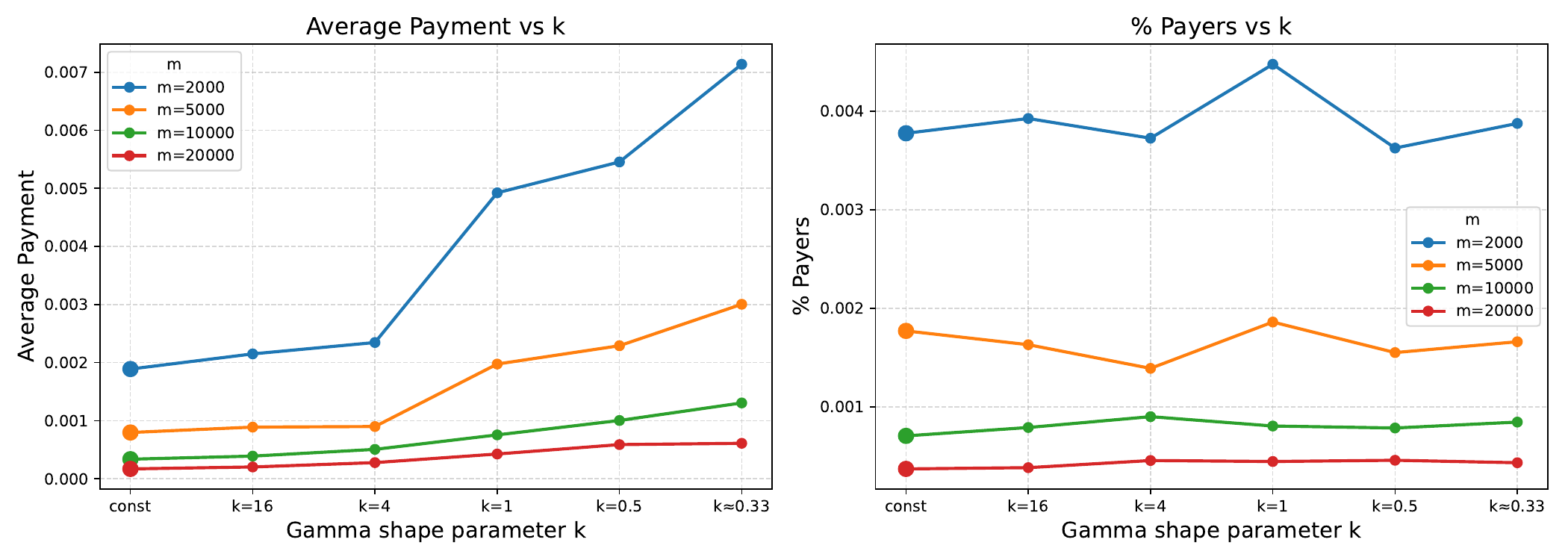}
  \caption{Payments and \% Payers under varying valuation variance and dataset sizes $m$.}
  \label{fig:variance_bias_v}
\end{figure*}

\newpage
\subsection{Number of Paying Agents: SVM vs. \knn{}}
\label{subsec:knn_vs_svm}
In \Cref{sec:analysis}, we derived an upper and lower bound on the number of payers by considering weighted SVM and weighted \knn{}. Here we provide empirical support to our theoretical results.  We use multivariate Gaussian class-conditional distributions $P(x|y)=\N(\bm{\mu}_y,\sigma^2 I_d)$ with means $\bmu_y$ and isotropic covariance parameterized by $\sigma$, and set $P(y=1)=P(y=0)=1/2$.
We use $d=2$, fix $\|\bm{\mu}_1-\bm{\mu}_0\|=0.5$, and for simplicity set $v_i=1$ for all $i$. Gaussian distributions have full support and therefore overlap, and thus the data distribution has regions with label noise. 

Results are presented in \cref{fig:svm_knn}. We observe an approximately linear increase in the number of payers when using \knn{}, while using SVM shows a bounded number of payers, in alignment with our theoretical findings.

\begin{figure}
    \centering
        \centering
        \includegraphics[scale=0.55]{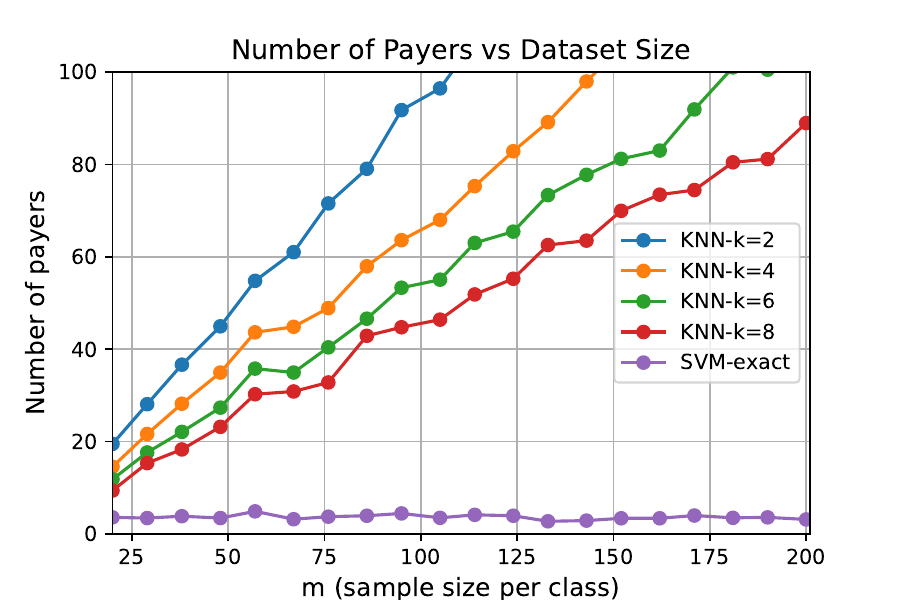} %
        \caption{Comparing number of payers in SVM vs. \knn{} (Appendix \ref{subsec:knn_vs_svm}).}
    \label{fig:svm_knn}
\end{figure}

\end{document}